\documentclass[11pt]{article}

\usepackage[a4paper, margin=1.2in]{geometry}
\usepackage{fourier} 
\usepackage{footmisc}
\usepackage{microtype} 
\usepackage{amssymb}
\usepackage{amsmath}
\usepackage{amsthm}
\usepackage{orcidlink}
\usepackage{academicons}
\usepackage{nicefrac}
\usepackage[utf8]{inputenc} 
\usepackage[T1]{fontenc}    
\usepackage{booktabs}       
\usepackage{wrapfig}
\usepackage{subcaption}
\usepackage{tikz}
\usetikzlibrary{arrows.meta}
\usepackage{algorithm}
\usepackage[noend]{algorithmic}

\usepackage[square]{natbib}

\usepackage[dvipsnames]{xcolor} 
\definecolor{orcidgreen}{HTML}{85A12C} 
\usepackage{url}

\usepackage{mathtools}
\newcommand{\defeq}{\coloneqq}
\newcommand{\eqdef}{\eqqcolon}

\newtheorem{theorem}{Theorem}[section]
\newtheorem{lemma}[theorem]{Lemma}

\newtheorem{corollary}[theorem]{Corollary}

\newtheorem{finding}[theorem]{Finding}

\usepackage{hyperref} 
\hypersetup{
    colorlinks=true,
    allcolors=orcidgreen,
    linkcolor=Blue,
    citecolor=purple,
}

\usepackage{cleveref}

\AddToHook{env/theorem/begin}{\crefalias{section}{theorem}}

\AddToHook{env/lemma/begin}{\crefalias{theorem}{lemma}}
\AddToHook{env/proposition/begin}{\crefalias{theorem}{proposition}}
\AddToHook{env/corollary/begin}{\crefalias{theorem}{corollary}}
\AddToHook{env/observation/begin}{\crefalias{theorem}{observation}}
\AddToHook{env/conjecture/begin}{\crefalias{theorem}{conjecture}}
\AddToHook{env/finding/begin}{\crefalias{theorem}{finding}}

\AddToHook{env/definition/begin}{\crefalias{section}{definition}}
\AddToHook{env/example/begin}{\crefalias{section}{example}}
\AddToHook{env/remark/begin}{\crefalias{section}{remark}}
\AddToHook{env/claim/begin}{\crefalias{section}{claim}}

\crefname{observation}{observation}{observations}
\Crefname{observation}{Observation}{Observations}
\crefname{conjecture}{conjecture}{conjectures}
\Crefname{conjecture}{Conjecture}{Conjectures}
\crefname{claim}{claim}{claims}
\Crefname{claim}{Claim}{Claims}
\crefname{finding}{finding}{findings}
\Crefname{finding}{Finding}{Findings}
\crefname{appendix}{appendix}{appendices}
\Crefname{appendix}{Appendix}{Appendices}

\usepackage{etoolbox}
\newif\ifproofsInAppendix
\proofsInAppendixtrue        

\newcommand{\appendixProofs}{}

\newcommand{\appendixproof}[3]{%
  \ifproofsInAppendix
    \gappto{\appendixProofs}{%
      \subsection{Proof of~{#1}~\ref{#2}}\label{proof:#2}%
      #3%
    }%
  \else
    \begin{proof}[Proof of {#1}~\ref{#2}]%
      #3%
    \end{proof}%
  \fi
}

\newcommand{\reals}{\mathbb{R}}
\newcommand{\pref}{\succ}

\newcommand{\out}{{\mathrm{out}\hbox{-}\mathrm{div}}}
\newcommand{\outdiv}{\out}
\newcommand{\ansd}{{\mathrm{ansd}}}
\newcommand{\swap}{{\mathrm{swap}}}

\newcommand{\calL}{{\mathcal{L}}}
\newcommand{\calD}{D}

\newcommand{\roott}{\rho}
\newcommand{\Var}{\mathrm{Var}}
\newcommand{\Cov}{\mathrm{Cov}}

\newcommand{\np}{{{\mathrm{NP}}}}

\newcommand{\SP}{{\mathrm{SP}}}
\newcommand{\SPG}{{\mathrm{SP/G}}}
\newcommand{\SPT}{{\mathrm{SP/T}}}

\newcommand{\SPstar}{{\mathrm{SP/star}}}
\newcommand{\SC}{{\mathrm{SC}}}
\newcommand{\GS}{{\mathrm{GS}}}
\newcommand{\GSbal}{{\mathrm{GS/bal}}}
\newcommand{\GScat}{{\mathrm{GS/cat}}}

\DeclareSymbolFont{bbold}{U}{bbold}{m}{n}
\DeclareSymbolFontAlphabet{\mathbbold}{bbold}

\usepackage{graphicx}
\graphicspath{{img/}}

\title{\huge \textbf{Outer Diversity of Condorcet Domains}}

\author{
\setlength{\tabcolsep}{15pt}  
\begin{tabular}{c c c}
  Piotr Faliszewski&
  Jan Jabrocki&
  Mateusz Słuszniak\\[2mm]
  Krzysztof Sornat&
  Stanisław Szufa&
  Tomasz Wąs
\end{tabular}
}

\date{}

\begin{document}

\maketitle

\begin{quote}
    \textbf{Abstract:}
  A Condorcet domain is a set of rankings over a given candidate set,
  such that every election that consists only of (an odd number of)
  votes from the domain has a transitive majority relation.  We study
  outer diversity of Condorcet domains, i.e., a measure that
  quantifies expected swap distance from a random vote to a closest
  one in the domain.  We numerically analyze outer diversity for
  maximal Condorcet domains with few candidates, and then we establish
  its asymptotic behavior for several special domains, mostly
  obtaining theoretical results.
\end{quote}
\begin{quote}
    \textbf{Code:} \url{https://github.com/Project-AGREE/Outer-Diversity-Condorcet/}
\end{quote}

\begingroup
\makeatletter
\renewcommand{\@makefnmark}{}
\makeatother
\footnotetext[0]{
  \hspace{-19pt}
  Authors' Information:
  \href{https://orcid.org/0000-0002-0332-4364}{Piotr Faliszewski \orcidlink{0000-0002-0332-4364}}, faliszew@agh.edu.pl, AGH University, Poland;
  {Jan Jabrocki}, jjabrocki@student.agh.edu.pl, AGH University, Poland;
  {Mateusz Słuszniak}, msluszniak1@gmail.com, AGH University, Poland; 
  \href{https://orcid.org/0000-0001-7450-4269}{Krzysztof~Sornat~\orcidlink{0000-0001-7450-4269}}, sornat@agh.edu.pl, AGH University, Poland;
  \href{https://orcid.org/0000-0001-6301-6227}{Stanisław Szufa \orcidlink{0000-0001-6301-6227}}, s.szufa@gmail.com, University of Geneva, Switzerland;
  \href{https://orcid.org/0000-0003-3492-6584}{Tomasz Wąs \orcidlink{0000-0003-3492-6584}}, tomasz.was@cs.ox.ac.uk, University of Oxford, United Kingdom.
}
\endgroup

\newcommand{\emphh}[1]{#1}

\section{Introduction}

We study outer diversity of Condorcet domains for ordinal elections,
focusing on both the values for specific domains and on algorithms for
computing them. Below we explain these notions intuitively and explain
why seeking outer diversity of Condorcet domains is interesting.

In the setting of ordinal preferences, each \emphh{vote} is a ranking
of candidates---from the most to the least appealing one---and a
\emphh{preference domain} is a set of votes that are allowed in a
considered election. For example, the \emphh{general domain} contains
all possible rankings, the \emphh{single-peaked} domain contains those
that arise when the candidates form a single-dimensional spectrum%
~\citep{bla:b:polsci:committees-elections},\footnote{E.g., this domains captures political preferences, when the candidates are ordered on the
left-to-right scale.} and the \emphh{group-separable} domain
captures preferences over candidates with hierarchically-organized
features~\citep{ina:j:group-separable,ina:j:simple-majority} (the
feature-based interpretation is presented, e.g., by
\citet{fal-kar-obr:c:group-separable}).
A \emphh{Condorcet domain} guarantees that the majority relation of an
election with an odd number of votes from such a domain
is transitive: If a majority of voters prefers candidate $a$ to candidate $b$,
and a (possibly different) majority of voters prefers $b$ to $c$, then there is also a majority of voters that prefers $a$ to $c$.
For example, single-peaked and group-separable domains are Condorcet
(see the book of \citet{pup-sli:b:condorcet-domains} and the surveys
of \citet{elk-lac-pet:t:restricted-domains-survey} and
\citet{mon:b:condorcet-survey} for a discussion of various such
domains).
Outer diversity, introduced by
\citet{fal-sor-szu-was:t:outer-diversity}, quantifies the expected
distance from a random vote to the closest one in the domain (see
\Cref{sec:prelim} for formal definitions) and, hence, measures how
well a given domain covers the general one.

Condorcet domains are among the most classic objects studied within
social choice theory. Indeed, the single-peaked domain was already
introduced by \citet{bla:b:polsci:committees-elections} and was
quickly generalized by \citet{arr:b:polsci:social-choice}.
\citet{sen:j:condorcet-never-conditions} provided an exact
characterization of Condorcet domains using so-called \emphh{never
conditions}: A domain is Condorcet if and only if for each triple of
candidates $a$, $b$, and $c$ it either satisfies a never-top, a
never-middle, or a never-bottom condition: The former requires that
there is candidate $x \in \{a,b,c\}$ such that no vote in the domain
ranks $x$ on top when restricted to $\{a,b,c\}$,
and the other two conditions are defined analogously.
For example, Black's and Arrow's single-peaked domains satisfy
never-bottom conditions, peak-pit domains
of~\citet{dan-kar-kos:j:condorcet-peak-pit} satisfy combinations of
never-top and never-bottom conditions (and, hence, peak-pit domains
include the single-peaked ones, but
also many others). On the other hand, group-separable domains satisfy
never-middle conditions only.
Generally, the never conditions satisfied by a given domain have a
strong influence on its properties, but the nature of this influence
is only partially understood.  We are interested in outer diversity of
Condorcet domains because it extends this understanding.

Diversity analysis is also interesting because it gives a new
perspective on the long-studied problem of finding large Condorcet
domains~\citep{fis:j:condorcet-fishburn,gal-rei:j:condorcet-big,dan-kar-kos:j:condorcet-peak-pit,kor-sli:j:condorcet-big,pup-sli:j:condorcet-max}.
Such domains are sought after because they are interesting
mathematically, but also because, intuitively, the larger a
domain is, the more freedom it gives the voters to express their views
(while ensuring transitivity of the majority relation).  Yet, as shown
by \citet{fal-sor-szu-was:t:outer-diversity}, having many votes does
not need to translate to high outer diversity value: The largest
Condorcet domain with $8$ candidates, computed by
\citet{lee-mar-rii:j:largest-con-domain8-condorcet8}, has 224 votes
but is notably less diverse than some group-separable ones, with only
128 votes.
As outer diversity quantifies voters' ability to express their views
more directly than domain cardinality, we propose to seek Condorcet
domains for which outer diversity is highest.

Further, diversity analysis of Condorcet domains is currently
receiving increased attention.  For example,
\citet{kar-mar-rii-zho:t:domain-diversity} and
\citet{amm-pup:j:domain-diversity} measured diversity using various
richness properties, such as the number of different tuples of
candidates that can be ranked among top positions,
and \citet{fal-sor-szu-was:c:structured-kemeny} introduced
\emphh{inner diversity}, which measures how tightly packed are the
votes in a given domain.  The advantage of outer diversity over these
measures is that it provides a single, easy to interpret number (inner
diversity provides vectors of numbers, which are not easy to compare,
and it is not clear which richness-based notions to use).  For a
general theory of diversity, we point to the work of
\citet{neh-pup:j:diversity}.

\paragraph{Contributions.}
First, we consider all Condorcet domains with six candidates, as
enumerated by \citet{ake-lee-lit-mar-rii:j:condorcet7}, and we analyze
their properties, connecting these properties to the never conditions
that they satisfy.
Based on these findings,
we pursue asymptotic behavior of outer diversity of several special
domains.
While doing so, we improve some of the results of
\citet{fal-sor-szu-was:t:outer-diversity}. For example, they computed
outer diversity of their domains for up to $20$ candidates, whereas we
consider up to $100$ candidates (albeit we do not consider all the
domains that they do), they showed a lower bound on outer diversity of
the group-separable/caterpillar domain and we give an exact formula
for its value, and they provided approximation, sampling-based
algorithms for computing outer diversity for their domains, whereas we
give some exact algorithms.

Additionally, we also consider domains \emphh{single-peaked on a
  tree}~\citep{dem:j:sp-trees,pet-yu-cha-elk:j:sp-on-tree}, which are
\emphh{weak Condorcet}.
Any election with an odd
number of votes from such a domain is guaranteed to have a Condorcet
winner, i.e., a candidate that is preferred to every other one by a
majority of voters. We find that, in the limit, such domains can be as
outer-diverse as the general one.
All omitted proofs can be found in \Cref{app:missing-proofs}.

\section{Preliminaries}\label{sec:prelim}

Given a set of candidates $C = \{c_1, \ldots, c_m\}$, a \emph{vote}
(or, a \emph{preference order}, or a \emph{preference ranking}) is a
total order over $C$, ranking the candidates from the most to the
least desirable one. For two candidates $a, b \in C$, we write
$v \colon a \pref b$ to indicate that  $a$ is preferred over $b$ in vote $v$ (i.e., $v$ ranks $a$ higher than $b$). 
An election $E = (C,V)$ is a pair, where $C$ is a set of candidates
and $V$ is a collection of voters, each with a single vote. For an
election $E = (C,V)$, we say that candidate $a \in C$ is
\emph{majority-preferred} to candidate $b \in C$, denoted
$a \mathrel{M_E} b$, if a strict majority of voters in $V$ prefer $a$
over $b$. We refer to $M_E$ as the \emph{majority relation} of $E$.
We are particularly interested in elections for which $M_E$ is an
order, and we call such elections \emph{transitive}.  A candidate
$c \in C$ is a Condorcet winner if for every other candidate $d \in C$
we have $c \mathrel{M_E} d$. Naturally, an election either has a unique
Condorcet winner or does not have one at all. Every transitive
election has a Condorcet winner, but there are also nontransitive ones
with Condorcet winners.

\subsection{Domains}\label{sec:domains}

A \emph{domain} over candidate set $C$ is any subset of votes over~$C$.
In particular, we write $\calL(C)$ to denote the \emph{general
  domain} that contains all votes over $C$.  By a domain's size, we mean
  the number of votes that it contains. Given a domain
$\calD$, we say that an election $E$ is a $\calD$-election if every
voter has a vote that belongs to $\calD$.  
We say that $\calD$ is a \emph{Condorcet domain} if every
$\calD$-election with an odd number of voters is transitive.  A
Condorcet domain $\calD$ is \emph{maximal} if for every vote $v$ (over
the same candidate set) not present in $\calD$, $\calD \cup \{v\}$ is
not a Condorcet domain.  $\calD$ is a \emph{weak Condorcet domain} if
every $\calD$-election with an odd number of voters has a Condorcet
winner. For a detailed survey on Condorcet domains, we point to the
book of \citet{pup-sli:b:condorcet-domains}, as well as to the surveys
of \citet{elk-lac-pet:t:restricted-domains-survey} and
\citet{mon:b:condorcet-survey}.

Condorcet domains are characterized using so-called \emph{never
  conditions}.  For a triple $\{a, b, c\} \subseteq C$ we say that $x \in \{a,b,c\}$
is \emph{never-top} (resp. \emph{never-middle}, \emph{never-bottom})
in $\calD$ with respect to $\{a, b, c\}$ if no vote in $\calD$ ranks
$x$ first (resp. second, last) among $\{a, b, c\}$.  A domain is a
Condorcet domain if and only if, for every triple of candidates, it
satisfies at least one of the
never conditions~\citep{sen:j:condorcet-never-conditions}.
Next we describe the domains and families of domains that we focus on.

\begin{description}
\item[Peak-Pit.] A domain $\calD$ is a peak-pit domain if it is a
  Condorcet domain that for every triple of candidates either
  satisfies a never-top or a never-bottom
  condition~\citep{dan-kar-kos:j:condorcet-peak-pit}.
\end{description}
While peak-pit domains are one of the major classes of Condorcet
domains, so far they only received very limited attention from
computational social choice.
For example, the complexity of deciding if a given set of never-top
and never-bottom conditions yields a nonempty domain is not known
(\citet{pup-sli:b:condorcet-domains} conjecture that it is
$\np$-complete).

A Fishburn domain is a peak-pit domain that for a given ordering of
candidates, $c_1 \rhd c_2 \rhd \cdots \rhd c_m$, requires that among
every three candidates $c_i,c_j,c_k$, with $i < j < k$, $c_j$ is
never-top, if $j$ is odd, and $c_j$ is never-bottom, if $j$ is
even~\citep{fis:j:condorcet-fishburn}.  Fishburn domains are
among the largest Condorcet domains
(albeit, in general, they are known to not be the largest ones~\citep{lee-mar-rii:j:largest-con-domain8-condorcet8}).

As opposed to peak-pit domains, the family of single-peaked domains
has received very detailed attention in (computational) social choice and
includes some of the best known
domains~\citep{elk-lac-pet:t:restricted-domains-survey}. However, not
all domains in this family are Condorcet or even weak Condorcet.
Formally, let~$G$ be an undirected, connected graph with vertex set
$C$.  Following \citet{fal-sor-szu-was:t:outer-diversity}, we refer to
such graphs as SP-graphs (or SP-trees, when $G$ is a tree). A vote
$v \in \calL(C)$ is single-peaked on $G$ if for each $t \in [m]$ the
$t$ top-ranked candidates in $v$ form a connected component within $G$.
A domain $\calD$ is \emph{single-peaked on a graph}, denoted $\SPG$, if
there is an SP-graph $G$ such that $D$ consists exactly of votes
single-peaked on $G$. The particular $G$ we have in mind will either
be clear from the context or irrelevant for the discussion. We
consider the following $\SPG$ domains:
\begin{description}
\item[Single Peaked (SP).] The classic \emph{single-peaked domain
    ($\SP$)}, introduced by \citet{bla:b:polsci:committees-elections},
  is defined over a path and, w.l.o.g., we use path
  $c_1 \rhd c_2 \rhd \cdots \rhd c_m$. This path is called the
  \emph{societal axis}.  SP is a Condorcet domain: Indeed, for every
  three candidates $c_i, c_j, c_k$, with $i < j < k$, we see that
  $c_j$ is never-bottom. $\SP$ is sometimes referred to as
  \emph{Black's single-peakedness}.  We are also interested in Arrow's
  single peaked domains (Arrow's SP domains, for short) that
  generalize Black's SP by requiring that every triple of candidates
  satisfies some never-bottom condition, but not restricting these
  conditions to be defined by the
  axis~\citep{arr:b:polsci:social-choice}.\footnote{The reader may
    wonder about domains defined by never-top conditions only. These
    are so-called \emph{single-dipped} domains and are obtained by
    reversing all the votes in the Arrow's single peaked ones. Because
    of this symmetry, they are not interesting in our study as,
    by necessity, their outer diversity is the same as that of their
    dual Arrow's $\SP$ domains.}
  Note that Arrow's SP domains are not $\SPG$ domains.
\item[SP/T.]
  If an SP-graph is a tree, then we refer to it as an $\SP$-tree and
  to the domain as \emph{single-peaked on a tree ($\SPT$)}. These
  domains were introduced by \citet{dem:j:sp-trees}.  A specific
  $\SPT$ that we will consider is based on a star graph with $m$
  leaves---we call such domain $\SPstar_m$.  $\SPT$ are weak Condorcet
  domains, but not Condorcet ones. Algorithmic study of $\SPT$ domains
  was provided by \citet{pet-yu-cha-elk:j:sp-on-tree}.
\end{description}
If $G$ is an SP-graph with a cycle, then the $\SPG$ domain that
it yields is not even weak Condorcet.
The family of single-crossing domains was introduced by
\citet{mir:j:single-crossing} and \citet{rob:j:tax}:
\begin{description}
\item[Single Crossing (SC).] A domain $\calD$ is
  \emph{single-crossing} if it is possible to order its votes as
  $v_1 \rhd v_2 \rhd \cdots \rhd v_n$, so that for each two candidates
  $c_i, c_j \in C$, votes where $c_i$ is preferred to $c_j$ form
  either a prefix or a suffix of this order. SC domains are Condorcet
  and, indeed, they form a subclass of the peak-pit domains.
\end{description}

We note that there are many nonisomorphic single-crossing
domains~\citep{fal-sko-sli-szu-tal:j:isomorphism}. In particular,
\citet{szu-boe-bre-fal-nie-sko-sli-tal:j:map} suggested an algorithm
for sampling such domains, albeit their algorithm does not ensure
uniform distribution.  Whenever we refer to single-crossing domains,
either it will be clear which specific $\SC$ domain we mean, or it
will not be relevant.

Next we consider group-separable domains. While these domains were
introduced by \citet{ina:j:group-separable,ina:j:simple-majority}, we
use their definition based on trees, provided by
\citet{kar:j:group-separable}.  Let $T$ be a rooted, ordered, binary
tree
where each leaf is labeled with a unique candidate from $C$. We refer to
such trees as \emph{GS-trees}. A \emph{frontier} of a GS-tree $T$ is the ranking
of candidates obtained by reading them off from the leaves, from left
to right. We say that a vote $v \in \calL(C)$ is compatible with $T$
if $v$ is a frontier of some tree obtained from $T$ by, possibly,
reversing the order of some of its nodes' children:
\begin{description}
\item[Group-Separable (GS).] A domain $\calD$ is
  \emph{group-separable} ($\GS$) if there is a GS-tree $T$ such that
  $\calD$ consists exactly of the votes compatible with $T$. $\GS$
  domains are Condorcet and satisfy never-middle condition, but not
  all domains defined by never-middle conditions are $\GS$.
\item[GS/cat and GS/bal.] A GS-tree is a \emph{caterpillar} if every
  internal node has at most one nonleaf child. A GS/caterpillar domain
  ($\GScat$) is a $\GS$ domain defined by a caterpillar
  tree. Similarly, a GS-tree is balanced if for each two leaves their
  distance from the root differs at most by one, and GS/balanced
  ($\GSbal$) is a $\GS$ domain defined by a balanced tree. Note that
  if we ignore leaf labels, then for a given number of candidates,
  there is a unique caterpillar tree, but possibly many balanced ones,
  leading to nonisomorphic domains.
\end{description}
It is well-known that for a size-$m$ candidate set, Black's $\SP$, as
well as all $\GS$ domains, contain exactly $2^{m-1}$ votes. The same
holds for Arrow's $\SP$ domains~\citep{sli:j:arrow-sp}.

\subsection{Outer Diversity}\label{sec:outdiv}
Given two votes $u, v$ over candidate set $C$, their \emph{swap distance},
known also as \emph{Kendall tau distance},
denoted $\swap(u,v)$, is 
the smallest number of swaps of adjacent candidates needed to transform 
one into the other.
The \emph{average normalized swap distance} of a domain $\calD$ is defined~as:
\[ \ansd(\calD) = \frac{\sum_{u \in
    \mathcal{L}(C)} \swap(\calD,u)}{m! \cdot \binom{m}{2}},\]
i.e., the
expected swap distance between a vote selected uniformly at random
from $\mathcal{L}(C)$ and its nearest member of $\calD$, normalized by
the maximum possible swap distance. The \emph{outer diversity} of a
domain $D \subseteq \mathcal{L}(C)$ is
$\out(\calD) = 1 - 2\cdot\ansd(\calD)$.  Outer diversity assumes
values between $0$ and $1$, where $0$ is the lowest possible
diversity, achieved, e.g., by each singleton domain $\{v\}$, and $1$
is the maximum diversity, achieved by $\calL(C)$.  Outer diversity was
introduced by \citet{fal-sor-szu-was:t:outer-diversity}.  In
particular, they gave a general brute-force algorithm for computing
its value for domains specified by listing their votes (running in
time $O(m\cdot m!)$, where $m$ is the number of candidates; we use
this algorithm as a default), as well as sampling-based
algorithms. For the latter, the idea is to sample votes from
$\calL(C)$ and compute their closest members from the considered
domain: They provided polynomial-time algorithms for computing a
closest $\SP$, $\SC$, and $\GS$ vote for a given one (for $\SP$, see
also the work of \citet{fal-hem-hem:j:nearly-sp}), and a
non-polynomial algorithm for $\SPT$.  They computed outer diversity
values for a few special domains for up to $20$ candidates, gave lower
bound of $\nicefrac{1}{2}$ for $\GScat$, and obtained a few other
results less directly relevant to our work.

\section{Outer Diversity of Domains With a Few Candidates}
\label{sec:few}

We begin our analysis with an extensive study of Condorcet domains
with a few candidates. In this case it is possible to enumerate all
nonisomorphic maximal Condorcet domains and we use such an enumeration
for $6$ candidates, provided by
\citet{ake-lee-lit-mar-rii:j:condorcet7} (they generated domains with
up to 7 candidates, and we show results for a sample of these
domains in \Cref{app:seven-candidates}; the overall conclusions are
the same.)
Additionally, we also discuss single-crossing domains (which typically
are not maximal Condorcet domains) and $\SPT$ domains.\footnote{We ran
  the experiments presented in the main body of the paper on personal
  computers. The computatoin required a few dozen hours.}

\subsection{Maximal Condorcet Domains}
As established by \citet{ake-lee-lit-mar-rii:j:condorcet7}, for the
case of $6$ candidates there are $256\,895$ nonisomorphic maximal
Condorcet domains. Among them there are $9\,939$ peak-pit domains, of
which $40$ are Arrow's SP domains---including the classic Black's
single-peaked domain---and another $40$ are single-dipped
domains. There are also $12$ domains defined by never-middle
conditions only, of which~$6$ are group-separable (two of which have
the same outer diversity).  In~\Cref{fig:all_condorcet} we plot all
these domains as dots, where the $x$ coordinate gives the number of
votes in the domain and the $y$ coordinate gives its outer diversity.
Different families of domains are depicted with different colors. As
there are $256\,895$ domains, many points are occluded by others. For
example, many peak-pit domains are drawn ``on top of'' other domains.
We proceed by analyzing the results for particular families of
domains.

\begin{figure}[t]
    \centering
    \begin{subfigure}[t]{0.485\textwidth}
        \centering
        \includegraphics[height=5cm]{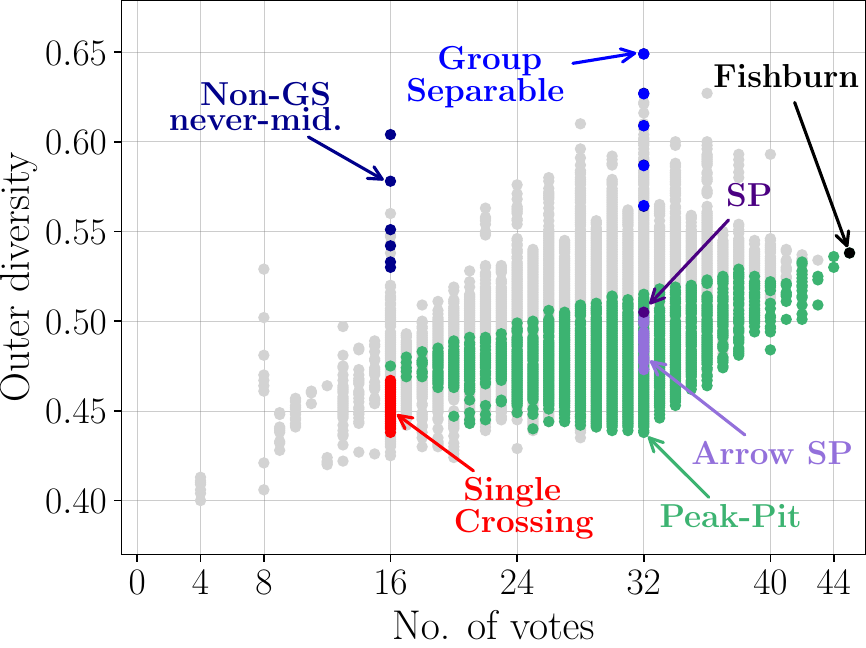}
        \caption{\label{fig:all_condorcet}Every maximal nonisomorphic
          Condorcet domain with six candidates is placed as a dot, with
          its $x$ coordinate giving the domain's size,
          and its $y$ coordinate giving its outer diversity. If a
          domain belongs to a certain class of domains, such as peak-pit,
          then its dot is colored accordingly; otherwise it is
          gray. We additionally include a sample of 200
          single-crossing domains.}
    \end{subfigure}
    \hfill
    \begin{subfigure}[t]{0.485\textwidth}
        \centering
        \includegraphics[height=5cm]{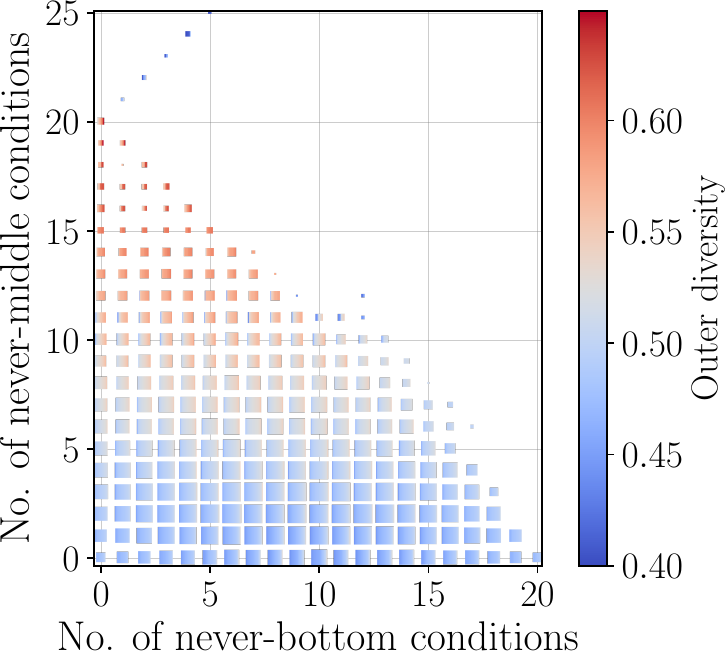}
        \caption{\label{fig:botmid}Outer diversity of maximal
          nonisomorphic Condorcet domains depending on the number of
          satisfied never-bottom ($x$-axis) and never-middle
          ($y$-axis) conditions. The size of each square reflects the
          number of domains satisfying given numbers of conditions, and
          the horizontal color gradient presents the
          distribution of outer diversity values for these domains.}
    \end{subfigure}
    \caption{Analysis of all of the maximal Condorcet domains for six candidates.}
    \label{fig:fullcond6}
\end{figure}
  
We note that maximal Condorcet domains with even numbers of votes tend to
achieve more varied and larger outer diversity values than those with
odd numbers.
Next, for a given number of votes, peak-pit domains tend to assume lower
outer diversity values than domains that include never-middle
conditions. Indeed, domains defined entirely by never-middle
conditions have some of the highest outer-diversity values overall
(including the highest one, assumed by $\GScat$).  We explore this in
\Cref{fig:botmid}. Specifically, for each pair $b, m$ of integers, we
let $\calD_6(b,m)$ be the subset of our domains that satisfy exactly
$b$ never-bottom and $m$ never-middle conditions (they also typically
satisfy $20-b-m$ never-top conditions, but a small number of domains
satisfies more than $20$ never conditions in total). For each
$\calD_6(b,m)$, we plot a square centered over point $(b,m)$, whose
side's length is proportional to $1+\log|\calD_6(b,m)|$. The color of
the square gradually shifts from left to right, proportionally to the
outer diversities of the domains in $\calD_6(b,m)$, sorted in the
non-decreasing order. Altogether, it is visible that the number of
satisfied never-middle conditions is strongly correlated with the
domains' outer-diversity (with PCC = $0.714$) and negatively correlated
with the number of domains (with PCC = $-0.565$).
One possible explanation of lower diversity of peak-pit domains is
that---at least for up to $7$ candidates---these are exactly the
connected domains, i.e., domains where for each two of their votes $u$
and $v$, one can transform $u$ into $v$ by swapping adjacent
candidates only, without ever leaving the
domain~\citep{ake-lee-lit-mar-rii:j:condorcet7}. This means that, as
compared to other domains of the same size, they are confined to a
smaller area of the space. On the other hand,
\citet{fal-sor-szu-was:t:outer-diversity} have shown that $\GScat$ is
very uniformly spread out through the space of all possible votes and,
in this sense, is strongly disconnected.

\begin{finding}
  The outer diversity of maximal nonisomorphic Condorcet domains with
  six candidates is strongly correlated with the number of satisfied
  never-middle conditions (PCC = $0.714$).
\end{finding}

Interestingly, for many pairs $(b,m)$ we observe outlier domains in
$\calD_6(b,m)$ with very low outer diversity values (see, e.g.,
vertical blue stripe on the left part of squares for $m \in \{10,11\}$
in \Cref{fig:botmid}). Typically these domains consist of very few
votes. However, while low cardinality must imply low outer diversity
(see the work of \citet{fal-sor-szu-was:t:outer-diversity} and the
next section), generally the correlation between cardinality and outer
diversity of a domain is fairly weak (PCC=$0.345$). For example, the
Fishburn domain, which is the largest maximal Condorcet domains for
six candidates (with 45 votes), has outer diversity of $0.538$ (the
average over all maximal Condorcet domains for six candidates is
$0.495$, with standard deviation $0.021$, and $\GScat$ has
outer diversity $0.649$).

Let us now consider single-peaked domains.  For the case of $6$
candidates, Black's SP domain is the most diverse Arrow's SP domain
(the same holds for $m=7$). It would be very interesting to prove this
formally, or to verify computationally for larger numbers of
candidates, but currently it is not clear how to efficiently generate
Arrow's SP domains.
\begin{finding}
  Black's SP has the highest outer diversity among Arrow's SP domains
  for 6 candidates.
\end{finding}

\newenvironment{gstree}{
  \begin{tikzpicture}[
    trim left,
    node_style/.style={circle, fill=black, inner sep=1pt}, 
    edge_style/.style={thin},
    scale = 0.25]\path (0,0);}{\end{tikzpicture}
}

\newcommand{\gstrees}{%
  \begin{tabular}{l}
    \begin{gstree}
        \node[node_style] (A) at (1,0) {};
        \node[node_style] (B) at (2,0) {};
        \node[node_style] (C) at (3,0) {};
        \node[node_style] (D) at (1.2,0.8) {};
        \node[node_style] (E) at (2,1) {};
        \node[node_style] (F) at (2.8,0.8) {};
        \draw[edge_style] (A) -- (B);
        \draw[edge_style] (B) -- (C);
        \draw[edge_style] (B) -- (D);
        \draw[edge_style] (B) -- (E);
        \draw[edge_style] (B) -- (F);
        \node at (9,0) {\small $240$};
        \node at (14,0) {\small $0.778$};
        \node at (9,2) {\small $n$};
        \node at (14,2) {\small $\mathrm{\outdiv}$};
      \end{gstree}
    \\
    \begin{gstree}
        \node[node_style] (A) at (1-0.5,0) {};
        \node[node_style] (B) at (2-0.5,0) {};
        \node[node_style] (C) at (3-0.5,0) {};
        \node[node_style] (D) at (4-0.5,0) {};
        \node[node_style] (E) at (3.8-0.5,-0.8) {};
        \node[node_style] (F) at (3.8-0.5,0.8) {};
        \draw[edge_style] (A) -- (B);
        \draw[edge_style] (B) -- (C);
        \draw[edge_style] (C) -- (D);
        \draw[edge_style] (C) -- (E);
        \draw[edge_style] (C) -- (F);
        \node at (9,0) {\small $132$};
        \node at (14,0) {\small $0.671$};
    \end{gstree}       
    \\
    \begin{gstree}
        \node[node_style] (A) at (0.5,-0.5) {};
        \node[node_style] (B) at (1.5,0) {};
        \node[node_style] (C) at (2.5,0) {};
        \node[node_style] (D) at (3.5,0.5) {};
        \node[node_style] (E) at (0.5,0.5) {};
        \node[node_style] (F) at (3.5,-0.5) {};
        \draw[edge_style] (A) -- (B);
        \draw[edge_style] (B) -- (C);
        \draw[edge_style] (C) -- (D);
        \draw[edge_style] (B) -- (E);
        \draw[edge_style] (C) -- (F);
        \node at (9,0) {\small $112$};
        \node at (14,0) {\small $0.644$};
    \end{gstree}\\
    \begin{gstree}
        \node[node_style] (A) at (0,0) {};
        \node[node_style] (B) at (1,0) {};
        \node[node_style] (C) at (2,0) {};
        \node[node_style] (D) at (3,0) {};
        \node[node_style] (E) at (4,0) {};
        \node[node_style] (F) at (2,1) {};
        \draw[edge_style] (A) -- (B);
        \draw[edge_style] (B) -- (C);
        \draw[edge_style] (C) -- (D);
        \draw[edge_style] (D) -- (E);
        \draw[edge_style] (C) -- (F);
        \node at (9,0) {\small \phantom{$0$}$72$};
        \node at (14,0) {\small $0.582$};
      \end{gstree}       
    \\
    \begin{gstree}
        \node[node_style] (A) at (0,0) {};
        \node[node_style] (B) at (1,0) {};
        \node[node_style] (C) at (2,0) {};
        \node[node_style] (D) at (3,0) {};
        \node[node_style] (E) at (4,0.5) {};
        \node[node_style] (F) at (4,-0.5) {};
        \draw[edge_style] (A) -- (B);
        \draw[edge_style] (B) -- (C);
        \draw[edge_style] (C) -- (D);
        \draw[edge_style] (D) -- (E);
        \draw[edge_style] (D) -- (F);
        \node at (9,0) {\small \phantom{$0$}$60$};
        \node at (14,0) {\small $0.573$};
    \end{gstree}\\
    \begin{gstree}
        \node[node_style] (A) at (0-0.5,0) {};
        \node[node_style] (B) at (1-0.5,0) {};
        \node[node_style] (C) at (2-0.5,0) {};
        \node[node_style] (D) at (3-0.5,0) {};
        \node[node_style] (E) at (4-0.5,0) {};
        \node[node_style] (F) at (5-0.5,0) {};
        \draw[edge_style] (A) -- (B);
        \draw[edge_style] (B) -- (C);
        \draw[edge_style] (C) -- (D);
        \draw[edge_style] (D) -- (E);
        \draw[edge_style] (E) -- (F);
        \node at (9,0) {\small \phantom{$0$}$32$};
        \node at (14,0) {\small $0.505$};
    \end{gstree}       
    \end{tabular}
}

\begin{figure}[t]
    \centering
    \begin{subfigure}[b]{0.32\linewidth}
      \centering
      \begin{tikzpicture}
        \node[anchor=south west, inner sep=0] (img2)        
        {\includegraphics[width=\linewidth]{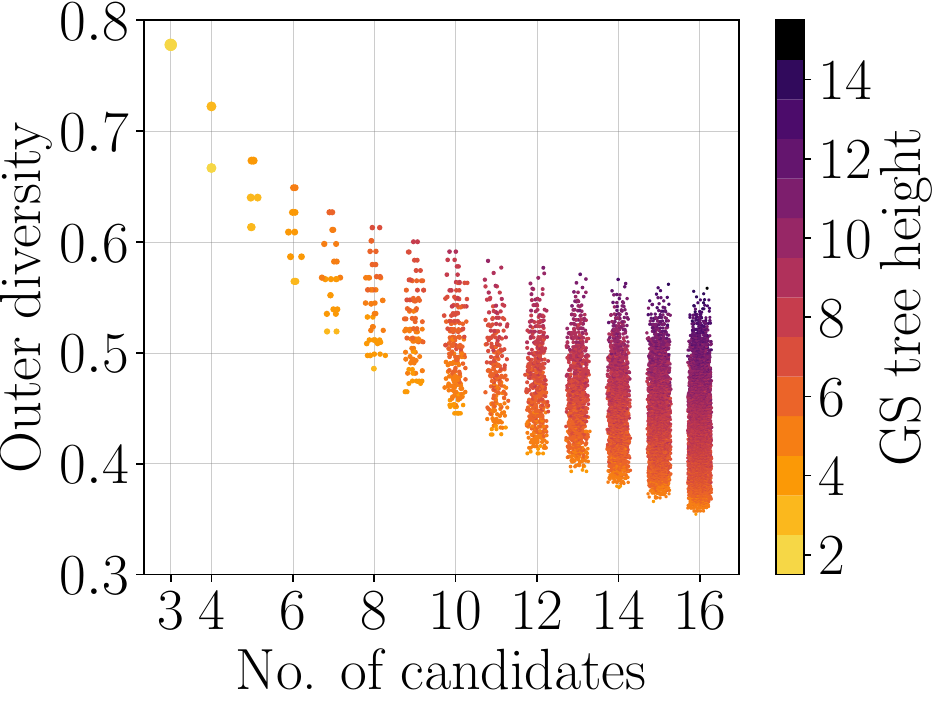}};
      \end{tikzpicture}
      \caption{Outer diversity of $\GS$ domains. Colors give the
        heights of the trees that generate the domains. $\GScat$ and
        $\GSbal$ have highest and lowest outer diversity,
        respectively.}
      \label{fig:gs-cands}
    \end{subfigure}
          \hfill
    \begin{subfigure}[b]{0.32\linewidth}
      \centering
      {\includegraphics[width=\linewidth]{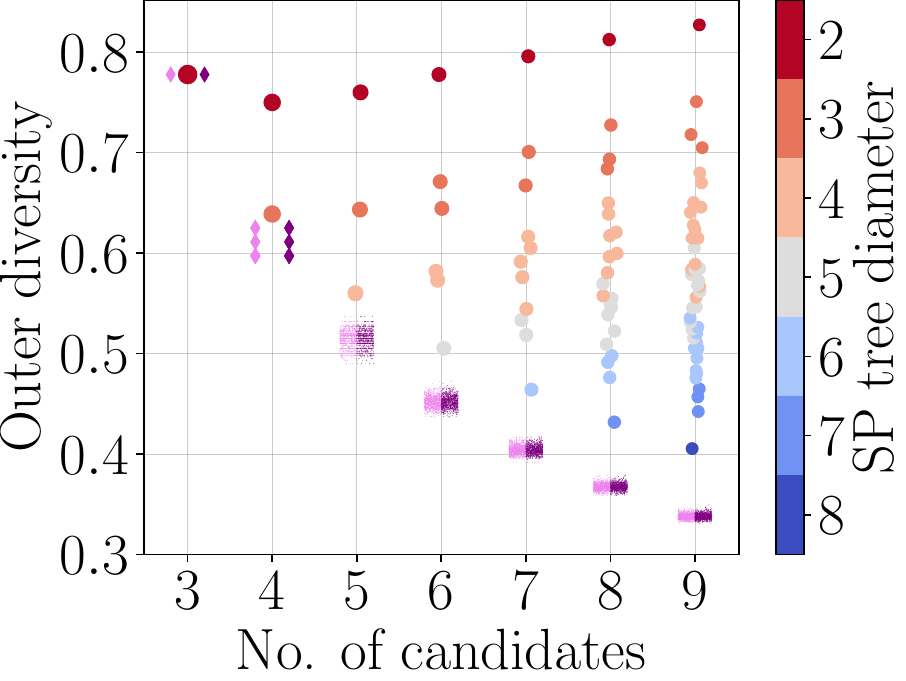}}
      \caption{Outer diversity of $\SC$ domains (light and dark purple dots; see text) and $\SPT$ domains (blue-to-red dots; colors
        give diameters of the trees generating the domains).}
      \label{fig:sc-spt-cands}
    \end{subfigure}
    \hfill
    \begin{subfigure}[b]{0.32\linewidth}
      \centering      
      \fbox{\scalebox{0.8}{\centering\quad\gstrees}}
      \caption{Nonisomorphic trees for $\SPT$ domains with $6$
        candidates. For each tree we  provide the number of votes
        ($n$) in the domain it generates and its outer diversity ($\outdiv$).}
      \label{fig:spt6}
    \end{subfigure}
    \caption{Outer diversity of $\GS$, $\SPT$ and $\SC$ domains, depending on the number of candidates.}
    \label{fig:three_images}
      \begin{tikzpicture}[overlay,yshift=3.22cm]
        \draw (-0.24,4) -- (0.1,4) -- (0.1,2.4) -- (-0.24,2.4) -- (-0.24,4);
        \draw[-{Latex[scale=1., fill=black]}] (0.1,3.86) -- (2.9,3.86);
      \end{tikzpicture}
  \end{figure}

  Next, let us consider group-separable domains.  As they achieve some
  of the largest outer diversity values among maximal Condorcet
  domains with six candidates, in \Cref{fig:gs-cands} we also explore
  how their outer diversity depends on the number of candidates and
  the heights of the trees used to generate them. Specifically, for
  each $m \in \{3, \ldots, 16\}$ we generated all nonisomorphic
  GS-trees with $m$ nodes and computed the outer diversities of their
  corresponding domains.  In \Cref{fig:gs-cands} we depict them as
  orange/purple dots (their $x$ coordinates correspond to the number of
  candidates, perturbed a bit for clarity, and their $y$ coordinates
  give their outer diversity values).  We see that the taller a given
  tree, the larger is the outer diversity value of a given domain.

\begin{finding}\label{fin:gs}
  For each number of candidates in $\{3, \ldots, 16\}$, $\GScat$ has
  the highest outer diversity and $\GSbal$ has the lowest one among
  $\GS$ domains. There is a strong correlation between the height of
  the domain's tree and its outer diversity (for each
  $m \in \{4, \ldots, 16\}$, PCC $\geq 0.896$; for $m=3$ there is only
  one domain).
\end{finding}

It is also interesting to consider the domains that satisfy
never-middle conditions only, but that are not group-separable. As we see in
\Cref{fig:all_condorcet}, for six candidates they have fewer votes than
the GS ones and their outer diversity values are a bit lower. In
\Cref{app:microscope} we have a closer look at them and find that they
are structurally quite similar to the GS ones.

\subsection{Single-Crossing Domains}

A single-crossing domain is maximal if it is not possible to extend it
without making it non-single-crossing.\footnote{Note that being
  a \emph{maximal single-crossing domain} typically does not mean
  being a maximal Condorcet domain.}
It is well-known that each
maximal single-crossing domain consists of $1+\binom{m}{2}$
votes,
which can be arranged in a line, where each two consecutive votes
differ by ordering of two adjacent candidates.
For this size, single-crossing domains have fairly low outer
diversity. However, as opposed to many other classes of peak-pit
domains, we can sample single-crossing domains uniformly at random
and, hence, evaluate how their outer diversity changes for somewhat
larger candidate sets. Unfortunately, our sampling algorithm runs in
exponential time.

\begin{theorem}\label{thm:sc-uniform}
  There exists a sampling procedure that, after $O(m \cdot m!)$ preprocessing,
  samples a single-crossing domain over $m$ candidates uniformly at random in $O(m^3)$ time.
\end{theorem}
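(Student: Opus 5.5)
We have to pin down what "uniformly at random" refers to. A maximal single-crossing domain is a sequence $v_0, v_1, \ldots, v_{\binom{m}{2}}$ in which each step swaps one adjacent pair that has not been swapped yet. Hence $v_{\binom{m}{2}}$ is the reverse of $v_0$, and the sequence is a maximal chain in the weak Bruhat order. Up to renaming candidates we may fix $v_0 = c_1 \pref c_2 \pref \cdots \pref c_m$. Then maximal single-crossing domains correspond to reduced words of the longest permutation $w_0 \in S_m$, that is, to maximal chains from the identity to $w_0$ in the weak order. (The domain is the unordered set, but each set arises from exactly two chains, the sequence and its reversal, which is constant, so uniformity is preserved.) The goal is therefore to sample a uniformly random maximal chain in the weak order on $S_m$.

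The plan is the standard count-then-sample dynamic program over the poset. In the preprocessing we enumerate all $m!$ permutations. For each permutation $\pi$ we compute $N(\pi)$, the number of maximal chains from $\pi$ up to $w_0$. We set $N(w_0)=1$ and use
\[ N(\pi) = \sum_{i \,:\, \pi(i) < \pi(i+1)} N(\pi \cdot s_i), \]
where $s_i$ swaps positions $i$ and $i+1$. The sum runs over the at most $m-1$ ascents of $\pi$, and swapping an ascent raises the number of inversions by one. We process permutations in decreasing order of inversion count, for example by bucketing on the inversion number. To get $O(1)$ lookups we index permutations by their Lehmer code. Applying $s_i$ to an ascent changes only one Lehmer digit, so the neighbour's index can be updated in $O(1)$ from the current index. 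Each permutation then costs $O(m)$, and the whole preprocessing costs $O(m \cdot m!)$.

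For the sampling phase we start at the identity and repeat $\binom{m}{2}$ times. At the current $\pi$ we choose an ascent $i$ with probability $N(\pi s_i)/N(\pi)$ and move to $\pi s_i$. Multiplying the step probabilities telescopes to $1/N(\mathrm{id})$ for every chain, so the chain, and hence the domain, is uniform. Each step scans $O(m)$ ascents and does $O(m)$ work to update the Lehmer index and choose the move. Writing out the vote $v_t$ explicitly takes another $O(m)$. That gives $O(m)$ per step, $O(m^3)$ over all steps, and a total output size of $(1+\binom{m}{2})\cdot m = O(m^3)$.

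The main obstacle is the arithmetic cost. The counts $N(\pi)$ can be huge, around $\binom{m}{2}!$ divided by a product of hook lengths (Stanley's formula), so they have $\Theta(m^2 \log m)$ bits. I would handle this by working in a model with unit-cost arithmetic on these integers, as the $O(m\cdot m!)$ bound implicitly assumes. Alternatively, each step can pick a uniform random integer $r < N(\pi)$ once and select the child by prefix sums. The two points to check carefully are this arithmetic model and the $O(1)$ neighbour indexing.
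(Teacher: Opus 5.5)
Your proposal is correct and follows essentially the same route as the paper. Both build the DAG of votes reachable from $c_1 \succ \cdots \succ c_m$ by adjacent swaps that increase the inversion count (the weak order), count the maximal chains to the sink bottom-up in $O(m \cdot m!)$, and then walk from the top, choosing each of the at most $m-1$ successors with probability proportional to its count, for $O(m^3)$ in total. Your extra details, namely indexing votes by their inversion codes for $O(1)$ neighbour lookups and noting that the counts have $\Theta(m^2\log m)$ bits so the stated bounds presuppose unit-cost arithmetic, are left implicit in the paper's proof and are consistent with it.
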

\appendixproof{Theorem}{thm:sc-uniform}{ Consider candidate set
  $C = \{c_1, \ldots, c_{m}\}$.  First, we construct a swap graph,
  where each node is associated with a vote: Starting from vote
  $u \colon c_1 \pref c_2 \pref \cdots \pref c_m$, we generate all
  votes that can be obtained by swapping pairs of adjacent candidates
  $c_j, c_{j+1}$, where $c_j$ is ranked before $c_{j+1}$.  This
  produces a directed acyclic graph (DAG) whose nodes are votes and
  whose edges correspond to such swaps (directed toward the vote with
  larger swap distance from $u$).  This DAG has exactly $m!$ nodes and
  $O(m \cdot m!)$ edges.  A single-crossing domain corresponds to a
  maximal chain in this DAG.

  For each node $v$ in this DAG, we compute the number of maximal
  chains starting at $v$, denoted $f(v)$.  If $v$ is the (unique)
  sink, then $f(v) = 1$.  Otherwise, $f(v) = \sum_{w} f(w)$, where the
  sum ranges over the direct successors of $v$.  Since each node has
  at most $m - 1$ successors and there are $m!$ nodes, this phase runs
  in $O(m \cdot m!)$ time.

  Finally, to sample a domain, we start from the top node, vote $u$,
  and at each step select a direct successor $w$ with probability
  $f(w) / f(v)$, then continue from $w$ until the sink is reached.
  The resulting path has length $\binom{m}{2}$, and each step
  considers at most $m - 1$ successors, so sampling runs in $O(m^3)$
  time.  Sampling proportionally to the counts ensures that every
  maximal chain, i.e., single-crossing domain, is reached with equal
  probability.
}

Using this sampler, for each $m \in \{3, \dots, 9\}$ we sampled
$1000$ single-crossing domains uniformly at random, without
repetition. Additionally, we sampled $1000$ such domains using the
polynomial-time algorithm of
\citet{szu-boe-bre-fal-nie-sko-sli-tal:j:map} (which does not achieve
uniform distribution). For $m \in \{3,4,5\}$ there are fewer than
$1000$ maximal single-crossing domains and for this case we generated
all such domains.
We restricted attention to up to $9$ candidates due to speed of exact
computation of outer diversity.
For each of the domains we computed its outer diversity.  We show the
results in~\Cref{fig:sc-spt-cands}: Each light purple diamond
corresponds to a domain sampled using our algorithm and each dark
purple diamond corresponds to one sampled using the algorithm of
\citet{szu-boe-bre-fal-nie-sko-sli-tal:j:map} (the diamonds for $3$
and $4$ candidates are much larger as there are very few maximal
single-crossing domains in this case, while remaining ones are very
small, looking rather like small dots than diamonds, but there are
very many of them).  We make two observations: Outer diversity of
single-crossing domains seems to quickly diminish, and outer diversity
of domains sampled using both algorithms seems to follow essentially
the same distribution (Kolmogorov-Smirnov test gives $D \approx 0$ and
$p$-value $\approx 1$).  This is reassuring as for larger domain sizes using
the uniform algorithm is infeasible.
\begin{finding}
  For up to $9$ candidates, the distributions of outer diversity of
  maximal single-crossing domains sampled uniformly at random and
  using the algorithm of \citet{szu-boe-bre-fal-nie-sko-sli-tal:j:map}
  are near-identical.
\end{finding}

\subsection{Single-Peakedness on a Tree}

We conclude this section by considering the family of $\SPT$ domains,
known to be weak Condorcet (the only $\SPT$ Condorcet domain is
Black's $\SP$). In \Cref{fig:sc-spt-cands} we show outer diversity
values of these domains for all nonisomorphic SP trees with
$m \in \{3, \ldots, 9\}$ candidates. Not surprisingly, the smaller is
the diameter of the underlying tree, the higher is the outer diversity
of the domain it generates. Indeed, the highest outer diversity is
achieved by a star graph (where we have one central node and each
other node is connected to it).

\begin{finding}
  For $m \in \{3, \ldots, 9\}$, outer diversities of $\SPT$ domains
  are negatively correlated with the diameter of their underlying trees
  (PCC monotonically goes from $-1$ for $4$ candidates to
  $-0.931$ for $9$ candidates). $\SPT$ domain based on a star graph has
  the highest outer diversity.
\end{finding}

In \Cref{fig:spt6} we plot the six nonisomorphic SP trees for the case
of six candidates, as well as provide the cardinality of the domains
they generate and their outer diversity values. It is interesting how,
except for Black's SP (depicted on the bottom), all these domains have
significantly more votes than even the largest Condorcet domains.
Hence the nature of weak Condorcet domains is quite different from
that of Condorcet ones and understanding them better is an interesting
research challenge.

\section{Asymptotic Behavior of Outer Diversity}

\setlength{\columnsep}{16pt}
\begin{wrapfigure}{r}{0.5\textwidth}
    \vspace{-0.5cm}
    \centering
    \includegraphics[width=0.5\textwidth]{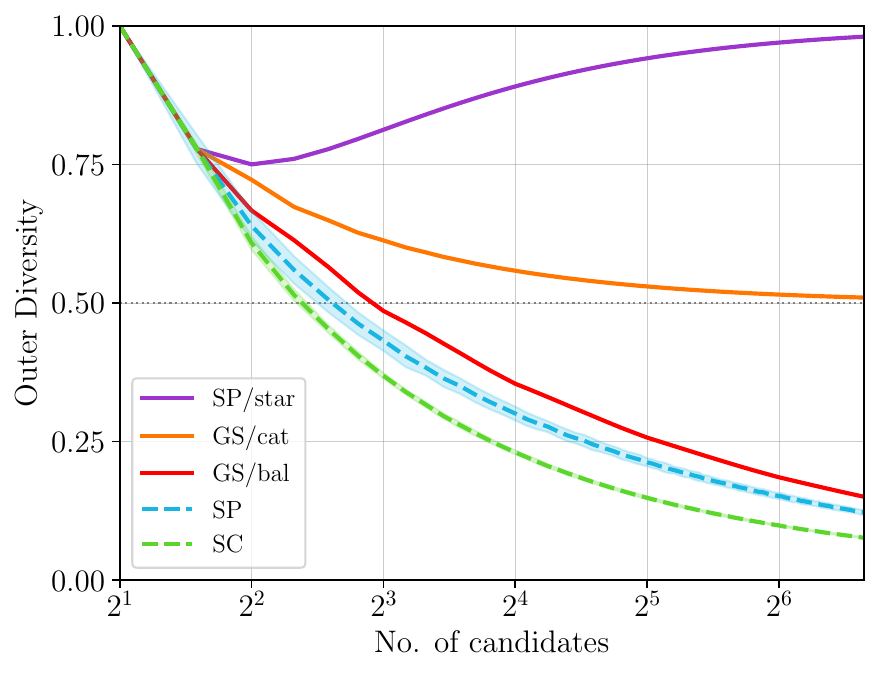}
    \caption{Outer diversity of $\SP$, $\SPstar$, $\GScat$, $\GSbal$, and $\SC$ for
      large numbers of candidates. Note that the $x$ axis is scaled logarithmically.
    Shaded areas show two standard deviations for sampling-based computations.}
    \label{fig:swap}
    \vspace{-0.2cm}
\end{wrapfigure}

\label{sec:asymptotic}
In \Cref{fig:swap}, we plot outer diversity values for $\SC$, $\SP$,
$\SPstar$, $\GScat$, and $\GSbal$, as functions of the number $m$ of
candidates, for $m$ up to $100$; our main goal is to analyze these
values asymptotically. We selected these domains because each of them
is in some way extreme. Indeed, in \Cref{sec:few} we saw that for six
candidates, $\SPstar$, $\GScat$ and $\SP$ have the highest outer
diversity among, respectively, $\SPT$ domains, all Condorcet domains,
and Arrow's SP domains, whereas $\GSbal$ has the lowest outer
diversity among $\GS$ domains.  $\SC$ domains
serve as representatives of Condorcet domains with low outer diversity
that we can sample. Further, they are commonly studied in
computational social choice (as are $\SP$ and $\GS$ domains).

To obtain \Cref{fig:swap}, for $\SP$ and $\SC$ we used
the approximate sampling algorithm provided by
\citet{fal-sor-szu-was:t:outer-diversity}, whereas for $\GScat$,
$\GSbal$, and $\SPstar$ we used our own, exact algorithms
(see \Cref{thm:out-div-gs-cat,thm:outdiv-gst,thm:out-div-sp-star}).
Inspecting
\Cref{fig:swap}, we see that outer diversity of
$\SPstar$ and $\GScat$ seem to approach $1$ and $\nicefrac{1}{2}$,
respectively, whereas outer diversity of $\SC$, $\SP$, and $\GSbal$
might be approaching $0$.  In this section we confirm these limit
values theoretically (except for $\SP$, for which we do not yet have
sufficiently good theoretical understanding), describe our exact
algorithms, and provide some additional results

We first consider the case of $\SC$.
\citet{fal-sor-szu-was:t:outer-diversity} have shown that for every
family of domains, with one domain for each number of candidates,
whose sizes are bounded by a constant, the outer diversity of these
domains tends to $0$ as the number of candidates increases. We extend
this result to domains of polynomial size.  Since every $\SC$ domain
over $m$ candidates contains at most $O(m^2)$ votes, this means that
outer diversity of $\SC$ domains tends to $0$.\footnote{Interestingly,
  this also applies to so-called Euclidean domains of a fixed
  dimension. In such domains each candidate and each voter is
  represented as a point in $\reals^d$, where $d$ is some constant. A
  voter ranks the candidates with respect to their nondecreasing
  distance from his or her point. Each such domain contains
  $O(m^{2d})$ votes.}

\begin{theorem}\label{thm:poly-to-zero}
    Let us fix a polynomial $P(m)$ and 
    let $D_2$, $D_3$, $\ldots$ be a sequence of domains, where each $D_m$ contains at most $P(m)$ rankings over $m$ candidates. Then $\lim_{m \rightarrow \infty}\out(D_m) = 0$.
\end{theorem}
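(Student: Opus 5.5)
We need to show that $\ansd(D_m) \to \nicefrac{1}{2}$. We have $\swap(D_m,u) \le \swap(v,u)$ for any $v$, so an upper bound $\ansd \le \nicefrac12$ is not what is needed. What we need is a lower bound: for a uniformly random vote $u$, the distance from $u$ to the nearest member of $D_m$ is at least $(\nicefrac12 - \varepsilon)\binom{m}{2}$ with high probability.

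The plan is a concentration argument plus a union bound over the at most $P(m)$ votes of $D_m$.

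First, fix any vote $v$. For uniformly random $u$, the quantity $\swap(v,u)$ is distributed as the number of inversions of a uniformly random permutation. Its mean is $\frac12\binom{m}{2}$. It is a sum of independent variables $X_2,\ldots,X_m$, where $X_j$ is uniform on $\{0,\ldots,j-1\}$; this is the Lehmer-code decomposition. Each $X_j$ lies in an interval of length $j-1 \le m$.

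By Hoeffding's inequality, for $t = \varepsilon\binom{m}{2}$,
\[
\Pr\Bigl[\swap(v,u) \le (\tfrac12-\varepsilon)\tbinom{m}{2}\Bigr] \le \exp\Bigl(-\frac{2t^2}{\sum_j (j-1)^2}\Bigr) \le \exp(-c\,\varepsilon^2 m)
\]
for some absolute constant $c>0$. This holds because $t^2 = \Theta(\varepsilon^2 m^4)$ and $\sum_j (j-1)^2 = \Theta(m^3)$. Alternatively, Chebyshev's inequality with the known variance $\Theta(m^3)$ gives only $O(1/(\varepsilon^2 m))$, which is too weak for the union bound. Exponential concentration is therefore the key ingredient.

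Second, take a union bound over $v \in D_m$:
\[
\Pr\Bigl[\swap(D_m,u) \le (\tfrac12-\varepsilon)\tbinom{m}{2}\Bigr] \le P(m)\,e^{-c\varepsilon^2 m} \to 0.
\]
Since the swap distance is nonnegative, it follows that
\[
\ansd(D_m) \ge (\tfrac12-\varepsilon)\bigl(1 - P(m)e^{-c\varepsilon^2 m}\bigr).
\]

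Third, $\ansd(D_m) \le \nicefrac12$ for every nonempty $D_m$, because the average distance to any single fixed member is exactly $\nicefrac12$. Hence $0 \le \out(D_m) \le 2\varepsilon + o(1)$. Letting $\varepsilon \to 0$ gives the limit $0$.

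The main obstacle is obtaining a tail bound strong enough to beat a polynomial number of domain votes, and the independent-sum representation with Hoeffding's inequality handles this. The argument in fact works for any family with $|D_m| = e^{o(m)}$.
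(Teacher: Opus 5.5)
Your argument is correct. It shares the paper's key ingredient, the Lehmer-code decomposition of $\swap(v,u)$ into independent uniform variables followed by Hoeffding's lower-tail bound, but uses it differently.

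The paper uses a packing argument, in four steps:
\begin{enumerate}
\item It partitions $\mathcal{L}(C)$ into cells according to the nearest vote of $D_m$.
\item It lower-bounds each cell's total distance by that of an equally large ball around its center.
\item It uses convexity to replace all cells by balls of size $\lfloor m!/|D_m|\rfloor$.
\item It applies Hoeffding to show that the radius of such a ball is at least $\binom{m}{2}/2 - O(m^{3/2}\sqrt{\log m})$.
\end{enumerate}
It then needs an extra step, with auxiliary radii $\ell'$ and $\ell''$, to show that the conditional expected distance inside the ball is also close to $\binom{m}{2}/2$.

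Your union bound says that the $(\nicefrac12-\varepsilon)\binom{m}{2}$-neighborhood of $D_m$ has probability at most $|D_m|e^{-c\varepsilon^2 m}$. Combined with $\mathbb{E}[Z]\ge r\Pr[Z>r]$ for the nonnegative variable $Z=\swap(D_m,u)$, it replaces both the equalization step and the conditional-expectation step.

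The paper's cell-by-cell bound is finer, but the limit does not need that precision, and you lose no rate. Taking $\varepsilon=\Theta(\sqrt{\log m/m})$ in your bound gives $\out(D_m)=O(\sqrt{\log m/m})$, the same rate the paper's final display yields. Your version also handles constant-size domains uniformly, whereas the paper defers that case to prior work. Finally, it makes transparent that the hypothesis can be weakened to $\log|D_m|=o(m)$.
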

\appendixproof{Theorem}{thm:poly-to-zero}{
    Fix an arbitrary polynomial $P(m)$ and the sequence of domains $D_2$, $D_3$, $\ldots$,
    where each $D_m$ contains at most $P(m)$ rankings over $m$ candidates $C$.
    Let $k$ be the largest power in $P(m)$,
    then $|D_m| = \Theta(m^k)$.
    We assume that $k>0$ and $|D_m|>2$,
    as the case of $k=0$, i.e., constant number of votes in a domain,
    was already considered in~\citep{fal-sor-szu-was:t:outer-diversity}.
    Let us also denote $d_{\max} = \binom{m}{2}$, for convenience.

    Let $\pi_m : \mathcal{L}(C) \rightarrow D_m$ be a function that
    assigns every vote $u \in \mathcal{L}(C)$ a vote from the domain $D_m$
    in a way that minimizes the swap distance, i.e.,
    $\swap(u,\pi_m(u))=\min_{v \in D_m}\swap(u,v).$
    Then, we can write $\ansd$ of $D_m$
    in terms of votes from $\mathcal{L}(C)$ assigned to each vote in $D_m$, i.e.,
    \[
        \ansd(D_m) =
        \frac{\sum_{u \in \mathcal{L}(C)}\min_{v \in D_m} \swap(u,v)}{m! \cdot d_{\max}} =
        \frac{\sum_{v \in D_m}\sum_{u \in \pi^{-1}_m(v)} \swap(u,v)}{m! \cdot d_{\max}}.
    \]

    Let $M_{m,0},M_{m,1},\dots,M_{m,d_{\max}}$ be so called \emph{Mahonian numbers}
    where each $M_{m,i}$ counts the number of votes in $\mathcal{L}(C)$
    that are at swap distance $i$ from a fixed vote $v$.%
    \footnote{See, e.g., \url{https://oeis.org/A008302}.}
    For a number $t \in \mathbb{N}$,
    let $\ell(t)$ be a maximum integer such that
    $\sum_{i=0}^{\ell(t)} M_{m,i} \le t$.
    Also, let $F(t)$ be the minimum sum of swap distances
    from a set of $t$ pairwise distinct votes to a fixed vote $v$, i.e.,
    \[
        F(t) = \sum_{i=0}^{\ell(t)} i\cdot M_{m,i} + \left(t - \sum_{i=0}^{\ell(t)} M_{m,i}\right) \cdot(\ell(t) + 1).
    \]
    Then, since all votes in $\pi^{-1}_m(v)$ for each $v \in D_m$ are pairwise distinct,
    we get that
    \[
        \ansd(D_m) \ge
        \frac{\sum_{v \in D_m}F(|\pi^{-1}_m(v)|)}{m! \cdot d_{\max}}.
    \]
    
    Observe that the value of the above term is minimized,
    if the sets $\pi^{-1}_m(v)$ for different votes $v \in D_m$
    have as similar sizes as possible,
    i.e., each vote from the domain $D_m$ is assigned
    an (almost) equal number of votes from $\mathcal{L}(C)$.
    Hence, we can provide its lower bound by exchanging $F(|\pi^{-1}_m(v)|)$
    for $F(\lfloor m!/|D_m| \rfloor)$, which yields
    \begin{align}
    \notag
        \ansd(D_m) &\ge
        \frac{\sum_{v \in D_m} F(\lfloor m!/|D_m| \rfloor)}{m! \cdot d_{\max}} \\
    \notag
        &= \frac{|D_m|}{m!} \cdot \left\lfloor \frac{m!}{|D_m|}\right\rfloor \cdot
        \frac{1}{d_{\max}} \cdot 
        \frac{F(\lfloor m!/|D_m| \rfloor)}{\lfloor m!/|D_m| \rfloor}\\
    \notag
        &\ge \frac{|D_m|}{m!} \cdot \left(  \frac{m!}{|D_m|} - 1\right) \cdot
        \frac{1}{d_{\max}} \cdot 
        \frac{F(\lfloor m!/|D_m| \rfloor)}{\lfloor m!/|D_m| \rfloor}\\
    \label{eq:poly-to-zero:ansd1}
        &= \left(1 - \frac{\Theta(m^k)}{m!}\right) \cdot
        \frac{1}{d_{\max}} \cdot  \frac{F(\lfloor m!/|D_m| \rfloor)}{\lfloor m!/|D_m| \rfloor}.
    \end{align}

    Now, let $V$ be an arbitrary set of $\lfloor m!/|D_m| \rfloor$
    closest votes from $\mathcal{L}(C)$ to a given fixed vote $v$.
    Also, let $x$ be a vote chosen uniformly at random from $\mathcal{L}(C)$.
    Then, observe that we can express
    $F(\lfloor m!/|D_m| \rfloor)/\lfloor m!/|D_m| \rfloor$
    in terms of expected swap distance between $x$ and $v$
    conditioned on the fact that $x$ belongs to $V$, i.e.,
    \[ 
        \frac{F(\lfloor m!/|D_m| \rfloor)}{\lfloor m!/|D_m| \rfloor} =
        \mathbb{E}(\swap(x,v) | x \in V).
    \]
    Next, let us denote $\ell_m \coloneqq \ell(\lfloor m!/|D_m| \rfloor)$
    and observe that
        \[
        \mathbb{E}(\swap(x,v) | x \in V) \ge \mathbb{E}(\swap(x,v) | \swap(x,v) \le \ell_m), 
    \]
    as the set $\{ x \in \mathcal{L}(C) : \swap(x,v) \le \ell_m\}$ is a subset of $V$,
    and the expected value can only decrease as we remove the votes from $V$
    with the highest swap distance to $v$.
    Combining this with Inequality~\eqref{eq:poly-to-zero:ansd1},
    we obtain
    \begin{equation}
    \label{eq:poly-to-zero:expected-distance-bound}
        \ansd(D_m) \ge
        \left(1 - \frac{\Theta(m^k)}{m!}\right) \cdot
        \frac{1}{d_{\max}} \cdot
        \mathbb{E}(\swap(x,v) | \swap(x,v) \le \ell_m).
    \end{equation}

    In order to bound the expected conditional distance,
    we will first look at the asymptotics of $\ell_m$.
    To this end, we will use the \emph{Lehmer code} and the \emph{Hoeffding's inequality}.

    First, however, let us show that
    \begin{equation}
    \label{eq:poly-to-zero:ell_m_less_than_half}
        \ell_m < d_{\max}/2.
    \end{equation}
    Assume otherwise, i.e., $\ell_m \ge d_{\max}/2$.
    We use that fact that the Mahonian numbers are symmetric,
    i.e., $M_{m,i} = M_{m,d_{\max}-i}$ for each $i \in \{0,1,\dots,d_{\max}\}$.
    This implies that
    \[
        \sum_{i=0}^{\ell_m} M_{m,i} \ge 
        \frac{1}{2} \sum_{i=0}^{d_{\max}} M_{m,i} =
        \frac{m!}{2}.
    \]
    On the other hand, from the definition of $\ell_m$,
    \[
        \sum_{i=0}^{\ell_m} M_{m,i} \le \left\lfloor \frac{m!}{|D_m|} \right\rfloor  \le \frac{m!}{|D_m|}.
    \]
    Combining both inequalities together, we get that $|D_m|\le 2$,
    which contradicts our assumption.
    Thus, Inequality~\eqref{eq:poly-to-zero:ell_m_less_than_half} holds.
    
    Now, let us denote the linear order in vote $v$ by
    $v : c_1 \succ c_2 \succ \dots \succ c_m$.
    The Lehmer code of vote $x \in \mathcal{L}(C)$
    with respect to the reference vote $v$
    is a sequence $(\sigma(x,v,1), \sigma(x,v,2),\dots,\sigma(x,v,m))$
    in which $\sigma(x,v,i)$ denotes the number of candidates
    ranked by $v$ below $c_i$
    which $x$ ranks above $c_i$, i.e.,
    \[
        \sigma(x,v,i) = |\{ c_j \in C : c_i \succ_v c_j \land c_j \succ_x c_i \}|. 
    \]
    The Lehmer code has three properties that are important for us.
    First, it holds that the sum of Lehmer codes is equal to the swap distance, i.e.,
    \[
        \swap(x,v) = \sum_{i=1}^m \sigma(x,v,i).
    \]
    Second, for every $i \in [m]$,
    it holds that $\sigma(x,v,i) \in \{0, 1,\dots, m-i\}$.
    Third, every sequence $(s_1,s_2,\dots,s_m)$
    such that $s_i \in \{0, 1,\dots, m-i\}$
    uniquely determines vote $x$ such that
    $(\sigma(x,v,i))_{i \in [m]} = (s_i)_{i \in [m]}$.
    Thus, in order to pick $x$ from $\mathcal{L}(C)$ uniformly at random
    we can alternatively pick its Lehmer code uniformly at random.
    In other words, we can express a random variable $\swap(x,v)$
    as a sum of $m$ independent random variables $\sigma(x,v,1),\dots,\sigma(x,v,m)$,
    where for each $i \in [m]$ we have that
    $\sigma(x,v,i)$ follows a uniform distribution from the set
    $\{0,1,\dots,m-i\}$.

    This allows us to use the Hoeffding's inequality,
    which in general says (in its lower tail version) that
    for independent random variables $X_1,\dots,X_n$
    such that each $X_i$ takes values from $[a_i,b_i]$
    and $t \ge 0$
    it holds that
    \[
        \textstyle
        \mathbb{P}\left(
            \sum_{i\in [n]}(X_i) - \mathbb{E}\left(\sum_{i\in [n]}X_i \right) \le -t
        \right) \le 
        \exp\left(- \frac{2t^2}{\sum_{i\in [n]}(b_i - a_i)^2}\right).
    \]
    By applying it to our setting
    and taking
    $t = \mathbb{E}(\swap(x,v)) - (\ell_m + 1) = d_{\max}/2 - (\ell_m + 1)$,
    for which we have $t \ge 0$ by Inequality~\eqref{eq:poly-to-zero:ell_m_less_than_half},
    we obtain
    \[
        \textstyle
        \mathbb{P}\left(
            \swap(x,v) - d_{\max}/2 \le (\ell_m + 1) - d_{\max}/2
        \right) \le 
        \exp\left(- \frac{2(\ell_m + 1 - d_{\max}/2)^2}{\sum_{i\in [m]}(m-i)^2}\right).
    \]
    Clearly,
    \(
        \mathbb{P}\left(
            \swap(x,v) - d_{\max}/2 \le (\ell_m + 1) - d_{\max}/2
        \right) =
        \mathbb{P}(\swap(x,v) \le \ell_m + 1),
    \)
    thus we get
    \[
        \textstyle
        \mathbb{P}\left(
            \swap(x,v) \le \ell_m + 1
        \right) \le 
        \exp\left(- \frac{2(d_{\max}/2 - (\ell_m + 1))^2}{\sum_{i\in [m]}(m-i)^2}\right).
    \]
    On the other hand, observe that
    \begin{equation}
    \label{eq:poly-to-zero:swap-prob-bound}
        \mathbb{P}\left( \swap(x,v) \le \ell_m + 1 \right) =
        \frac{1}{m!} \sum_{i=0}^{\ell_m + 1} M_{m,i} \ge
        \frac{1}{m!} \cdot \frac{m!}{|D_m|} = \frac{1}{|D_m|} = \Theta(m^{-k}),
    \end{equation}
    where the inequality comes from the fact that
    by definition $\ell_m$
    is the largest natural number such that
    $\sum_{i=0}^{\ell_m} M_{m,i} \le \lfloor m!/|D_m| \rfloor$,
    hence if we extend the sum to $\ell_m + 1$
    it will exceed $m!/|D_m|$, as $M_{m,i} > 1$ for $i > 0$.
    Thus, we get
    \[
        \textstyle
        \Theta(m^{-k}) \le 
        \mathbb{P}\left(
            \swap(x,v) \le \ell_m + 1
        \right) \le 
        \exp\left(- \frac{2(d_{\max}/2 - (\ell_m + 1))^2}{\sum_{i\in [m]}(m-i)^2}\right).
    \]
    On each side of the above inequality let us take the natural logarithm and multiply by $-1$,
    then we obtain
    \[
        \Theta(k\log(m)) \ge \frac{2(d_{\max}/2 - (\ell_m + 1))^2}{\sum_{i\in [m]}(m-i)^2}.
    \]
    Since $k$ is a constant, we can remove it from the $\Theta$ notation,
    as well as constant $2$ from the right hand side. Then,
    \[
        \Theta(\log(m)) \cdot \sum_{i\in [m]}(m-i)^2 \ge (d_{\max}/2 - (\ell_m + 1))^2.
    \]
    By the fact that $\sum_{i\in [m]}(m-i)^2 = (m-1)m(2m-1)/6$, we get
    \[
        \Theta(m^3\log(m)) \ge (d_{\max}/2 - (\ell_m + 1))^2.
    \]
    Since we know that $d_{\max}/2 - (\ell_m + 1) \ge 0$ (by Inequality~\eqref{eq:poly-to-zero:ell_m_less_than_half}), we can take the square root to obtain
    \[
        \Theta\left(m^{3/2}\textstyle\sqrt{\log(m)}\right) \ge d_{\max}/2 - (\ell_m + 1).
    \]
    And from this we obtain the asymptotic bound on $\ell_m$ that is
    \begin{equation}
    \label{eq:poly-to-zero:ell-asymptotics}
        \textstyle
        \ell_m \ge d_{\max}/2 - O\left(m^{3/2}\sqrt{\log(m)}\right).
    \end{equation}

    Now, observe that we could repeat our reasoning for an arbitrary
    sequence of domains $D^{\kappa}_m$
    of size $|D^\kappa_m| = m^\kappa$ and obtain bounds analogous
    to~\eqref{eq:poly-to-zero:swap-prob-bound} and~\eqref{eq:poly-to-zero:ell-asymptotics}
    for different $\ell^\kappa \coloneqq \ell(\lfloor m! /|D|\rfloor)$.
    Let $k' \in \mathbb{N}$ be a smallest integer such that $k' \ge 2k$
    and $\ell' \coloneqq \ell^{k'} < \ell_m$,
    and $k'' \in \mathbb{N}$ be a smallest integer such that $k'' \ge 2k'$
    and $\ell'' \coloneqq \ell^{k''} < \ell'$,
    Then,
    \[
        \ell'' \ge d_{\max}/2 - O\left(m^{3/2}\textstyle\sqrt{\log(m)}\right)
        \quad \mbox{and} \quad
        \mathbb{P}\left(\swap(x,v) \le \ell'+ 1\right) \ge \Theta(m^{-k'}).
    \]
    Finally, observe that by definition of $\ell(t)$,
    we obtain
    \[
        \mathbb{P}\left( \swap(x,v) \le \ell'' \right) =
        \frac{1}{m!} \sum_{i=0}^{\ell''} M_{m,i} \le
        \frac{1}{m!} \cdot \left\lfloor\frac{m!}{|D^{k''}_m|}\right\rfloor \le \frac{1}{|D^{k''}_m|} = \Theta(m^{-k''}).
    \]

    Then, from Inequality~\eqref{eq:poly-to-zero:expected-distance-bound}
    we obtain
    \begin{align*}
        \ansd(D_m) &\ge 
        \frac{1}{d_{\max}} \cdot 
        \mathbb{E}(\swap(x,v) | \swap(x,v) \le \ell_m) 
        \left(1 - \frac{\Theta(m^k)}{m!}\right)\\
        &\ge 
        \frac{\ell''+1}{d_{\max}} 
        \mathbb{P}(\swap(x,v) \ge \ell'' +1 |\swap(x,v) \le \ell_m)
        \left(1 - \frac{\Theta(m^k)}{m!}\right)\\
        &\ge 
        \frac{\ell''+1}{d_{\max}} 
        \mathbb{P}(\swap(x,v) \ge \ell'' +1 |\swap(x,v) \le \ell' + 1)
        \left(1 - \frac{\Theta(m^k)}{m!}\right)\\
        &= 
        \frac{\ell''+1}{d_{\max}} 
        \left( 1 -  \mathbb{P}(\swap(x,v) \le \ell'' +1 |\swap(x,v) \le \ell' + 1) \right)
        \left(1 - \frac{\Theta(m^k)}{m!}\right)\\
        &=
        \frac{\ell''+1}{d_{\max}} 
        \left( 1 -  \frac{\mathbb{P}(\swap(x,v) \le \ell'' +1)}{\mathbb{P}(\swap(x,v) \le \ell' + 1)} \right)
        \left(1 - \frac{\Theta(m^k)}{m!}\right)\\
        &\ge
        \frac{d_{\max}/2 - O\left(m^{3/2}\sqrt{\log(m)}\right)}{d_{\max}} 
        \left( 1 -  \frac{\Theta(m^{-k''})}{\Theta(m^{-k'})}\right)
        \left(1 - \frac{\Theta(m^k)}{m!}\right)\\
        &=
        \frac{1}{2} \left(1 - O\left(\sqrt{\log(m)/m}\right)\right) 
        \left( 1 -  \Theta\left(m^{-(k'' - k')}\right)\right)
        \left(1 - \frac{\Theta(m^k)}{m!}\right).\\
    \end{align*}
Therefore,
\(
    \lim_{m \to \infty} \ansd(D_m) \ge \frac{1}{2}.
\)
Since $\ansd$ can never be greater than half,
we get that
\(
    \lim_{m \to \infty} \ansd(D_m) = \frac{1}{2},
\)
which concludes the proof.
}

\begin{corollary}
  For every $m \in \mathbb{N}$, let $\SC_m$ be a single-crossing domain
  over $m$ candidates. Then
  $\lim_{m \rightarrow \infty} \out(\SC_m) = 0$.
\end{corollary}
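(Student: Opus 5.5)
The plan is to derive the corollary directly from \Cref{thm:poly-to-zero}, taking $P(m) = 1 + \binom{m}{2}$. The only thing to verify is that every single-crossing domain over $m$ candidates has at most $1+\binom{m}{2}$ votes. Once that holds, the sequence $\SC_2, \SC_3, \ldots$ satisfies the hypothesis of \Cref{thm:poly-to-zero}, and so $\lim_{m\to\infty}\out(\SC_m) = 0$.

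To bound the size, take $\calD = \SC_m$ with its single-crossing order $v_1 \rhd v_2 \rhd \cdots \rhd v_n$, where the $v_i$ are pairwise distinct. Fix a pair $\{c_i, c_j\}$. By the single-crossing property, the votes preferring $c_i$ to $c_j$ form a prefix or a suffix of this order, so along the sequence the relative order of $c_i$ and $c_j$ changes at most once.

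Now consider consecutive votes $v_t$ and $v_{t+1}$. Since they are distinct, some pair of candidates is ordered differently by them, so a crossing happens between positions $t$ and $t+1$. Summing over the $n-1$ consecutive pairs, each such pair accounts for at least one crossing, while each of the $\binom{m}{2}$ candidate pairs crosses at most once overall. Hence $n - 1 \le \binom{m}{2}$, which gives $|\SC_m| \le 1 + \binom{m}{2}$. This agrees with the remark that maximal single-crossing domains have exactly this size; non-maximal ones are smaller, and the counting argument covers both cases.

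No step here is a real obstacle, since the limit itself is entirely supplied by \Cref{thm:poly-to-zero}. The one point needing care is that the size bound must hold for an arbitrary choice of $\SC_m$, not only maximal ones. The crossing-count argument above handles this uniformly: it uses only that consecutive votes are distinct and that each candidate pair crosses at most once.
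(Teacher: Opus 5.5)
Your proposal is correct and matches the paper's argument: the paper also obtains the corollary by applying \Cref{thm:poly-to-zero} to the quadratic bound on $|\SC_m|$. The paper cites that size bound as well known, while your crossing-count argument proves $|\SC_m| \le 1+\binom{m}{2}$ directly for arbitrary, not necessarily maximal, single-crossing domains; this is a valid addition.
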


Next, we move to the case of $\GS$.  In \Cref{thm:outdiv-gst}, we show
how to efficiently compute the outer diversity of any $\GS$ domain
specified by a binary tree. The algorithm aggregates votes according
to the positions of one child block at each node, computes the
corresponding local contribution, and combines it with the
contributions of the two subtrees in a bottom-up way.

\begin{theorem}\label{thm:outdiv-gst}
  For any domain $\GS(T)$ over $m$ candidates, where $T$ is a given binary tree,
  there exists an $O(m^4)$-time algorithm that computes $\outdiv(\GS(T))$.
\end{theorem}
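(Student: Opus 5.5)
Our plan is to compute $\sum_{u \in \calL(C)} \swap(\GS(T),u)$ exactly via a recursive decomposition along $T$, and then divide by $m!\binom{m}{2}$. The starting point is the observation that, for a node $x$ of $T$ with children $L$ and $R$ (viewed as the sets of candidates in their subtrees), every vote in $\GS(T)$ ranks the candidates of $x$'s subtree so that either all of $L$ precedes all of $R$ or vice versa. Moreover, the restrictions to $L$ and to $R$ are arbitrary votes compatible with the respective subtrees. Given a vote $u$, the swap distance from $u$ to a vote $w$ decomposes over pairs of candidates. Each pair $\{a,b\}$ is separated at exactly one node, namely their lowest common ancestor $x$. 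Whether $w$ agrees with $u$ on $\{a,b\}$ depends only on the single orientation bit that $w$ chooses at $x$. Since the bits at different nodes are independent, we get
\[
\swap(\GS(T),u)=\sum_{x} \min\bigl(\mathrm{inv}_x(u),\, |L_x|\cdot|R_x| - \mathrm{inv}_x(u)\bigr),
\]
where $\mathrm{inv}_x(u)$ is the number of pairs $(a,b)\in L_x\times R_x$ with $u\colon b\pref a$. This step is routine.

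By linearity, the total over all $u$ splits into a sum over internal nodes $x$ of $\sum_u \min(\mathrm{inv}_x(u),|L||R|-\mathrm{inv}_x(u))$. This quantity depends only on $p=|L_x|$ and $q=|R_x|$. For a uniformly random $u$, the relative order of $L\cup R$ is uniform, and $\mathrm{inv}_x(u)$ depends only on which positions (among the $p+q$ positions within $L\cup R$) are occupied by $R$-candidates. That is, it is the inversion count of a random binary word with $p$ zeros and $q$ ones. The number of such words with exactly $k$ inversions is the coefficient of $z^k$ in the Gaussian binomial $\binom{p+q}{p}_z$. This matches the paper's description of ``aggregating votes according to the positions of one child block.''

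The local contribution of $x$ is therefore
\[
\frac{m!}{\binom{p+q}{p}}\sum_{k} N_{p,q}(k)\min(k,pq-k).
\]
We compute the numbers $N_{p,q}(k)$ with the standard recursion on the last letter, $N_{p,q}(k)=N_{p-1,q}(k)+N_{p,q-1}(k-p)$, keeping only what is needed.

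The main obstacle is the running-time accounting, since a naive table over all $p,q\le m$ and $k\le m^2$ has size $O(m^4)$ and is exactly the claimed bound. Our plan is to fill this table once, in $O(m^4)$ time, and reuse it at every node. Each node then costs $O(pq)\le O(m^2)$ for the final summation, which totals $O(m^3)$ over $m-1$ nodes. Combining contributions bottom-up (summing local terms together with those of the two subtrees) is trivial. Care is needed to normalize correctly (dividing by $m!\binom{m}{2}$ using rationals, or working directly with probabilities) so as to avoid astronomically large integers. Alternatively, we could argue in the arithmetic model that the overall bound is $O(m^4)$.
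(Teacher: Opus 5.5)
Your proposal is correct and is essentially the paper's argument. Your table $N_{p,q}(k)$ is the paper's subset-sum table $A(m,t,s)$ reindexed by $k=s-t(t+1)/2$, with the same $O(m^4)$ states and a mirror-image recursion. Your flat sum over lowest common ancestors, with weight $m!/\binom{p+q}{p}$ per node, is exactly what the paper's bottom-up recurrence unrolls to, since that recurrence reads $S(x)/m_x! = F(x)/m_x! + S(\ell)/m_\ell! + S(r)/m_r!$.

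One harmless slip: your recursion for $N_{p,q}$ counts pairs with an $L$-candidate above an $R$-candidate rather than $\mathrm{inv}_x(u)$. This changes nothing, because the two statistics are equidistributed (the Gaussian binomial is palindromic) and you only use $\min(k,pq-k)$.
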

\begin{proof}[Proof sketch of \Cref{thm:outdiv-gst}]
  We defer the full formal proof to the appendix
  and describe here the main idea.
  To compute $\outdiv(\GS(T))$,
  where $T$ is labeled with candidate set $C$ and $|C|=m$,
  it is enough to compute the sum of swap distances
  from all votes from $\calL(C)$ to the domain $\GS(T)$.
  We first consider the distance from a fixed vote $u \in \calL(C)$ to $\GS(T)$.
  We argue that this distance decomposes over the nodes of the tree.
  For an internal node $x$,
  the local cost of $u$ at $x$ is the cost of arranging the two child blocks
  in one of the two admissible orders,
  while the remaining cost is paid independently inside the two subtrees.
  In \Cref{lem:fxsu-formula} we show that,
  if the left child of $x$ contains $t$ candidates,
  then this local cost depends only on the sum of the positions occupied
  in $u$ by these $t$ candidates.
  Thus, when summing over all votes,
  we do not need to consider all $m!$ votes separately:
  we can group together votes with the same value of this position sum.
  By grouping votes in this way, \Cref{lem:fx-formula} gives an aggregated
  local contribution $F(x)$ for each node $x$.

  To compute these aggregated contributions,
  we use a dynamic-programming table $A(m,t,s)$.
  The value $A(m,t,s)$ counts the number of sets of $t$ positions among
  $\{1,\dots,m\}$ whose indices sum to $s$.
  For each such set of positions,
  the candidates inside the two child blocks can be permuted independently.
  Hence $A(m,t,s)$ lets us count how many votes have each type,
  and therefore compute the value $F(x)$ without iterating over all votes.
  Once all local contributions are known,
  we process the tree bottom-up.
  At every internal node, the total value for the subtree is obtained by
  adding the local contribution $F(x)$ and appropriately scaled contributions
  of the two children.
  This recurrence is presented in \Cref{lem:sx-recurrence}.

  Finally, the table $A$ has $O(m^4)$ states,
  since the sum parameter is bounded by $O(m^2)$.
  All remaining computations, including the computation of all values $F(x)$
  and the bottom-up computation of the subtree values,
  are dominated by this dynamic program.
  Hence $\outdiv(\GS(T))$ can be computed in time $O(m^4)$.
\end{proof}
\appendixproof{Theorem}{thm:outdiv-gst}{
Let $C = \{c_1,c_2,\dots,c_m\}$ be the set of candidates
and let $T_x$ denote the subtree rooted at a node $x$ of $T$.
For each node $x$,
let $C_x$ be the set of candidates labeling the leaves of $T_x$.
Let us define $S(x) \defeq \sum_{u \in \calL(C_x)} \swap(u,\GS(T_x))$.
From the definitions of $\ansd(\cdot)$ and $\outdiv(\cdot)$,
and since $C_\roott = C$ for the root $\roott$ of $T$, we have:
\begin{align}
  \outdiv(\GS(T))
  = 1 - \frac{2 \cdot \sum_{u \in \calL(C)} \swap(u,\GS(T))}{\binom{m}{2} \cdot m!}
  = 1 - \frac{2 \cdot S(\roott)}{\binom{m}{2} \cdot m!}.\label{eq:outdiv-gst}
\end{align}
Therefore, in order to prove the theorem statement
it suffices to compute $S(\roott)$ in time $O(m^4)$.

We start with an auxiliary dynamic programming table.
Let us define a function $A(m,t,s)$ which represents
how many subsets of $t$ positions from the set of positions
$\{1,2,3,\dots,m\}$ achieve the sum of indices equal to $s$.
Formally,
\[
  A(m,t,s) \defeq
  |\{ X \subseteq \{1,2,\dots,m\} : |X|=t, \textstyle\sum_{x \in X} x = s \}|.
\]
We can obtain this number using a basic recurrence formula based on the fact that
either we use a subset of positions from $\{1,2,\dots,m-1\}$
or we add $m$ to a subset of $t-1$ positions; hence:
$A(m,t,s) = A(m-1,t,s)+A(m-1,t-1,s-m)$.
Additionally we define boundary conditions:
$A(m,0,0) = 1$ for every $m \geq 0$
(since the empty set satisfies the condition), and
$A(m,t,s) = 0$ whenever $m < 0$, or $t < 0$, or $s < 0$.

Let us now consider an arbitrary internal node $x$ of $T$.
Let $\ell$ and $r$ be the left and right child of $x$, respectively.
Let $m_x = |C_x|$ and $t = |C_\ell|$.
For two disjoint sets of candidates $X,Y \subseteq C_x$ and
a vote $u \in \calL(C_x)$, we write $X \succ_u Y$
if every candidate from $X$ is ranked above every candidate from $Y$ in~$u$.

We define a domain
$D_x \defeq \{u \in \calL(C_x) : C_\ell \succ_u C_r \text{ or } C_r \succ_u C_\ell\}$, i.e.,
$D_x$ consists of all votes $u$ such that
either $u$ ranks all candidates from $C_\ell$ before any candidate from $C_r$
or $u$ ranks all candidates from $C_r$ before any candidate from $C_\ell$.
Since $x$ is the root of $T_x$,
every vote $u \in \GS(T_x)$ satisfies
either $C_\ell \succ_u C_r$ or $C_r \succ_u C_\ell$.
Hence, $\GS(T_x) \subseteq D_x$.

For every vote $u \in \calL(C_x)$,
let $s_x(u) \defeq \sum_{i=1}^t p_i$,
where $1 \leq p_1 < p_2 < \cdots < p_t \leq m_x$
are the positions occupied in $u$ by the candidates from $C_\ell$.
The following lemma (\Cref{lem:fxsu-formula}) gives a closed formula for
$\swap(u,D_x)$,
which we call \emph{the local cost of $u$ at node $x$}.

\begin{lemma}\label{lem:fxsu-formula}
  For every $u \in \calL(C_x)$, we have
  \[
    \swap(u,D_x) = \min \left\{
      s_x(u) - \textstyle\frac{t(t+1)}{2},\quad
      t \cdot (m_x-t) + \textstyle\frac{t(t+1)}{2} - s_x(u)
    \right\}.
  \]
\end{lemma}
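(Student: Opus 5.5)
We prove the formula by showing that the distance from $u$ to each half of $D_x$ has a closed form, and then taking the minimum. Write $D_x = D^{\ell r} \cup D^{r\ell}$, where $D^{\ell r}$ consists of the votes with $C_\ell \succ C_r$ and $D^{r\ell}$ of those with $C_r \succ C_\ell$. Then $\swap(u,D_x) = \min\{\swap(u,D^{\ell r}), \swap(u,D^{r\ell})\}$. The plan is to show that $\swap(u,D^{\ell r}) = s_x(u) - t(t+1)/2$ and that $\swap(u,D^{r\ell}) = t(m_x-t) + t(t+1)/2 - s_x(u)$.

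\textbf{Lower bound via crossing pairs.} Call a pair $(a,b)$ with $a \in C_\ell$, $b \in C_r$ \emph{inverted} if $u$ ranks $b$ above $a$. Let $N(u)$ be the number of inverted pairs. Every vote $w \in D^{\ell r}$ ranks $a$ above $b$ for all such pairs. The swap distance counts the pairs ordered differently by $u$ and $w$, so $\swap(u,w) \ge N(u)$.

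\textbf{Achieving the bound.} Let $w$ be the vote that lists $C_\ell$ in the same relative order as in $u$, followed by $C_r$ in the same relative order as in $u$. Then $w \in D^{\ell r}$. The pairs on which $u$ and $w$ disagree are exactly the inverted pairs, since pairs within a block keep their order. Hence $\swap(u,D^{\ell r}) = N(u)$.

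\textbf{Counting.} Let $p_1 < \cdots < p_t$ be the positions of $C_\ell$ in $u$. The candidate of $C_\ell$ at position $p_i$ has exactly $p_i - i$ candidates of $C_r$ above it, because the positions $1,\dots,p_i-1$ contain $i-1$ members of $C_\ell$. Summing over $i$ gives $N(u) = \sum_i (p_i - i) = s_x(u) - t(t+1)/2$.

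\textbf{The other half.} By the symmetric argument, $\swap(u,D^{r\ell})$ equals the number of non-inverted crossing pairs. This is $t(m_x-t) - N(u)$, which is exactly the second expression.

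The only mildly delicate step is the justification that $\swap(u,w)$ equals the number of pairs ordered differently by $u$ and $w$. This is the standard Kendall tau characterization of the swap distance, and I would simply cite it.
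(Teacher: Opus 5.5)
Your proof is correct and takes essentially the same route as the paper: both identify $s_x(u) - t(t+1)/2$ as the number of crossing pairs in which a member of $C_r$ is ranked above a member of $C_\ell$, show that this count is the distance to the $C_\ell \succ C_r$ half of $D_x$, and handle the other half symmetrically. The only difference is cosmetic. The paper gets the lower bound by noting that each adjacent swap changes this count by at most one, whereas you cite the Kendall tau characterization and exhibit an explicit witness vote, which makes the matching upper bound more explicit than the paper's ``this minimum is also enough.''
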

\begin{proof}[Proof of \Cref{lem:fxsu-formula}]
  We fix $u \in \calL(C_x)$.
  Let $1 \leq p_1 < p_2 < \cdots < p_t \leq m_x$
  be the positions occupied in $u$ by the candidates from $C_\ell$.

  We observe that the minimum number of adjacent swaps required to transform $u$ so that
  the candidates from $C_\ell$ occupy the first $t$ positions equals
  $\sum_{i=1}^t (p_i -  i) = s_x(u) - \frac{t(t+1)}{2} \eqdef \alpha$.
  Indeed, this quantity is exactly the number of pairs $(a,b)$ such that
  $a \notin C_\ell$, $b \in C_\ell$,
  and $u \colon a \pref b$ ($a$ is ranked higher than $b$ in $u$).
  For the candidate from $C_\ell$ occupying position $p_i$,
  there are exactly $p_i-i$ such candidates above it,
  since among the first $p_i-1$ positions
  exactly $i-1$ are occupied by candidates from $C_\ell$.
  Each adjacent swap changes the number of such pairs by at most one,
  and the target arrangement is obtained exactly when no such pair remains.
  Hence the above quantity is the minimum number of adjacent swaps.

  Similarly, the minimum number of adjacent swaps required to transform $u$
  so that the candidates from $C_\ell$ occupy the last $t$ positions equals
  $\sum_{i=1}^t \bigl( (m_x-t+i) - p_i \bigr)
  = t \cdot (m_x-t) + \frac{t(t+1)}{2} - s_x(u) \eqdef \beta$.

  Every transformation of $u$ into a vote from $D_x$
  must make the candidates from $C_\ell$ occupy either the first $t$ positions
  or the last $t$ positions.
  Hence $\swap(u,D_x) \ge \min \{ \alpha, \beta \}$.
  This minimum is also enough to transform $u$ into a vote from $D_x$, so
  $\swap(u,D_x) \le \min \{ \alpha, \beta \}$.
  This finishes the proof of \Cref{lem:fxsu-formula}.
\end{proof}

Let $F(x)$ denote the sum of the local costs of all votes from
$\calL(C_x)$ at node $x$, i.e.,
$F(x) \defeq \sum_{u \in \calL(C_x)} \swap(u,D_x)$.
The following lemma (\Cref{lem:fx-formula}) shows that
we can compute $F(x)$ more efficiently than by summing $\swap(u,D_x)$
over all votes from $\calL(C_x)$ one by one.
Instead, we group together all votes for which the candidates from $C_\ell$
occupy sets of positions with the same sum $s$,
and count how many such votes there are.
We define
$f_x(s) \defeq \min \{
  s - \frac{t(t+1)}{2},\quad
  t \cdot (m_x-t) + \frac{t(t+1)}{2} - s
\}$.

\begin{lemma}\label{lem:fx-formula}
  We have
  $F(x) = \sum_s f_x(s) \cdot A(m_x,t,s)\cdot t!\cdot (m_x-t)!$.
\end{lemma}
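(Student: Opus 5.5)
The plan is a direct double-counting argument built on \Cref{lem:fxsu-formula}. That lemma shows the local cost $\swap(u,D_x)$ depends on $u$ only through the position sum $s_x(u)$, since $\swap(u,D_x) = f_x(s_x(u))$. Grouping votes by this sum gives
\[
  F(x) = \sum_{u \in \calL(C_x)} f_x(s_x(u)) = \sum_s f_x(s) \cdot \bigl|\{u \in \calL(C_x) : s_x(u) = s\}\bigr|,
\]
where $s$ ranges over the integers from $\frac{t(t+1)}{2}$ to $t m_x - \frac{t(t-1)}{2}$. Outside this range the count is zero, and so is $A(m_x,t,s)$, so the summation range does not matter. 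What remains is to show that for every $s$ the number of votes with $s_x(u)=s$ equals $A(m_x,t,s)\cdot t!\cdot(m_x-t)!$.

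For the count, I would set up a bijection between votes $u \in \calL(C_x)$ and triples $(P,\sigma,\tau)$:
\begin{itemize}
  \item $P \subseteq \{1,\dots,m_x\}$ with $|P|=t$ is the set of positions that $u$ assigns to candidates from $C_\ell$;
  \item $\sigma$ is the order in which the candidates of $C_\ell$ appear in those positions;
  \item $\tau$ is the order in which the candidates of $C_r = C_x \setminus C_\ell$ appear in the remaining $m_x - t$ positions.
\end{itemize}
Any such triple determines a unique vote: place $C_\ell$ into $P$ according to $\sigma$, and $C_r$ into the complement according to $\tau$. Conversely, every vote arises from exactly one triple.

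By definition $s_x(u) = \sum_{p \in P} p$, so it depends only on $P$. Hence the number of votes with $s_x(u)=s$ is the number of $t$-subsets of $\{1,\dots,m_x\}$ with element sum $s$, multiplied by $t!$ choices of $\sigma$ and $(m_x-t)!$ choices of $\tau$. The first factor is exactly $A(m_x,t,s)$ by definition.

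No step is a real obstacle. The only point needing care is to state the bijection cleanly and to note that $s_x(u)$ is independent of $\sigma$ and $\tau$; everything else follows from \Cref{lem:fxsu-formula}.
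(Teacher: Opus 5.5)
Your proposal is correct and follows the paper's proof essentially step for step. The paper also counts, for each $s$, the $A(m_x,t,s)$ position sets with sum $s$, multiplies by the $t!\cdot(m_x-t)!$ arrangements of $C_\ell$ and $C_r$, and invokes \Cref{lem:fxsu-formula} to give each such vote local cost $f_x(s)$; your bijection with triples $(P,\sigma,\tau)$ just makes that counting step slightly more explicit.
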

\begin{proof}[Proof of \Cref{lem:fx-formula}]
  For a fixed value of $s$,
  there are exactly $A(m_x,t,s)$ sets of $t$ positions in
  $\{1,2,\dots,m_x\}$ whose indices sum to $s$.
  For each such set of positions,
  the candidates from $C_\ell$ can be arranged within these positions in $t!$ ways,
  and the candidates from $C_r$ can be arranged in the remaining $m_x-t$ positions in $(m_x-t)!$ ways.
  Hence, the number of votes $u \in \calL(C_x)$
  for which the candidates from $C_\ell$ occupy a set of positions
  whose indices sum to $s$ is equal to
  $A(m_x,t,s) \cdot t! \cdot (m_x-t)!$.
  By \Cref{lem:fxsu-formula}, for each such vote $u$,
  $\swap(u,D_x) = f_x(s)$.
  Summing over all values of $s$, we obtain
  $F(x) = \sum_s f_x(s) \cdot A(m_x,t,s)\cdot t!\cdot (m_x-t)!$.
  Note that only values of $s$ in the range
  $\frac{t(t+1)}{2} \leq s \leq m_x t -\frac{t(t-1)}{2}$
  contribute nonzero terms.
  This finishes the proof of \Cref{lem:fx-formula}.
\end{proof}

For a vote $u \in \calL(C_x)$ and a set $X \subseteq C_x$,
let $u|_X$ denote the restriction of $u$ to $X$, i.e.,
the unique vote in $\calL(X)$ such that,
$a \pref_{u|_X} b \iff a \pref_u b$
for every $a,b \in X$.

The following lemma gives a recurrence formula for computing $S(x)$.

\begin{lemma}\label{lem:sx-recurrence}
  Let $x$ be a node of $T$.
  If $x$ is a leaf, then $S(x)=0$.
  If $x$ is an internal node with children $\ell$ and $r$, then
  \[ 
    S(x) = F(x)
         + \binom{m_x}{t} \cdot (m_x-t)! \cdot S(\ell)
         + \binom{m_x}{m_x-t} \cdot t! \cdot S(r).
  \]
\end{lemma}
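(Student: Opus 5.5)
The plan is to prove first a decomposition of the swap distance from a single vote to $\GS(T_x)$, and then to sum it over all of $\calL(C_x)$. The key claim is that for every $u \in \calL(C_x)$,
\[
  \swap(u,\GS(T_x)) = \swap(u,D_x) + \swap(u|_{C_\ell},\GS(T_\ell)) + \swap(u|_{C_r},\GS(T_r)).
\]
The leaf case is trivial: $\GS(T_x)$ is then the single one-candidate vote, so $S(x)=0$. For an internal node, recall that the swap distance between two votes equals the number of candidate pairs they order differently. Split the pairs of $C_x$ into three kinds: pairs inside $C_\ell$, pairs inside $C_r$, and mixed pairs with one candidate in each block. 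Any $w \in \GS(T_x)$ is obtained by choosing an order of the two blocks, some $w_\ell \in \GS(T_\ell)$, and some $w_r \in \GS(T_r)$, and these choices are independent. The number of disagreements between $u$ and $w$ then splits into three terms:
\begin{itemize}
  \item the disagreements on mixed pairs, which depend only on the block order and equal $\alpha$ or $\beta$ from the proof of \Cref{lem:fxsu-formula};
  \item the disagreements on pairs inside $C_\ell$, which equal $\swap(u|_{C_\ell},w_\ell)$;
  \item the disagreements on pairs inside $C_r$, which equal $\swap(u|_{C_r},w_r)$.
\end{itemize}
Minimizing over $w$ therefore minimizes each term separately. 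The first term gives $\min\{\alpha,\beta\} = \swap(u,D_x)$ by \Cref{lem:fxsu-formula}, and the other two give the subtree distances. This yields the decomposition.

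Next, sum the decomposition over all $u \in \calL(C_x)$. The first terms add up to $F(x)$ by definition. For the second term, use the map $u \mapsto u|_{C_\ell}$ from $\calL(C_x)$ to $\calL(C_\ell)$. Each fiber has size $\binom{m_x}{t}(m_x-t)!$: we choose which $t$ positions of $u$ hold $C_\ell$, and then arrange $C_r$ freely in the remaining positions. So this term contributes $\binom{m_x}{t}(m_x-t)!\,S(\ell)$. In the same way, each fiber of $u \mapsto u|_{C_r}$ has size $\binom{m_x}{m_x-t}\,t!$, so the third term contributes $\binom{m_x}{m_x-t}\,t!\,S(r)$. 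Adding the three contributions gives the stated recurrence.

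The main obstacle is justifying that the minimization separates. This needs two facts. The first is that $\GS(T_x)$ is exactly the product of the two block orders with $\GS(T_\ell) \times \GS(T_r)$. This follows from the frontier definition: reversing the children at $x$ swaps the blocks, and the reversals inside each subtree act independently. The second is that mixed-pair disagreements depend only on the block order, not on the internal orders $w_\ell$ and $w_r$. This holds because, once a block order is fixed, every mixed pair is ordered the same way in $w$ regardless of how each block is arranged internally. With both facts in place, the rest is counting.
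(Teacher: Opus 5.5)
Your proof is correct and follows the paper's argument. You use the same per-vote decomposition $\swap(u,\GS(T_x)) = \swap(u,D_x) + \swap(u|_{C_\ell},\GS(T_\ell)) + \swap(u|_{C_r},\GS(T_r))$ and the same fiber-counting over the restrictions $u|_{C_\ell}$ and $u|_{C_r}$; your justification via discordant mixed and within-block pairs is more explicit than the paper's remark that the swaps ``concern disjoint pairs of candidates.''
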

\begin{proof}[Proof of \Cref{lem:sx-recurrence}]
  If $x$ is a leaf, then $|C_x|=1$,
  hence $\calL(C_x)$ consists of a single vote,
  which already belongs to $\GS(T_x)$.
  Therefore, $S(x)=0$.

  Assume now that $x$ is an internal node with children $\ell$ and $r$.
  Fix a vote $u \in \calL(C_x)$.
  Transforming $u$ into a vote from $\GS(T_x)$ is equivalent to performing
  three independent tasks:
  \begin{enumerate}
   \item arranging the two blocks $C_\ell$ and $C_r$
   so as to obtain a vote from $D_x$,
   \item transforming $u|_{C_\ell}$ into a vote from $\GS(T_\ell)$,
   \item transforming $u|_{C_r}$ into a vote from $\GS(T_r)$.
  \end{enumerate}
  Indeed, every vote from $\GS(T_x)$ satisfies these three conditions.
  Since the corresponding swaps concern disjoint pairs of candidates,
  their costs add up.
  Conversely, once we choose one of the two admissible orders of the blocks
  and arbitrary votes from $\GS(T_\ell)$ and $\GS(T_r)$,
  they uniquely determine a vote from $\GS(T_x)$.
  Therefore,
  \begin{align}
    \swap(u,\GS(T_x)) =
      \swap(u, D_x)
    + \swap(u|_{C_\ell}, \GS(T_\ell))
    + \swap(u|_{C_r}, \GS(T_r)).\label{eq:swap-u-gstx}
  \end{align}

  This allows us to calculate
  $S(x) = \sum_{u \in \calL(C_x)} \swap(u,\GS(T_x))$
  by splitting $\swap(u,\GS(T_x))$ into 3 parts.
  The first part is
  $\sum_{u \in \calL(C_x)} \swap(u, D_x) = F(x)$
  by the definition of $F(x)$.

  For the second part,
  we group together all votes $u \in \calL(C_x)$ that induce the same
  restriction $u|_{C_\ell}$.
  For every fixed vote $w \in \calL(C_\ell)$,
  the value $\swap(w,\GS(T_\ell))$ appears once for each vote
  $u \in \calL(C_x)$ such that $u|_{C_\ell}=w$.
  The number of such votes is exactly
  $\binom{m_x}{t} \cdot (m_x-t)!$:
  we choose the $t$ positions occupied by the candidates from $C_\ell$,
  and then we place the candidates from $C_r$ arbitrarily on the remaining positions.
  Therefore, using the definition of $S(\ell)$, we obtain
  \begin{align*}
    \sum_{u \in \calL(C_x)} \swap(u|_{C_\ell},\GS(T_\ell))
    &= \binom{m_x}{t} \cdot (m_x-t)!
      \cdot \sum_{w \in \calL(C_\ell)} \swap(w,\GS(T_\ell)).\\
    &= \binom{m_x}{t} \cdot (m_x-t)! \cdot S(\ell).
  \end{align*}

  Symmetrically, in the third part,
  we group together all votes $u \in \calL(C_x)$ that induce the same
  restriction $u|_{C_r}$.
  By an analogous analysis, we obtain
  $\sum_{u \in \calL(C_x)} \swap(u|_{C_r},\GS(T_r))
    = \binom{m_x}{m_x-t} \cdot t! \cdot S(r)$.
  Combining the three contributions gives the required recurrence.
\end{proof}

It remains to estimate the running time.
The values $A(m,t,s)$ can be computed by dynamic programming
(using the given recurrence).
For each fixed $m$, the parameter $t$ ranges from $0$ to $m$,
while the parameter $s$ ranges up to $O(m^2)$,
since it is the sum of $t$ distinct numbers from $\{1,2,\dots,m\}$.
Thus, the number of states of the table $A$ is $O(m^4)$,
and each state is computed in constant time from the recurrence.

After precomputing factorials and binomial coefficients in $O(m^2)$ time,
each value $F(x)$ can be computed from \Cref{lem:fx-formula}
by summing over all relevant values of $s$.
Since the range of $s$ has size $O(m_x^2)$,
this takes $O(m_x^2)$ time for a fixed node $x$.
Therefore, all values $F(x)$ can be computed in $O(m^3)$ time in total.

Finally, once all values $F(x)$ are known,
the values $S(x)$ for all nodes $x$ of the tree can be computed
in linear time by processing the nodes bottom-up and
applying \Cref{lem:sx-recurrence}.

Therefore, $S(\roott)$ can be computed in time $O(m^4)$, and consequently
$\outdiv(\GS(T))$ can also be computed in time $O(m^4)$.
}

It turns out that there is a simple modification of the above algorithm
so that instead of computing outer diversity of a $\GS$ domain
specified by a given binary tree $T$, it computes the tree that leads
to either the highest or the lowest outer diversity
(see \Cref{thm:most-div-gs}).
The idea is to choose the optimal split at the root.
If the two subtrees have sizes $a$ and $m-a$,
then the contribution of the root depends only on this split,
while the two subtrees can be optimized independently.
Thus, we compute optimal trees of sizes $1,2,\dots,m$ in increasing order:
for each size, we try all possible splits at the root and
combine the best already-computed subtrees.
The same dynamic program can be used for both maximizing and
minimizing the outer diversity.
Using this algorithm, we extend \Cref{fin:gs}:

\begin{finding}
  $\GScat$ and $\GSbal$ have, respectively, the highest and the lowest
  outer diversity among $\GS$ domains with up to $100$ candidates.
\end{finding}

\begin{theorem}\label{thm:most-div-gs}
  Let $\mathcal{T}_m$ denote the family of all binary trees with $m$ leaves.
  There exists an $O(m^4)$-time algorithm that finds a tree in
  $\arg\max_{T \in \mathcal{T}_m} \outdiv(\GS(T))$.
  Moreover, the same approach can be used to find a tree in
  $\arg\min_{T \in \mathcal{T}_m} \outdiv(\GS(T))$.
\end{theorem}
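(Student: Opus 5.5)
We will reuse the decomposition from the proof of \Cref{thm:outdiv-gst}. The key observation is that, by \Cref{lem:fxsu-formula} and \Cref{lem:fx-formula}, the local contribution $F(x)$ at a node $x$ depends only on two numbers: the size $m_x$ of the node and the size $t$ of its left block. It does not depend on the labels or on the internal shape of the subtrees. We therefore write $F(n,t) \defeq \sum_s f_{n,t}(s)\cdot A(n,t,s)\cdot t!\cdot(n-t)!$, where $f_{n,t}(s)=\min\{s-\frac{t(t+1)}{2},\, t(n-t)+\frac{t(t+1)}{2}-s\}$.

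\Cref{lem:sx-recurrence} then says that for a tree whose root splits into subtrees $T_\ell,T_r$ with $t$ and $n-t$ leaves,
\[
S(T)=F(n,t)+\binom{n}{t}(n-t)!\,S(T_\ell)+\binom{n}{t}t!\,S(T_r).
\]
Since all candidate labelings give isomorphic domains, $S$ depends only on the unlabeled shape. Also, by \eqref{eq:outdiv-gst}, maximizing $\outdiv(\GS(T))$ over $\mathcal{T}_m$ is the same as minimizing $S(T)$, because $m$ is fixed.

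Define $S^*(n)=\min_{T\in\mathcal{T}_n}S(T)$. The claim to prove is the optimal-substructure recurrence
\[
S^*(1)=0,\qquad S^*(n)=\min_{1\le t\le n-1}\Bigl(F(n,t)+\binom{n}{t}(n-t)!\,S^*(t)+\binom{n}{t}t!\,S^*(n-t)\Bigr).
\]

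The proof of this recurrence runs as follows.
\begin{itemize}
\item[$\le$:] Pick the minimizing split and take optimal subtrees of sizes $t$ and $n-t$. The resulting tree has exactly the value on the right-hand side.
\item[$\ge$:] Any tree $T$ has some root split $t$. Its subtrees have $S(T_\ell)\ge S^*(t)$ and $S(T_r)\ge S^*(n-t)$. The multiplying coefficients $\binom{n}{t}(n-t)!$ and $\binom{n}{t}t!$ are positive, so replacing the subtrees by optimal ones can only decrease the value.
\end{itemize}
This monotonicity, i.e., nonnegativity of the coefficients in an additive recurrence, is the whole content of the optimality argument. The same reasoning with $\max$ in place of $\min$ gives the minimum-diversity tree.

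To reconstruct an optimal tree, we store the argmin $t$ for each $n$ and build the tree recursively top-down. This takes $O(m)$ time.

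For the running time, we first fill the table $A(n,t,s)$ for all $n\le m$ using the recurrence from the proof of \Cref{thm:outdiv-gst}. It has $O(m^4)$ entries, each computed in $O(1)$ time. Next, we compute each $F(n,t)$ by summing over the $O(n^2)$ relevant values of $s$. There are $O(m^2)$ pairs $(n,t)$, so this costs $O(m^4)$ in total. Finally, the dynamic program for $S^*$ has $m$ states, each with $O(m)$ choices of split, which is $O(m^2)$. The overall time is $O(m^4)$.

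The main subtlety is the modeling claim that $F$ and the recurrence coefficients depend only on the sizes, so that optimizing over shapes factors through sizes. This follows directly from the statements of \Cref{lem:fx-formula} and \Cref{lem:sx-recurrence}. Beyond that, the argument is routine, assuming (as the paper's complexity accounting does) unit-cost arithmetic on the large integers involved.
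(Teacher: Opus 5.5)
Your proposal is correct and takes essentially the same approach as the paper. You define the size-only local cost $F(n,t)$, prove the optimal-substructure recurrence for $S^*(n)$ (the paper's $M_{\min}(a)$) by the same two inequalities using the positive coefficients from \Cref{lem:sx-recurrence}, reconstruct the tree from stored splits, and bound the running time the same way, with the $O(m^4)$ dominated by the table $A$ and the computation of all $F(n,t)$; swapping $\min$ for $\max$ handles the least diverse tree exactly as in the paper.
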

\appendixproof{Theorem}{thm:most-div-gs}{
  For a tree $T$ whose leaves are labeled by candidates from $C$,
  we write $S(T)$ for the total swap distance from all votes in $\calL(C)$
  to the domain $\GS(T)$, that is,
  \[
    \textstyle S(T) \defeq \sum_{u \in \calL(C)} \swap(u,\GS(T)).
  \]
  This is the same quantity as $S(\roott)$ in the proof of
  \Cref{thm:outdiv-gst},
  where $\roott$ is the root of $T$.
  By \eqref{eq:outdiv-gst}, for every fixed $m$,
  maximizing $\outdiv(\GS(T))$ over $\mathcal{T}_m$ is equivalent to
  minimizing $S(T)$,
  and minimizing $\outdiv(\GS(T))$ is equivalent to maximizing $S(T)$.

  We first describe how to minimize $S(T)$ for a fixed number $m$ of leaves.
  For that we define
  \begin{align}
    M_{\min}(a) \defeq \min_{T \in \mathcal{T}_a} S(T).\label{eq:def-m-min-a}
  \end{align}
  In particular, any tree $T \in \mathcal{T}_m$ achieving $M_{\min}(m)$
  minimizes $S(T)$,
  and therefore maximizes $\outdiv(\GS(T))$.

  Fix a binary tree $T$ and an internal node $x$ of $T$.
  By \Cref{lem:sx-recurrence},
  once the sizes of the two subtrees of $x$ are fixed,
  the local contribution of $x$ to $S(T)$ is independent of the shapes
  of these subtrees.
  More precisely, by \Cref{lem:fx-formula},
  this local contribution depends only on the number $a$ of leaves below $x$
  and on the number of leaves $t$ in the left subtree of $x$.
  Hence, for $a \in \{2,3,\dots,m\}$ and $t \in \{1,2,\dots,a-1\}$,
  we define $F(a,t)$ as the value $F(x)$ from \Cref{lem:fx-formula}
  at any node $x$ with $m_x=a$ and $|C_\ell|=t$.

  For $a \in \{2,3,\dots,m\}$ and $t \in \{1,2,\dots,a-1\}$, we define
  \begin{align}
    M_{\min}(a,t)
    \defeq F(a,t)
      + \binom{a}{t} \cdot (a-t)! \cdot M_{\min}(t)
      + \binom{a}{a-t} \cdot t! \cdot M_{\min}(a-t).\label{eq:def-m-min-a-t}
  \end{align}
  In other words,
  $M_{\min}(a,t)$ is the value of $S(T)$ for a tree $T$
  whose root split has sizes $t$ and $a-t$,
  and whose two subtrees achieve $M_{\min}(t)$ and $M_{\min}(a-t)$,
  respectively.

  The next lemma shows that $M_{\min}(a)$ is obtained by minimizing
  $M_{\min}(a,t)$ over all possible root splits $t$.
  This recurrence will be used by the dynamic program.
  \begin{lemma}\label{lem:mmin-gs-recurrence}
    We have $M_{\min}(1)=0$ and, for every $a \geq 2$,
    \begin{align}
      M_{\min}(a) = \min_{1 \leq t \leq a-1} M_{\min}(a,t).\label{eq:mmin-gs-recurrence}
    \end{align}
  \end{lemma}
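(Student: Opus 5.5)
The plan is to prove both directions of \eqref{eq:mmin-gs-recurrence}, namely $M_{\min}(a) \le \min_t M_{\min}(a,t)$ and $M_{\min}(a) \ge \min_t M_{\min}(a,t)$, using \Cref{lem:sx-recurrence} as the only structural ingredient. The base case $M_{\min}(1)=0$ is immediate: $\mathcal{T}_1$ consists only of the single-leaf tree, and \Cref{lem:sx-recurrence} gives $S=0$ for a leaf.

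A preliminary observation is that $S(T)$ depends only on the unlabeled shape of $T$. Relabeling the leaves by a bijection of candidates induces a bijection on $\calL(C)$ that preserves swap distances and maps $\GS(T)$ onto the domain of the relabeled tree. Hence $M_{\min}(t)$ is well defined for any $t$-element candidate set. The same fact is visible from \Cref{lem:fx-formula}, where $F(x)$ depends only on $(m_x,t)$; this is what justifies the notation $F(a,t)$.

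For the upper bound, I fix $t\in\{1,\dots,a-1\}$. I choose $T_1\in\mathcal{T}_t$ attaining $M_{\min}(t)$ and $T_2\in\mathcal{T}_{a-t}$ attaining $M_{\min}(a-t)$; the minima exist because $\mathcal{T}_t$ and $\mathcal{T}_{a-t}$ are finite. I then build $T$ with a new root whose left subtree is $T_1$ and whose right subtree is $T_2$, labeled by disjoint candidate sets. Applying \Cref{lem:sx-recurrence} at the root gives $S(T)=M_{\min}(a,t)$, so $M_{\min}(a)\le M_{\min}(a,t)$ for every $t$.

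For the lower bound, I take an arbitrary $T\in\mathcal{T}_a$ with $a\ge 2$. Its root is internal, with left subtree $T_\ell$ of size $t\in[1,a-1]$ and right subtree $T_r$ of size $a-t$. \Cref{lem:sx-recurrence} gives
\[
S(T)=F(a,t)+\binom{a}{t}(a-t)!\,S(T_\ell)+\binom{a}{a-t}t!\,S(T_r).
\]
The multiplying coefficients are nonnegative, so replacing $S(T_\ell)$ by $M_{\min}(t)$ and $S(T_r)$ by $M_{\min}(a-t)$ can only decrease the right-hand side. This yields $S(T)\ge M_{\min}(a,t)\ge \min_{t'} M_{\min}(a,t')$. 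Minimizing over $T$ gives the lower bound and completes the proof.

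The only mildly delicate point is this monotonicity-plus-independence argument. The coefficients must not depend on the subtree shapes, which \Cref{lem:sx-recurrence} guarantees, and $F(a,t)$ must not depend on which candidates occupy the leaves, which follows from the relabeling remark above.
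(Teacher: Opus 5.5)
Your proposal is correct and follows essentially the same argument as the paper: the base case is immediate, the upper bound comes from joining optimal subtrees under a new root and applying \Cref{lem:sx-recurrence}, and the lower bound comes from applying \Cref{lem:sx-recurrence} at the root and bounding each subtree term below by $M_{\min}$ using the nonnegative coefficients. Your relabeling-invariance remark, showing that $M_{\min}(t)$ and $F(a,t)$ do not depend on which candidates label the leaves, makes explicit a point the paper leaves implicit.
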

  \begin{proof}[Proof of \Cref{lem:mmin-gs-recurrence}]
    The base case $M_{\min}(1)=0$ is immediate.

    We prove the recurrence by two inequalities.
    First, for any fixed $t \in \{1,2,\dots,a-1\}$ take a tree
    $T_t \in \arg\min_{T \in \mathcal{T}_t} S(T)$
    and a tree
    $T_{a-t} \in \arg\min_{T \in \mathcal{T}_{a-t}} S(T)$.
    We join them under a new root,
    with $T_t$ as the left subtree and $T_{a-t}$ as the right subtree,
    and call the resulting tree $T_{t,a-t}$.
    By definition, $S(T_t) = M_{\min}(t)$ and $S(T_{a-t}) = M_{\min}(a-t)$,
    hence
    \begin{align}
      S(T_{t,a-t})
      &\stackrel{\text{\Cref{lem:sx-recurrence}}}{=}
        F(a,t)
        + \binom{a}{t} \cdot (a-t)! \cdot S(T_t)
        + \binom{a}{a-t} \cdot t! \cdot S(T_{a-t})\nonumber\\
      &\stackrel{\phantom{\text{\Cref{lem:sx-recurrence}}}}{=}
        F(a,t)
        + \binom{a}{t} \cdot (a-t)! \cdot M_{\min}(t)
        + \binom{a}{a-t} \cdot t! \cdot M_{\min}(a-t)\nonumber\\
      &\hspace{12pt}\stackrel{\eqref{eq:def-m-min-a-t}}{=}\hspace{12pt}
        M_{\min}(a,t).\label{eq:s-of-t-split-to-t-and-a-minus-t}
    \end{align}
    Since $T_{t,a-t} \in \mathcal{T}_a$ for every $t \in \{1,2,\dots,a-1\}$,
    we have
    \begin{align*}
      M_{\min}(a) \leq S(T_{t,a-t})
      \stackrel{\eqref{eq:s-of-t-split-to-t-and-a-minus-t}}{=}
      M_{\min}(a,t) \quad \forall t \in \{1,2,\dots,a-1\}.
    \end{align*}
    Taking the minimum over all $t \in \{1,2,\dots,a-1\}$ gives the inequality
    \begin{align*}
      M_{\min}(a) \leq \min_{1 \leq t \leq a-1} M_{\min}(a,t).
    \end{align*}
    Conversely, let $T^* \in \arg\min_{T \in \mathcal{T}_a} S(T)$.
    Let the two subtrees of the root of $T^*$ be $T_\ell^*$ and $T_r^*$,
    and suppose that they have $t^*$ and $a-t^*$ leaves, respectively.
    Since $S(T_\ell^*) \geq M_{\min}(t^*)$ and $S(T_r^*) \geq M_{\min}(a-t^*)$,
    we have
    \begin{align*}
      M_{\min}(a) \stackrel{\eqref{eq:def-m-min-a}}{=} S(T^*)
      &\stackrel{\text{\Cref{lem:sx-recurrence}}}{=}
           F(a,t^*)
           + \binom{a}{t^*} \cdot (a-t^*)! \cdot S(T_\ell^*)
           + \binom{a}{a-t^*} \cdot (t^*)! \cdot S(T_r^*)\\
      &\stackrel{\phantom{\text{\Cref{lem:sx-recurrence}}}}{\geq}
           F(a,t^*)
           + \binom{a}{t^*} \cdot (a-t^*)! \cdot M_{\min}(t^*)
           + \binom{a}{a-t^*} \cdot (t^*)! \cdot M_{\min}(a-t^*)\\
      &\hspace{12pt}\stackrel{\eqref{eq:def-m-min-a-t}}{=}\hspace{12pt}
      M_{\min}(a,t^*) \geq \min_{1 \leq t \leq a-1} M_{\min}(a,t).
    \end{align*}
    This proves the reverse inequality, and hence the equality.
  \end{proof}

  We fill the table $M_{\min}$ in increasing order of $a$.
  For every $a$, we also store a value of $t$ achieving the minimum in
  \eqref{eq:mmin-gs-recurrence}.
  Then a tree achieving $M_{\min}(m)$ can be reconstructed recursively
  from the stored splits.
  By the equivalence at the beginning of the proof,
  this tree maximizes $\outdiv(\GS(T))$.

  It remains to bound the running time.
  The values $F(a,t)$ for all
  $a \leq m$ and $1 \leq t \leq a-1$ are obtained
  using the same preprocessing as in the proof of \Cref{thm:outdiv-gst}:
  we compute the table $A(a,t,s)$ and then apply \Cref{lem:fx-formula}.
  As argued there, this takes $O(m^4)$ time in total.
  Once all values $F(a,t)$ are known,
  filling the table $M_{\min}$ takes $O(m^2)$ time,
  since for every $a$ we try all $t \in \{1,2,\dots,a-1\}$.
  Thus, the overall running time is $O(m^4)$.

  To find a tree minimizing $\outdiv(\GS(T))$,
  we use the same dynamic program with maxima instead of minima.
  Hence we can find both a tree maximizing and a tree minimizing
  $\outdiv(\GS(T))$ in $O(m^4)$ time.
}

Further, using the approach from \Cref{thm:outdiv-gst}, we also obtain
an exact, closed formula for outer diversity of $\GScat$.  Namely, we
apply the algorithm to the caterpillar tree, which yields a recurrence
that we solve explicitly.  On the one hand, using this formula, we can
compute outer diversity of $\GScat$ in $O(m)$ time, much faster than
using \Cref{thm:outdiv-gst}, and, on the other hand, we see that in
the limit it approaches $\nicefrac{1}{2}$.  This extends the result of
\citet{fal-sor-szu-was:t:outer-diversity}, who established that
$\nicefrac{1}{2}$ is a lower bound on the outer diversity of $\GScat$.
To be formally correct, we
use $\GScat_m$ to denote the $\GScat$ domain over $m$ candidates (we
use the same convention for
$\GSbal$, $\SP$, and
$\SPstar$).
\begin{theorem}\label{thm:out-div-gs-cat}
  For every $m \geq 2$, it holds that
  $\textstyle
    \out(\GScat_m)
    = \frac{1}{2} + \frac{1}{m-1}
      - \frac{1}{(m-1)m} \cdot \left( H_m - \frac{1}{2} H_{\lfloor m/2 \rfloor} \right),
  $
  where $H_m = \sum_{i=1}^m 1/i$ is the $m$-th \emph{harmonic number}.
\end{theorem}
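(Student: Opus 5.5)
We apply the machinery from the proof of \Cref{thm:outdiv-gst} to the caterpillar tree. The caterpillar on $m$ candidates is a root whose one child is a leaf ($t=1$) and whose other child is the caterpillar on $m-1$ candidates. The split sizes are $1$ and $m_x-1$ regardless of which child is the leaf, and by symmetry of $f_x$ under complementing the position set we may take $t=1$. Writing $S_m \defeq S(\roott)$ for this tree, \Cref{lem:sx-recurrence} gives
\[
  S_m = F(m,1) + m\cdot S_{m-1}, \qquad S_1 = 0,
\]
since the leaf subtree contributes $S=0$ and the coefficient of the larger subtree is $\binom{m}{m-1}\cdot 1! = m$.

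Next we compute $F(m,1)$ from \Cref{lem:fx-formula}. With $t=1$, the single leaf candidate sits at position $p\in\{1,\dots,m\}$. Then $A(m,1,p)=1$ and $f(p)=\min\{p-1,\, m-p\}$, which is just the distance of that candidate to the nearer end. Hence
\[
  F(m,1) = (m-1)!\sum_{p=1}^{m}\min\{p-1,m-p\} = (m-1)!\cdot\left\lfloor \tfrac{(m-1)^2}{4}\right\rfloor .
\]
The last equality is an elementary sum: it equals $\sum_{j=0}^{m-1}\min\{j,m-1-j\}$, which is $\lfloor (m-1)^2/4\rfloor$. Dividing the recurrence by $m!$ and setting $a_m \defeq S_m/m!$ gives
\[
  a_m = a_{m-1} + \frac{1}{m}\left\lfloor\frac{(m-1)^2}{4}\right\rfloor,
  \qquad\text{so}\qquad
  a_m=\sum_{k=2}^{m}\frac{1}{k}\left\lfloor\frac{(k-1)^2}{4}\right\rfloor .
\]
By \eqref{eq:outdiv-gst}, $\out(\GScat_m)=1-\frac{4a_m}{m(m-1)}$.

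The main work, and the step I expect to be fiddly, is evaluating $a_m$ in closed form and matching it to the stated harmonic expression. I would split by the parity of $k$.
\begin{itemize}
  \item For odd $k$, $\lfloor (k-1)^2/4\rfloor = (k-1)^2/4$.
  \item For even $k$, $\lfloor (k-1)^2/4\rfloor = ((k-1)^2-1)/4$.
\end{itemize}
Hence
\[
  4a_m=\sum_{k=2}^m\left(k-2+\frac1k\right)-\sum_{\substack{2\le k\le m\\ k \text{ even}}}\frac1k
  =\frac{(m-2)(m-1)}{2}+(H_m-1)-\tfrac12 H_{\lfloor m/2\rfloor}.
\]
Substituting into $1-\frac{4a_m}{m(m-1)}$ and simplifying should give
\[
  1-\frac{m-2}{2m}+\frac{1}{m(m-1)}-\frac{H_m-\frac12H_{\lfloor m/2\rfloor}}{m(m-1)} .
\]
The leading part combines as $\frac12+\frac1m+\frac{1}{m(m-1)}=\frac12+\frac{1}{m-1}$, which is exactly the claimed formula.

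I would sanity-check the result at $m=2$, where the domain is all of $\calL(C)$ and the formula gives $\frac12+1-\frac12\left(\frac32-\frac12\right)=1$. I would also check $m=3$, where the formula gives $2/3$.
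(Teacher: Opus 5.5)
Your proof is correct and follows essentially the same route as the paper. Both use the caterpillar recurrence from \Cref{lem:sx-recurrence} with $t=1$, the local cost $(m-1)!\sum_{p}\min\{p-1,m-p\}$ from \Cref{lem:fx-formula}, and a parity split giving $\frac{(m-1)(m-2)}{2}+H_m-1-\frac12 H_{\lfloor m/2\rfloor}$. One slip in your sanity check: at $m=3$ the formula gives $1-\frac16\cdot\frac43=\frac79$, not $\frac23$, and $\frac79$ agrees with a direct count (with $a$ the root's leaf child, only $b\pref a\pref c$ and $c\pref a\pref b$ lie outside the domain, each at distance $1$, so $\ansd=\frac19$).
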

\appendixproof{Theorem}{thm:out-div-gs-cat}{
  We fix an integer $m \geq 2$.
  Let $T$ be the caterpillar tree defining $\GScat_m$.
  We compute $\out(\GScat_m)$ by applying the algorithm from
  \Cref{thm:outdiv-gst} to the tree $T$.
  Thus, throughout the proof, we use the notation from the proof of
  \Cref{thm:outdiv-gst}.

  For every $q \in \{2,\dots,m\}$,
  let $x_q$ be the unique node of $T$ whose subtree has exactly $q$ leaves,
  and let $x_1$ be an arbitrary child of $x_2$.
  In particular, $x_m$ is the root of $T$ and $x_1$ is a leaf.
  For every $q \in \{2,\dots,m\}$, the node $x_q$ is internal:
  it has one child that is a leaf and the other child is $x_{q-1}$.
  Since swapping the two children of an internal node does not change the
  induced $\GS$ domain,
  we may assume that the leaf child is the left child.
  Thus, we have $m_{x_q}=q$ and $t=1$.

  To compute the outer diversity,
  we first compute the local contribution $F(x_q)$.
  This value will then be used in the recurrence for $S(x_q)$.
  Since $A(q,1,s)=1$ and $f_{x_q}(s) = \min\{s-1,q-s\}$
  for every $s \in \{1,\dots,q\}$, we get
  \begin{align}
    F(x_q)
    &\stackrel{\text{\Cref{lem:fx-formula}}}{=}
      \sum_{s=1}^q f_{x_q}(s) \cdot A(q,1,s) \cdot 1! \cdot (q-1)!
    = (q-1)! \cdot \sum_{s=1}^q \min\{s-1,q-s\}.\label{eq:fxq-formula}
  \end{align}
  Applying \Cref{lem:sx-recurrence} to $x_q$,
  and using the fact that the leaf child contributes zero, gives
  \begin{align*}
    \frac{S(x_q)}{q!}
    &= \frac{1}{q!} \cdot F(x_q)
       + \frac{1}{q!} \cdot \binom{q}{1} \cdot (q-1)! \cdot 0
       + \frac{1}{q!} \cdot \binom{q}{q-1} \cdot 1! \cdot S(x_{q-1})\\
    &= \frac{F(x_q)}{q!} + \frac{S(x_{q-1})}{(q-1)!}
    \stackrel{\eqref{eq:fxq-formula}}{=}
      \frac{1}{q} \cdot \sum_{s=1}^q \min\{s-1,q-s\}
    + \frac{S(x_{q-1})}{(q-1)!}.
  \end{align*}
  Since $S(x_1)=0$, iterating the above recurrence gives
  \begin{align}
    \frac{S(x_m)}{m!}
    = \sum_{q=2}^m \frac{1}{q}
      \cdot \sum_{s=1}^q \min\{s-1,q-s\}.\label{eq:sxm-over-m-factorial}
  \end{align}
  Now we analyze the term
  $\frac{1}{q} \cdot \sum_{s=1}^q \min\{s-1,q-s\}$.
  If $q$ is even, then we have
  \[
      \frac{1}{q} \cdot \sum_{s=1}^q \min\{s-1,q-s\}
    = \frac{1}{q} \cdot \sum_{s=0}^{\frac{q}{2}-1} 2s
    = \frac{(\frac{q}{2}-1)\frac{q}{2}}{q}
    = \frac{q-2}{4}.
  \]
  If $q$ is odd, then $q=2i+1$ for some integer $i$ and we have
  \[
      \frac{1}{q} \cdot \sum_{s=1}^q \min\{s-1,q-s\}
    = \frac{1}{q} \cdot \left( i + \sum_{s=0}^{i-1} 2s \right)
    = \frac{i^2}{q}
    = \frac{q^2-2q+1}{4q}
    = \frac{q-2}{4}+\frac{1}{4q}.
  \]
  Hence we obtain
  \begin{align}
    \frac{4 \cdot S(x_m)}{m!}
    &\stackrel{\eqref{eq:sxm-over-m-factorial}}{=}
    \sum_{q=2}^m \frac{4}{q} \cdot \sum_{s=1}^q \min\{s-1,q-s\}\nonumber\\
    &=  \sum_{q=2}^{m} (q-2)
      + \sum_{\substack{2 \leq q \leq m\\q \text{ is odd}}} \frac{1}{q}
    = \frac{(m-1)(m-2)}{2}
      +\left(H_m -1 -\frac{1}{2}H_{\lfloor m/2 \rfloor} \right),\label{eq:sxm-over-m-factorial-final}
  \end{align}
  since the second sum is the sum of reciprocals of odd integers between
  $3$ and $m$.
  Substituting this expression into the formula for $\out(\GScat_m)$ gives
  \begin{align*}
    \out(\GScat_m)
    &\stackrel{\eqref{eq:outdiv-gst}}{=}
        1 -\frac{2 \cdot S(x_m)}{\binom{m}{2} \cdot m!}
      = 1 -\frac{1}{(m-1)m} \cdot \frac{4 \cdot S(x_m)}{m!}\\
    &\stackrel{\eqref{eq:sxm-over-m-factorial-final}}{=}
      1-\frac{1}{(m-1)m}
        \cdot \left(
          \frac{(m-1)(m-2)}{2} + H_m -1 -\frac{1}{2}H_{\lfloor m/2 \rfloor}
        \right)\\
    &= \frac{1}{2} + \frac{1}{m-1} -\frac{1}{(m-1)m}
         \cdot \left(
           H_m - \frac{1}{2}H_{\lfloor m/2 \rfloor}
         \right),
  \end{align*}
  which finishes the proof.
}

\begin{corollary}\label{col:cat}
  It holds that
  $\lim_{m \rightarrow \infty}\out(\GScat_m) = \frac{1}{2}$.
\end{corollary}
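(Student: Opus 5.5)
The plan is to take the limit directly in the closed formula of \Cref{thm:out-div-gs-cat}. That theorem expresses $\out(\GScat_m)$ as $\frac{1}{2}$ plus two correction terms, so it suffices to show that both corrections tend to $0$.

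The first correction, $\frac{1}{m-1}$, clearly tends to $0$. For the second, we use the elementary bound $H_n \le 1 + \ln n$ for $n \ge 1$, together with $H_{\lfloor m/2 \rfloor} \ge 0$. These give
\[
  0 \le H_m - \tfrac{1}{2} H_{\lfloor m/2 \rfloor} \le H_m \le 1 + \ln m .
\]
Hence
\[
  0 \le \frac{1}{(m-1)m}\left(H_m - \tfrac{1}{2}H_{\lfloor m/2 \rfloor}\right) \le \frac{1+\ln m}{(m-1)m},
\]
and the right-hand side tends to $0$ as $m \to \infty$. By the squeeze theorem the second correction also vanishes, so $\lim_{m\to\infty}\out(\GScat_m) = \frac{1}{2}$.

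There is no real obstacle here. The only point needing care is that $H_m$ grows, so it must be dominated by the $\Theta(m^2)$ denominator, and the logarithmic bound on $H_m$ handles this.
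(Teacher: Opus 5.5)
Your proposal is correct and follows the paper's route: the corollary is read off directly from the closed formula of Theorem~\ref{thm:out-div-gs-cat}, and the paper simply notes that the formula tends to $\frac{1}{2}$. Your bound $0 \le H_m - \frac{1}{2}H_{\lfloor m/2\rfloor} \le 1+\ln m$ supplies the short squeeze argument that the paper leaves implicit.
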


The case of $\GSbal$ is more complicated and there does not seem to be
an easy way of applying \Cref{thm:outdiv-gst} to obtain an exact
formula for outer diversity.  Nonetheless, a somewhat involved proof
leads to an asymptotic formula and the limit value.

\begin{theorem}
\label{thm:gs-bal-conv}
  For a $\GSbal$ domain over $m$ candidates we have
  $\out(\GSbal) = O(1/\sqrt{m})$.
\end{theorem}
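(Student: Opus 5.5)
We would use the decomposition from the proof of \Cref{thm:outdiv-gst}. By \Cref{lem:sx-recurrence}, $S(\roott)/m!$ is the sum over internal nodes $x$ of $F(x)/m_x!$. Indeed, unrolling the recurrence, a node $x$ contributes $F(x)\cdot m!/m_x!$ to $S(\roott)$, because the multiplier $\binom{m_x}{t}(m_x-t)! = m_x!/t!$ telescopes along the path to the root. Moreover, $F(x)/m_x!$ equals $\mathbb{E}[\swap(u,D_x)]$ for a uniformly random $u\in\calL(C_x)$. Since $\out = 1-2S(\roott)/(\binom m2 m!)$, it suffices to show
\[
\sum_x \mathbb{E}[\swap(u,D_x)] \ \ge\ \frac12\binom m2 - O(m^{3/2}).
\]

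The key identity concerns a node $x$ with blocks of sizes $t$ and $m_x-t$. Here $\alpha+\beta = t(m_x-t)$, and $\mathbb{E}[\alpha]=t(m_x-t)/2$. Hence
\[
\mathbb{E}[\min\{\alpha,\beta\}] = \frac{t(m_x-t)}{2} - \frac12\,\mathbb{E}\,|\alpha-\beta|.
\]
Every pair of candidates is separated at exactly one internal node (their lowest common ancestor), so $\sum_x t(m_x-t) = \binom m2$. The loss is therefore exactly $\frac12\sum_x \mathbb{E}|\alpha_x-\beta_x|$, and we must show this sum is $O(m^{3/2})$.

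Next we bound $\mathbb{E}|\alpha-\beta|\le \sqrt{\Var(\alpha)}$, where $\alpha$ is the Mann--Whitney-type statistic $s_x(u)-t(t+1)/2$. The rank-sum of a uniformly random $t$-subset of $[m_x]$ has variance $t(m_x-t)(m_x+1)/12$, so $\mathbb{E}|\alpha-\beta| = 2\,\mathbb{E}|\alpha-\mathbb{E}\alpha| = O(\sqrt{t(m_x-t)m_x})$. In a balanced tree, a node at depth $d$ has $m_x\approx m/2^d$ and $t\approx m_x/2$, so its loss is $O(m_x^{3/2}) = O((m/2^d)^{3/2})$. There are about $2^d$ nodes at depth $d$, so depth $d$ contributes $O(m^{3/2}2^{-d/2})$, a geometric series summing to $O(m^{3/2})$. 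Dividing by $\binom m2$ gives $\ansd \ge \frac12 - O(1/\sqrt m)$, i.e., $\out(\GSbal)=O(1/\sqrt m)$.

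The main obstacle is making the balanced-tree bookkeeping rigorous, since leaf depths differ by one and subtree sizes are only approximately $m/2^d$. We would handle this by proving that every node at depth $d$ has $m_x\le \lceil m/2^{d}\rceil+1$, or more simply by using $t(m_x-t)m_x\le m_x^3$ together with the fact that nodes at a fixed depth have disjoint leaf sets. This yields $\sum_{x\text{ at depth }d} m_x^{3/2}\le 2^d (m/2^d+2)^{3/2}$. Summing over the $O(\log m)$ depths, the additive $+2$ terms contribute only $O(m)$ in total, which does not affect the $O(m^{3/2})$ bound. The variance formula for the rank-sum is standard and can be verified directly: it is the variance of a sum of sampling-without-replacement indicators.
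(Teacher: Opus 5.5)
Your proof is correct, and up to the last step it is the paper's proof. The paper also writes $\ansd$ as $\binom{m}{2}^{-1}$ times the sum over internal nodes of the expected local separation cost; it derives this directly via lowest common ancestors rather than by unrolling the recurrence. It then uses $\min\{\alpha,\beta\}=\frac{\alpha+\beta}{2}-\frac{|\alpha-\beta|}{2}$, passes to the variance by Cauchy--Schwarz (computing the same value $t(m_x-t)(m_x+1)/12$ via pairwise covariances), and uses $\sum_x t(m_x-t)=\binom{m}{2}$.

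The difference is the aggregation over nodes, and there your route is the one that actually delivers the stated rate. Write $b_x$ for the smaller block size at $x$. The paper turns the per-node estimate into the relative bound $1-\gamma/\sqrt{b_x}$, using that the larger block has at most $2b_x$ candidates. It then bounds $\sum_x b_x^{3/2}\le(\sum_x b_x)^{3/2}$ by superadditivity, together with the claim $\sum_x b_x\le m$. That claim fails for balanced trees: for $m=2^h$ the sum is $\frac{m}{2}\log_2 m$. So the paper's chain only gives $O((\log m)^{3/2}/\sqrt{m})$. Superadditivity throws away exactly the geometric decay across depths that your level-by-level sum exploits. Your argument also needs no block-ratio property, only that subtrees at depth $d$ have $O(m/2^d)$ leaves.

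Three small repairs are needed.

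First, the variance step should read $\mathbb{E}|\alpha-\beta|=2\,\mathbb{E}|\alpha-\mathbb{E}\alpha|\le 2\sqrt{\Var(\alpha)}$, not $\sqrt{\Var(\alpha)}$. This is harmless.

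Second, under the paper's definition of balanced (leaf depths differ by at most one), the bound $m_x\le\lceil m/2^d\rceil+1$ is false. Take height $5$, let the eight depth-$4$ nodes below the root's left child each have two leaf children, and let the eight depth-$4$ nodes below its right child be leaves. Then $m=24$ and the root splits $16/8$, while $\lceil 24/2\rceil+1=13$. Use instead $m_x\le 2^{H-d}<2m/2^d$, where $H$ is the height; this holds because levels $0,\dots,H-1$ are complete, so $m>2^{H-1}$.

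Third, disjointness of the leaf sets at depth $d$ does not by itself give your per-depth estimate. Combined with the per-node bound above it does: $\sum_{x\text{ at depth }d}m_x^{3/2}\le(\max_x m_x)^{1/2}\cdot m\le\sqrt{2}\,m^{3/2}2^{-d/2}$, which sums over $d$ to $O(m^{3/2})$.
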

\appendixproof{Theorem}{thm:gs-bal-conv}{
We will prove the thesis by showing that there exists a constant $\beta \in \mathbb{R}$
such that
\begin{equation}
\label{eq:gs-bal-conv:goal}
    \ansd(\GSbal) = 1/2 - \beta/\sqrt{m}.
\end{equation}

For a rooted binary tree $T$ with $C$ as a set of leaves,
let $I(T)$ denote the set of internal nodes with exactly two children in $T$.
Then, for every $x \in I(T)$ by $d_1(x)$ and $d_2(x)$ let us denote
the two children of $x$,
and let $D_i(x) \subseteq C$
be a subset of all leaves
in the subtree rooted in $d_i(x)$ for both $i \in \{1,2\}$
(note that $D_i(x) = \{d_i(x)\}$ if $d_i(x)$ is a leaf).
For clarity, we will assume that $|D_1(x)| \ge |D_2(x)|$.

Let $A$ and $B$  be arbitrary disjoint subsets that form a 2-partition of the set of candidates $C$,
and let us denote $a = |A|$ and $b = |B|$.
By $A,B$-\emph{score} of a vote $v$
let us denote the number of pairs of candidates $c \in A$ and $c' \in B$,
such that $c$ is before $c'$, i.e.,
\[
    \sigma_{A,B}(v) = \sum_{c \in A, c' \in B} [c \succ_v c'],
\]
where $[\varphi]$ is an Iverson bracket giving value $1$ if $\varphi$ is true.
Observe that $\sigma_{A,B}(v)$ is between $0$,
which is obtained when every candidate in $B$ is before every candidate in $A$,
and $a \cdot b$,
which we get when all candidates in $A$ are before that in $B$.
Furthermore, observe that for a vote $v$ taken from $\mathcal{L}(C)$ uniformly at random,
the expected value of its $A,B$-\emph{score} is equal to
\begin{equation}
\label{eq:gs-bal-conv:sigma-exp-value}
    \mathbb{E}(\sigma_{A,B}(v)) = \sum_{c \in A, c' \in B} \mathbb{E}([c \succ_v c']) =\sum_{c \in A, c' \in B} 1/2 = a \cdot b / 2.
\end{equation}
This corresponds to the expected number of swaps needed
to put all candidates in $A$ before that in $B$.
However, we are interested in expected number of swaps
needed to separate candidates $A$ and $B$,
which is what we need to put candidates in $A$ before that in $B$
or vice versa, whichever requires less swaps.
Let us denote this value by
\[
    X(a,b) = \mathbb{E}(\min(\sigma_{A,B}(v), a\cdot b - \sigma_{A,B}(v))).
\]
Note that we use $a$ and $b$ as the arguments for $X$,
as its value depends only on the sizes of $A$ and $B$,
not on the sets themselves.

\begin{lemma}
\label{lem:gs-bal-conv:ansd-as-series}
    For every binary tree $T$ with $m$ leaves it holds that
    \[
        \ansd(\GS (T)) =
        \frac{1}{\binom{m}{2}}\sum_{x \in I(T)} X(|D_1(x)|,|D_2(x)|).
    \]
\end{lemma}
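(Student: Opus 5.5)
The plan is to reuse the decomposition from the proof of \Cref{thm:outdiv-gst} and then read it probabilistically. By definition,
\[
\ansd(\GS(T)) = \frac{1}{\binom{m}{2}}\cdot \frac{1}{m!}\sum_{u \in \calL(C)} \swap(u,\GS(T)) = \frac{1}{\binom{m}{2}}\,\mathbb{E}_u\bigl[\swap(u,\GS(T))\bigr],
\]
with $u$ uniform on $\calL(C)$. So it suffices to show
\[
\mathbb{E}_u\bigl[\swap(u,\GS(T))\bigr]=\sum_{x\in I(T)} X(|D_1(x)|,|D_2(x)|).
\]

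\textbf{Step 1: unroll the distance decomposition.} Equation~\eqref{eq:swap-u-gstx} states that, for an internal node $x$ with children $\ell,r$,
\[
\swap(u,\GS(T_x)) = \swap(u,D_x) + \swap(u|_{C_\ell},\GS(T_\ell)) + \swap(u|_{C_r},\GS(T_r)).
\]
Apply this recursively, by induction on the height of the subtree, using $\swap(\cdot,\GS(\text{leaf}))=0$. This gives
\[
\swap(u,\GS(T))=\sum_{x\in I(T)} \swap(u|_{C_x}, D_x),
\]
where $D_x$ is the two-block domain at $x$ over $C_x$. Any nodes with a single child can be contracted without changing the domain, which is why only $I(T)$ appears.

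\textbf{Step 2: identify each local term with $\min(\sigma,ab-\sigma)$.} By the proof of \Cref{lem:fxsu-formula}, $\swap(w,D_x)=\min\{\alpha,\beta\}$, where
\begin{itemize}
\item $\alpha$ is the number of pairs $(c,c')$ with $c\in C_r$, $c'\in C_\ell$ and $w\colon c\pref c'$;
\item $\beta$ is the number of pairs with the opposite orientation.
\end{itemize}
Set $A=D_1(x)$ and $B=D_2(x)$, so $\{A,B\}=\{C_\ell,C_r\}$, and put $a=|A|$, $b=|B|$. Then one of $\alpha,\beta$ equals $\sigma_{A,B}(w)$ and the other equals $ab-\sigma_{A,B}(w)$. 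Hence
\[
\swap(w,D_x)=\min\bigl(\sigma_{A,B}(w),\,ab-\sigma_{A,B}(w)\bigr),
\]
and the roles of the two children do not matter.

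\textbf{Step 3: take expectations.} Linearity of expectation gives
\[
\mathbb{E}_u\bigl[\swap(u,\GS(T))\bigr]=\sum_{x\in I(T)}\mathbb{E}_u\bigl[\min(\sigma_{A,B}(u|_{C_x}),\,ab-\sigma_{A,B}(u|_{C_x}))\bigr].
\]
When $u$ is uniform on $\calL(C)$, the restriction $u|_{C_x}$ is uniform on $\calL(C_x)$: every $w\in\calL(C_x)$ has exactly $m!/|C_x|!$ extensions. So each summand is exactly $X(a,b)$. Here $X(a,b)$ was defined via a uniform vote over a ground set that is partitioned by $A,B$; it depends only on $a,b$ and on the relative order of $A\cup B$, so it makes no difference that $A\cup B=C_x$ rather than the whole of $C$.

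\textbf{Main obstacle.} The only delicate point is this last identification. The quantity $X$ was introduced with $A,B$ partitioning all of $C$, whereas at a non-root node $A\cup B=C_x\subsetneq C$. I will handle it either by the restriction-uniformity argument above, or, equivalently, by observing that $\sigma_{A,B}(u)$ depends only on $u|_{A\cup B}$. Everything else is bookkeeping already carried out in \Cref{lem:fxsu-formula,lem:sx-recurrence}.
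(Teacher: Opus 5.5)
Your proof is correct, but it is organized differently from the paper's.

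The paper proves the lemma from scratch. It writes $\swap(v,u)$ as a sum of disagreement indicators over all pairs of candidates and groups the pairs by their lowest common ancestor $x$ in $T$. It then exchanges $\min_{v\in\GS(T)}$ with $\sum_{x\in I(T)}$, which is justified because the relative order of $D_1(x)$ and $D_2(x)$ can be chosen independently at different nodes. Finally, it reads off each inner minimum as $\min(\sigma_{D_1(x),D_2(x)}(u),\,|D_1(x)|\cdot|D_2(x)|-\sigma_{D_1(x),D_2(x)}(u))$.

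You instead unroll the three-part split \eqref{eq:swap-u-gstx} by induction. For the local term you use the inversion-count description from the proof of \Cref{lem:fxsu-formula}.

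Both arguments rest on the same fact: block orientations are independent across nodes. The trade-off is as follows:
\begin{itemize}
\item Your route is shorter and ties the lemma to the algorithm of \Cref{thm:outdiv-gst}. However, it inherits the independence fact from the earlier recurrence, whose justification in the paper is rather informal.
\item The paper's route makes independence explicit through the pair decomposition and is self-contained.
\end{itemize}

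Your Step~3 is more careful than the paper's. The paper just says ``taking the expected value'', even though $X(a,b)$ was defined for $A,B$ partitioning the whole candidate set. Your observation that $u|_{C_x}$ is uniform on $\calL(C_x)$ (equivalently, that $\sigma_{A,B}(u)$ depends only on $u|_{A\cup B}$) is exactly what identifies each summand with $X(|D_1(x)|,|D_2(x)|)$ at non-root nodes.
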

\begin{proof}
    Let $C$ be a set of $m$ candidates, and let $T$ be an arbitrary binary tree over $C$. 
    Consider an arbitrary vote $u \in \mathcal{L}(C)$.
    From the definition of the swap distance we have
    \begin{align*}
        \swap(\GS(T),u) &= 
        \min_{v \in \GS(T)} \swap(v,u) \\ &=
        \min_{v \in \GS(T)} \sum_{\{c,c'\}\subseteq C}
            [(c \succ_v c' \land c' \succ_u c) \lor (c' \succ_v c \land c \succ_u c')].
    \end{align*}
    Now, observe that for every $u,v \in \mathcal{L}(C)$
    it holds that
    \begin{multline*}
        \sum_{\{c,c'\}\subseteq C^2} [(c \succ_v c' \land c' \succ_u c) \lor (c' \succ_v c \land c \succ_u c')] = \\
        \sum_{x \in I(T)} \sum_{c \in D_1(x), c' \in D_2(x)} [(c \succ_v c' \land c' \succ_u c) \lor (c' \succ_v c \land c \succ_u c')],
    \end{multline*}
    as for every pair of candidates $c,c' \in C$
    there is exactly one pair of sets $D_1(x),D_2(x)$
    in which they are counted---the ones in which $x$ is their closest common ancestor.
    Therefore,
    \[
        \swap(\GS(T),u) = 
        \min_{v \in \GS(T)}
        \sum_{x \in I(T)} \sum_{c \in D_1(x), c' \in D_2(x)}
        [(c \succ_v c' \land c' \succ_u c) \lor (c' \succ_v c \land c \succ_u c')].
    \]
    Moreover, observe that for a vote $v$ in $\GS(T)$ domain,
    the decision whether $D_1(x)$ is to be placed before $D_2(x)$
    or after it, is independent from the decision
    how $D_1(y)$ and $D_2(y)$ are to be arranged for any other internal node $y$
    (as they are either pairwise disjoint,
    or $D_1(y),D_2(y) \subseteq D_i(x)$ for $i=1$ or $i=2$).
    Hence, we can exchange the $\min_{v \in \GS(T)}$ and $\sum_{x \in I(T)}$ quantifiers,
    yielding
    \[
        \swap(\GS(T),u) = 
        \sum_{x \in I(T)} 
        \min_{v \in \GS(T)}
        \sum_{c \in D_1(x), c' \in D_2(x)}
        [(c \succ_v c' \land c' \succ_u c) \lor (c' \succ_v c \land c \succ_u c')].
    \]
    Furthermore, since in $v$, for every $x \in I(T)$,
    either every candidate in $D_1(x)$ is before every candidate in $D_2(x)$ or vice-versa,
    this is equivalent to
    \[
        \swap(\GS(T),u) = 
        \sum_{x \in I(T)}
        \min \left(
        \sum_{c \in D_1(x), c' \in D_2(x)}
        [c \succ_u c'],
        \sum_{c \in D_1(x), c' \in D_2(x)}
        [c' \succ_u c]
        \right)
    \]
    Then, the first argument of the minimum is just $(\sigma_{D_1(x),D_2(x)}(u)$,
    by the definition.
    By symmetry, the second argument is equal to
    $|D_1(x)|\cdot |D_2(x)| - \sigma_{D_1(x),D_2(x)}(u)$.
    Therefore, we get
    \[
        \swap(\GS(T),u) = 
        \sum_{x \in I(T)}
        \min(\sigma_{D_1(x),D_2(x)}(u), |D_1(x)|\cdot |D_2(x)| - \sigma_{D_1(x),D_2(x)}(u))
    \]
Taking the expected value over all $u \in \mathcal{L}(C)$
and dividing by $\binom{m}{2}$,
we obtain the thesis.
\end{proof}

\begin{lemma}
\label{lem:gs-bal-conv:partial-convergence}
There exists a constant $\gamma \in \mathbb{R}$,
such that
for every $n \in \mathbb{N}$
and $k \in \{0,\dots,n\}$,
it holds that
    \[\frac{X(n + k, n)}{(n+k)n/2} \ge 1 - \gamma/\sqrt{n}.\]
\end{lemma}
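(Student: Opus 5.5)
We reduce the statement to a bound on the deviation of $\sigma_{A,B}(v)$ from its mean. Write $a = n+k$, $b = n$, $\mu = ab/2$ and $\sigma = \sigma_{A,B}(v)$ for a uniformly random $v \in \mathcal{L}(C)$. The pointwise identity $\min(\sigma, ab-\sigma) = \mu - |\sigma - \mu|$ gives
\[
  X(a,b) = \mu - \mathbb{E}|\sigma-\mu|,
  \qquad\text{so}\qquad
  \frac{X(a,b)}{\mu} = 1 - \frac{\mathbb{E}|\sigma-\mu|}{\mu}.
\]
It therefore suffices to show $\mathbb{E}|\sigma-\mu| \le \gamma' \, ab/\sqrt{n}$ for an absolute constant $\gamma'$, and then take $\gamma = 2\gamma'$. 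By Cauchy--Schwarz, $\mathbb{E}|\sigma-\mu| \le \sqrt{\Var(\sigma)}$, so the task becomes computing or bounding $\Var(\sigma)$.

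The key step is to identify the distribution of $\sigma$. Under a uniformly random vote, the set of positions occupied by $A$ is a uniformly random $a$-subset of $[a+b]$. Moreover, $\sigma$ counts pairs $(c,c')\in A\times B$ with $c$ ranked above $c'$, which is exactly the Mann--Whitney $U$ statistic of two samples of sizes $a$ and $b$. Its variance is classical: $\Var(\sigma) = ab(a+b+1)/12$. If we prefer not to cite this, we can derive it directly from pairwise indicators. Expand $\Var(\sigma) = \sum \Cov([c_1\succ c_1'],[c_2\succ c_2'])$. Disjoint pairs are independent and contribute $0$. Identical pairs contribute $1/4$ each, giving $ab/4$ in total. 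Pairs sharing one candidate contribute $1/12$ each (for instance, $P(c\succ c_1', c\succ c_2') = 1/3$), and there are $ab(a-1) + ab(b-1)$ such ordered pairs. Summing, $\Var(\sigma) = ab/4 + ab(a+b-2)/12 = ab(a+b+1)/12$.

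Finally, we combine the bounds:
\[
  \frac{\mathbb{E}|\sigma-\mu|}{\mu} \le \frac{2\sqrt{ab(a+b+1)/12}}{ab}
  = \sqrt{\frac{a+b+1}{3ab}}.
\]
With $b = n$ and $a = n+k \ge n$, we have $(a+b+1)/(ab) = 1/b + 1/a + 1/(ab) \le 3/n$. The right-hand side is therefore at most $1/\sqrt{n}$, so $\gamma = 1$ works for all $n\ge 1$; the case $n=0$ is degenerate, since then $a = b = 0$, and should be excluded or handled by convention. The only nontrivial step is the variance computation, specifically counting the covariances of indicator pairs that share a candidate. The restriction $k \le n$ is not even needed, because $a \ge n$ is all the argument uses.
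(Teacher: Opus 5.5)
Your proof is correct and follows the paper's argument essentially step for step. You use the same identity $\min(\sigma,ab-\sigma)=\mu-|\sigma-\mu|$, the same Cauchy--Schwarz reduction to $\Var(\sigma_{A,B}(v))$, and the same covariance expansion, with contributions $1/4$, $0$, and $1/12$ and matching multiplicities. The only difference is cosmetic. You simplify the variance to $ab(a+b+1)/12$ and divide by $\mu$ directly, which gives the explicit constant $\gamma=1$ without using $k\le n$; the paper instead uses $k\le n$ to bound the variance crudely by $(6n^3+2n^2)/12$.
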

\begin{proof}
Let $v$ be a vote drawn uniformly at random from $\mathcal{L}(C)$
for $C$ consisting of $2n + k$ candidates,
and let $A,B \subseteq C$ be such that
$|A|=n+k$, $|B|=n$, and $A \cap B = \varnothing$.
Then, using $\min(x,y) = \frac{x+y}{2} - \frac{|x-y|}{2}$ transformation,
we obtain that
\begin{align}
    \notag
    X(n + k, n) &= \mathbb{E}(\min(\sigma_{A,B}(v), (n+k)n - \sigma_{A,B}(v))) \\
    \notag
    &= \frac{(n+k)n}{2} -
        \mathbb{E}\left(\left|\sigma_{A,B}(v) -  \frac{(n+k)n}{2}\right|\right).
\end{align}
From Cauchy-Schwarz inequality we get that
\(
    \mathbb{E}(|\sigma(v) - (n+k)n/2|) \le \sqrt{\mathbb{E}((\sigma(x) - ab/2)^2)},
\)
which yields
\[
    X(n + k, n) \ge 
        \frac{(n+k)n}{2} - 
        \sqrt{\mathbb{E}\left(\left(\sigma_{A,B}(v) -  \frac{(n+k)n}{2}\right)^2\right)}.
\]
Since $(n+k)n/2$ is the mean of $\sigma_{A,B}(v)$ by Equation~\eqref{eq:gs-bal-conv:sigma-exp-value},
this gives us
\begin{equation}
\label{eq:gs-bal-conv:exp-value-bound}
    X(n + k, n) \ge \frac{(n+k)n}{2} - \sqrt{\Var\left(\sigma_{A,B}(v)\right)}.
\end{equation}

In order to compute the variance of $\sigma_{A,B}(v)$,
we use the formula for the variance of the sum,
which gives us
\[
    \Var(\sigma_{A,B}(v)) =
    \sum_{c \in A, c' \in B} \sum_{d \in A, d'\in B} \Cov([c \succ_v c'][d \succ_v d']).
\]

Now, if $c = d$ and $c' = d'$, we get
\[
    \Cov([c \succ_v c'][d \succ_v d']) = \Var([c \succ_v c']) = \mathbb{E}(([c \succ_v c'] - 1/2)^2)=(1/2)^2 = 1/4.
\]
If $c \neq d$ and $c' \neq d'$, the variables $[c \succ_v c']$ and $[d \succ_v d']$ are independent,
thus
\[
    \Cov([c \succ_v c'][d \succ_v d']) = 0.
\]
If $c = d$ but $c' \neq d'$, then
\[
    \Cov([c \succ_v c'][d \succ_v d']) = 
    \mathbb{E}\big(([c \succ_v c']-1/2)([c \succ_v d']-1/2)\big).
\]
Observe that $([c \succ_v c']-1/2)([c \succ_v d']-1/2)$
is equal to $1/4$ if $c$ is either before or after both $c'$ and $d'$
while it is equal to $-1/4$ if $c$ is between $c'$ and $d'$.
Hence, out of all $6$ orderings of these $3$ candidates,
$1/4$ is given in $4$ cases and $-1/4$ in $2$.
Thus,
\[
    \Cov([c \succ_v c'][d \succ_v d']) = 
    \frac{1}{4}\cdot\frac{4}{6} - \frac{1}{4}\cdot\frac{2}{6} = \frac{1}{4}\cdot \frac{2}{6} = \frac{1}{12}.
\]
If $c \neq d$ but $c' = d'$, the situation is analogous and
\[
    \Cov([c \succ_v c'][d \succ_v d']) = \frac{1}{12}.
\]

It remains to count the multiplicity of each case with non-zero covariance.
The number of cases when $c = d$ and $c' = d'$
is the number of pairs $c \in A$, $c' \in B$,
which is $(n+k)n$.
Each case in which $c = d$ but $c' \neq d'$
corresponds to picking one candidate from $A$ and two distinct ones from $B$,
but we count $(c',d')$ and $(d',c')$ both times,
which gives $2\binom{n+k}{2}n = (n+k)(n+k-1)n$ cases.
Analogously, there are $2(n+k)\binom{n}{2} = (n+k)n(n-1)$ cases
in which $c \neq d$ but $c' = d'$.
Thus, we get
\begin{align*}
    \Var(\sigma_{A,B}(v)) &= \frac{(n+k)n}{4} + \frac{(n+k)(n+k-1)n + (n+k)n(n-1)}{12}\\
    &= \frac{(3n^2 + 3nk) + (n^3+2n^2k+nk^2 - n^2 - nk) + (n^3 +n^2k -n^2 -nk)}{12} \\
    &= \frac{2n^3 + n^2 + nk +3n^2k+nk^2}{12}
\end{align*}

Combining this with Equation~\eqref{eq:gs-bal-conv:exp-value-bound}, gives us
\begin{align*}
    X(n + k, n) &\ge \frac{(n + k)n}{2} - \sqrt{\frac{2n^3 + n^2 + nk +3n^2k+nk^2}{12}} \\
                &\ge \frac{(n + k)n}{2} - \sqrt{\frac{6n^3 + 2n^2}{12}} \\
                & = \frac{(n + k)n}{2} - \sqrt{\frac{3n^3 + n^2}{6}},
\end{align*}
where the second inequality comes from the fact that $k \le n$.
Dividing it by $(n+k)n/2$ we obtain
\begin{align*}
    \frac{X(n + k, n)}{(n+k)n/2} &\ge 1 -  \sqrt{\frac{3n^3 + n^2}{6}}\cdot \frac{2}{(n+k)n} \\
    &\ge 1 -  \sqrt{\frac{3n^3 + n^2}{6}}\cdot \frac{2}{n^2}\\
    &= 1- O(1/\sqrt{n}),
\end{align*}
where the second inequality comes from the fact that $k \ge 0$.
This concludes the proof of the lemma.
\end{proof}

\begin{lemma}
\label{lem:gs-bal-conv:m-series-convergence}
For every binary tree $T$
with $m$ leaves
it holds that
    \[\sum_{x \in I(T)} |D_1(x)|\cdot |D_2(x)| = \binom{m}{2}.\]
\end{lemma}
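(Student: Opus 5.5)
The plan is a double-counting argument over unordered pairs of candidates, reusing the observation from the proof of \Cref{lem:gs-bal-conv:ansd-as-series}. The right-hand side $\binom{m}{2}$ counts all unordered pairs $\{c,c'\}$ of distinct leaves of $T$. On the left-hand side, the term $|D_1(x)|\cdot|D_2(x)|$ counts the pairs $(c,c')$ with $c \in D_1(x)$ and $c' \in D_2(x)$. Equivalently, it counts the unordered pairs with one element in each child subtree of $x$, since $D_1(x)$ and $D_2(x)$ are disjoint. It therefore suffices to show that every unordered pair of distinct leaves is counted for exactly one $x \in I(T)$.

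The key step is to take $x$ to be the lowest common ancestor of $c$ and $c'$. Since $c \neq c'$, the node $x$ is not a leaf. Moreover, $c$ and $c'$ lie in subtrees of two different children of $x$, because otherwise a child of $x$ would be a lower common ancestor. Hence $x$ has two children, so $x \in I(T)$, and $\{c,c'\}$ is counted at $x$.

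For uniqueness, suppose the pair is counted at some $y \in I(T)$. Then both $c$ and $c'$ lie below $y$, so $y$ is a common ancestor and must be an ancestor of $x$ (or $x$ itself). If $y$ were a proper ancestor of $x$, then $c$ and $c'$ would both lie in the subtree of the single child of $y$ that leads to $x$. That contradicts their being separated at $y$, so $y = x$.

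The only subtlety I expect is nodes with a single child. The paper's definition of $I(T)$ restricts to nodes with exactly two children, so such nodes are never lowest common ancestors of distinct leaves, and the argument above already handles them. Summing over all pairs then gives the identity. An alternative check is induction on the root split: $t(m-t) + \binom{t}{2} + \binom{m-t}{2} = \binom{m}{2}$. The double-counting argument is cleaner, and I would use it.
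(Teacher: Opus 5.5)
Your proof is correct and is the same double-counting argument the paper uses: each unordered pair of distinct leaves is counted exactly once, at its lowest common ancestor. The paper only asserts this, and you additionally prove that the lowest common ancestor lies in $I(T)$ and that no other node counts the pair.
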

\begin{proof}
Let $T$ be an arbitrary tree with a set of leaves $C$, and let $m= |C|$.
Observe that
\[
    \sum_{x \in I(T)} |D_1(x)|\cdot |D_2(x)| = | \{c,c'\} \subseteq C| = \binom{m}{2},
\]
as each pair $c,c' \in C$ is counted exactly once in the sum on the left-hand-side,
for $x$ that is the closest common ancestor of $c$ and $c'$.
This concludes the proof.
\end{proof}

Having \Cref{lem:gs-bal-conv:ansd-as-series,lem:gs-bal-conv:partial-convergence,lem:gs-bal-conv:m-series-convergence},
we are now ready to prove the main thesis.

Fix an arbitrary balanced binary tree $T$ with $m$ leaves.
Observe that in a binary tree, it holds that $|D_1(x)| \le 2 \cdot |D_2(x)|$,
for every internal node $x$.
Thus, we obtain
{
\small
\begin{align*}
    \tag{by \Cref{lem:gs-bal-conv:ansd-as-series}}
    \ansd(\GS (T)) &= \frac{1}{\binom{m}{2}}\sum_{x \in I(T)} X(|D_1(x)|,|D_2(x)|) \\
    &= \frac{1}{\binom{m}{2}}\sum_{x \in I(T)} \frac{|D_1(x)|\cdot|D_2(x)|}{2} \cdot \frac{X(|D_1(x)|,|D_2(x)|)}{|D_1(x)|\cdot|D_2(x)|/2} \\
    \tag{by \Cref{lem:gs-bal-conv:partial-convergence}}
    &\ge \frac{1}{\binom{m}{2}}\sum_{x \in I(T)} \frac{|D_1(x)|\cdot|D_2(x)|}{2} \cdot \left( 1 - \frac{\gamma}{\sqrt{|D_2(x)|}}\right) \\
    \tag{as $|D_1(x)| \le 2 |D_2(x)|$}
    &\ge \frac{1}{\binom{m}{2}} \!\sum_{x \in I(T)}\!
        \left(
            \frac{|D_1(x)|\!\cdot\!|D_2(x)|}{2} \! - \!
            \gamma \cdot \frac{|D_2(x)|^2}{\sqrt{|D_2(x)|}}\!
        \right)\\
    &= \frac{1}{2}\frac{1}{\binom{m}{2}}
    \left(
    \sum_{x \in I(T)}
            |D_1(x)|\cdot|D_2(x)|
    -
    \gamma \cdot 
    \sum_{x \in I(T)}
            |D_2(x)|^{3/2}
        \right)\\
    \tag{by superadditivity}
    &= \frac{1}{2}\frac{1}{\binom{m}{2}}
    \left(
    \sum_{x \in I(T)}
            |D_1(x)|\cdot|D_2(x)|
    - \gamma \left(\sum_{x \in I(T)}|D_2(x)|\right)^{3/2}
    \right)\\
    \tag{as $\sum_{x \in I(T)}|D_2(x)|\le m$}
    &= \frac{1}{2}\frac{1}{\binom{m}{2}}
    \left(
    \sum_{x \in I(T)}
            |D_1(x)|\cdot|D_2(x)|
    \right)
    - \gamma\cdot\frac{m^{3/2}}{m^2 - m}\\
    \tag{by \Cref{lem:gs-bal-conv:m-series-convergence}}
    &= \frac{1}{2}
    - \gamma\left(1 + \frac{m}{m^2 - m}\right)\frac{1}{\sqrt{m}}\\
    \tag{as $m/(m^2-m)\le 1$ for $m\ge 2$}
    &= \frac{1}{2}
    - \frac{2\gamma}{\sqrt{m}}.
\end{align*}
}
This proves \Cref{eq:gs-bal-conv:goal} for $\beta = 2\gamma$ and concludes the proof.
}

\begin{corollary}\label{col:bal}
  It holds that
  $\lim_{m \rightarrow \infty}\out(\GSbal_m) = 0$.
\end{corollary}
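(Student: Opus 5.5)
The corollary follows from \Cref{thm:gs-bal-conv} by a direct limit argument. That theorem gives a constant $\beta$ (in its proof, $\beta = 2\gamma$ with $\gamma$ from \Cref{lem:gs-bal-conv:partial-convergence}) such that every balanced tree $T$ with $m$ leaves satisfies $\ansd(\GS(T)) \ge \tfrac{1}{2} - \beta/\sqrt{m}$. The bound holds uniformly over all balanced trees with $m$ leaves. So it does not matter which of the possibly many nonisomorphic $\GSbal_m$ domains is chosen for each $m$.

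The plan is to sandwich the outer diversity. By the definition $\out(\calD) = 1 - 2\cdot\ansd(\calD)$, the lower bound on $\ansd$ turns into
\[
  \out(\GSbal_m) \le 1 - 2\left(\tfrac{1}{2} - \tfrac{\beta}{\sqrt{m}}\right) = \tfrac{2\beta}{\sqrt{m}}.
\]
For the matching lower bound, I would use $\out(\calD) \ge 0$, which holds for every domain (as noted in \Cref{sec:outdiv}). Equivalently, $\ansd(\calD) \le \tfrac12$. A short argument: for any fixed $v \in \calD$, a uniformly random $u$ satisfies $\mathbb{E}[\swap(u,v)] = \binom{m}{2}/2$, and the distance from $u$ to the domain is at most $\swap(u,v)$.

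Combining the two bounds gives $0 \le \out(\GSbal_m) \le 2\beta/\sqrt{m}$, and the right-hand side tends to $0$ as $m \to \infty$. The squeeze theorem then gives the limit $0$.

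There is no real obstacle. The only point to check is that the constant $\beta$ in \Cref{thm:gs-bal-conv} does not depend on $m$ or on the particular balanced tree. This holds because $\gamma$ in \Cref{lem:gs-bal-conv:partial-convergence} is an absolute constant. The final chain of inequalities in that proof uses only the balancedness condition $|D_1(x)| \le 2|D_2(x)|$ and \Cref{lem:gs-bal-conv:m-series-convergence}, and both are uniform over balanced trees.
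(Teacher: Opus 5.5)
Your argument is correct and follows the paper's route: the paper states this corollary as an immediate consequence of Theorem~\ref{thm:gs-bal-conv} together with $\out \ge 0$, and your squeeze $0 \le \out(\GSbal_m) \le 2\beta/\sqrt{m}$ simply makes that explicit. One aside on your uniformity check: the final chain in that theorem's proof also uses the step $\sum_{x \in I(T)} |D_2(x)| \le m$, which is false for balanced trees (the sum is of order $m \log m$, e.g.\ it equals $12$ for the perfect tree with $m = 8$ leaves), but the theorem still holds, since bounding level by level (at most $2^d$ nodes at depth $d$, each with $|D_2(x)| \le m/2^d$) gives $\sum_{x \in I(T)} |D_2(x)|^{3/2} = O(m^{3/2})$ directly, so your derivation of the corollary is unaffected.
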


In \Cref{fig:swap}, we see that for sufficiently many candidates,
outer diversity of $\SP$ is lower than that of $\GSbal$. This
reinforces our belief that
$\lim_{m \rightarrow \infty}\outdiv(\SP_m)=0$, albeit a proof is still
missing. Another interesting consequence of \Cref{col:cat,col:bal} is
that if $\GScat$ indeed has the highest outer diversity among
Condorcet domains with a given number of candidates, then as we
considered larger and larger candidate sets, Condorcet domains would
achieve outer diversity values between $0$ and $\nicefrac{1}{2}$. In
fact, we would even expect outer diversity values to be placed densely
in between $0$ and $\nicefrac{1}{2}$. To see this, consider, e.g., $\GS$ domains
based on a tree where some fraction of candidates forms a balanced
tree, with one leaf replaced by a caterpillar tree over the remaining
candidates. Such domains would have outer diversity values between $0$
and $\nicefrac{1}{2}$.

We conclude this section by considering $\SPstar$, our representative
of $\SPT$ domains (and, more broadly, a representative of weak
Condorcet domains). This domain is very close to the general
one. Indeed, if $c$ is the candidate placed in the center of the star
graph, then $\SPstar$ includes all votes that rank $c$ either first or
second. Consequently, we have the following results.

\begin{theorem}\label{thm:out-div-sp-star}
  It holds that
  $\out(\SPstar_m) = 1 - 2/m + 4/m^2$.
\end{theorem}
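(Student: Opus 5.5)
The plan is to compute $\ansd(\SPstar_m)$ exactly, using the characterization stated just before the theorem. Here $m$ is the number of candidates, and $c$ is the center of the star. A vote is single-peaked on the star if and only if $c$ is ranked first or second: if $c$ is in position $p \geq 3$, the top-$2$ set consists of two leaves and is disconnected. Conversely, if $c$ is in the top two positions, every prefix of length at least $2$ contains $c$ and is therefore connected. So the first step is to record this domain characterization.

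The second step is to find, for an arbitrary vote $u \in \calL(C)$ with $c$ at position $p$, the value of $\swap(u,\SPstar_m)$. I claim it equals $\max\{0, p-2\}$.
\begin{itemize}
\item For the upper bound, swap $c$ upward $p-2$ times with its neighbours. This yields a vote with $c$ in position $2$, which lies in the domain.
\item For the lower bound, note that each adjacent swap changes the position of $c$ by at most one. Reaching any domain vote requires moving $c$ to position at most $2$, so at least $p-2$ swaps are needed.
\end{itemize}

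The third step is the summation. For each position $p$ there are exactly $(m-1)!$ votes with $c$ in position $p$. Hence
\[
\sum_{u \in \calL(C)} \swap(u,\SPstar_m) = (m-1)! \sum_{p=3}^{m} (p-2) = (m-1)!\cdot\frac{(m-2)(m-1)}{2}.
\]
Dividing by $m!\binom{m}{2}$ gives
\[
\ansd(\SPstar_m) = \frac{(m-1)!\,(m-2)(m-1)/2}{m!\,\cdot\, m(m-1)/2} = \frac{m-2}{m^2}.
\]
Therefore
\[
\out(\SPstar_m) = 1 - \frac{2(m-2)}{m^2} = 1 - \frac{2}{m} + \frac{4}{m^2}.
\]

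There is no real obstacle here. The only point needing care is the lower bound in the second step, namely that the nearest domain vote cannot be reached more cheaply. The "each swap moves $c$ by at most one position" argument settles it cleanly. The computation also fixes the indexing convention: $m$ must count all candidates, center included, for the formula to come out as stated.
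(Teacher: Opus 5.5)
Your proposal is correct and follows the same route as the paper's proof. Both characterize $\SPstar_m$ as the votes ranking the center first or second, show that a vote with the center at position $p$ is at distance $\max\{0,p-2\}$, and sum over the $(m-1)!$ votes per position to get $\ansd(\SPstar_m)=(m-2)/m^2$. You are slightly more careful than the paper, which only asserts the domain characterization and the lower bound that your ``each swap moves $c$ by at most one position'' argument supplies; your indexing remark (that $m$ counts all candidates rather than the leaves) matches the convention the paper's proof actually uses.
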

\appendixproof{Theorem}{thm:out-div-sp-star}{
  Let $C$ be a set of $m$ candidates, and let $\SPstar_m$ be the domain over $C$.
  Let $c \in C$ be the candidate corresponding to the center of the star.
  We observe that $\SPstar_m$ is exactly the set of votes in $\calL(C)$
  in which $c$ is ranked first or second.
  Moreover, distance from any vote $v \in \calL(C)$ to $\SPstar_m$ is simply $\max(v[c]-2,0)$,
  where $v[c]$ is the position of candidate $c$ in vote $v$.
  This comes from the fact that it is enough and necessary to move candidate $c$ to the 2nd position (in cases it is not on the 1st or the 2nd position in $v$---then the distance from $v$ to $\SPstar$ is $0$).
  For a given position $i \in [m]$,
  there are $(m-1)!$ many votes in $\calL(C)$ with candidate $c$ on position $i$.
  This gives
  \[\mathbb{E}(\swap(v,\SPstar_m)) = \frac{(m-1)!}{m!} \cdot \sum_{i = 3}^m (i-2) = \frac{(m-1)(m-2)}{2m}.\]
  This implies
  \[\ansd(\SPstar_m) = \frac{\mathbb{E}(\swap(v,\SPstar_m))}{\binom{m}{2}} = (m-2)/m^2,\]
  which gives $\out(\SPstar_m) = 1 - 2/m + 4/m^2$.
}

\begin{corollary}
  It holds that
  $\lim_{m \rightarrow \infty} \out(\SPstar_m) = 1$.
\end{corollary}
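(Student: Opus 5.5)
The corollary follows from \Cref{thm:out-div-sp-star}; the plan is to compute the limit of the closed formula and then double-check that formula, since the whole argument rests on it. By \Cref{thm:out-div-sp-star}, for every $m$ we have $\out(\SPstar_m) = 1 - 2/m + 4/m^2$. Both $2/m$ and $4/m^2$ tend to $0$ as $m \to \infty$, so the limit is $1$.

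To check the formula, I would first recall the characterization used in that proof. A vote is single-peaked on the star exactly when the center $c$ is ranked first or second. If $c$ is ranked first, every prefix contains $c$ and is therefore connected. If $c$ is ranked second, the top two candidates form an edge, and every longer prefix again contains $c$. If $c$ is ranked lower, the top two candidates are two leaves, which are not adjacent. Consequently, the distance from a vote $v$ to the domain is $\max(v[c]-2,0)$. This value is sufficient, since we can move $c$ up to position $2$. It is also necessary: every swap changes the position of $c$ by at most one, and the target requires $v[c]\le 2$.

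For the averaging, the position of $c$ is uniform on $[m]$. This gives expected distance $\frac{1}{m}\sum_{i=3}^{m}(i-2) = \frac{(m-1)(m-2)}{2m}$. Dividing by $\binom{m}{2}$ yields $\ansd = \frac{m-2}{m^2}$. There is no real obstacle here; the only care needed is this normalization, and both steps are routine.

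A formula-free alternative bound also works: $0 \le \ansd(\SPstar_m) \le \frac{m-2}{\binom{m}{2}} \to 0$, because each distance is at most $m-2$. Since $\out = 1 - 2\cdot\ansd$, this again gives outer diversity tending to $1$.
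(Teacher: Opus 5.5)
Your proposal is correct and follows the paper: the corollary is obtained as the limit of the closed form $1 - 2/m + 4/m^2$ from \Cref{thm:out-div-sp-star}. Your check of that formula (center ranked first or second, distance $\max(v[c]-2,0)$, uniform position of $c$, normalization by $\binom{m}{2}$) retraces the paper's proof of the theorem. The formula-free bound $\ansd(\SPstar_m) \le (m-2)/\binom{m}{2}$ is also valid but not needed.
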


While these results are not surprising on their own, they are quite
interesting in the context of the preceding discussion: For weak
Condorcet domains, the range of outer diversity values is from $0$ to
$1$. Using an analogous meta-argument as for the case of $\GS$, but
for star graphs with one leaf replaced by a path, we expect these values
to be placed densely in this interval.

\section{Conclusions}

Our main conclusion is that outer diversity provides valuable insights
into the nature of Condorcet domains. In particular, there seems to be
a strong correlation between the number of never-middle conditions
satisfied by a domain and its outer diversity. On the other hand,
domain size is only weakly correlated with outer diversity.  Our
results strongly suggest that the group-separable/caterpillar domain
has the highest outer diversity value among Condorcet
domains. Somewhat weaker arguments suggest that Black's single-peaked
domain has the highest outer diversity value among Arrow's
single-peaked ones.
Finally, we observed some radical differences between Condorcet and
weak Condorcet domains. For example, there are weak Condorcet domains
whose outer diversity approaches $1$---namely the SP on a star graph
domain---whereas we do not believe that Condorcet domains with such
high values exist (albeit we did not establish this formally).

\section*{Acknowledgments}

Tomasz Wąs was supported by UK Engineering and Physical Sciences Research Council (EPSRC) under grant EP/X038548/1.
Piotr Faliszewski and Krzysztof Sornat were supported by NCN project AGREE (UMO-2025/58/A/ST6/00371).
During its early stages, this project
has received funding from the European Research Council (ERC) under
the European Union’s Horizon 2020 research and innovation programme
(grant agreement No 101002854).
\begin{center}
  \includegraphics[width=3.5cm]{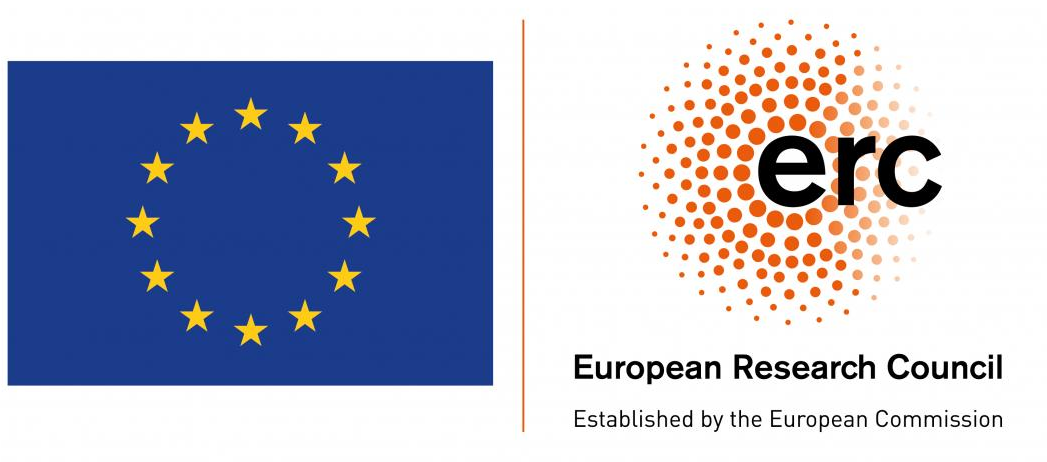}
\end{center}


\appendix
\crefalias{section}{appendix} 

\clearpage

\section{Missing Proofs}
\label{app:missing-proofs}
\appendixProofs

\section{Condorcet Domains with Seven Candidates}
\label{app:seven-candidates}

\begin{figure}[t]
    \centering
    \begin{subfigure}[t]{0.485\textwidth}
        \centering
        \includegraphics[height=5cm]{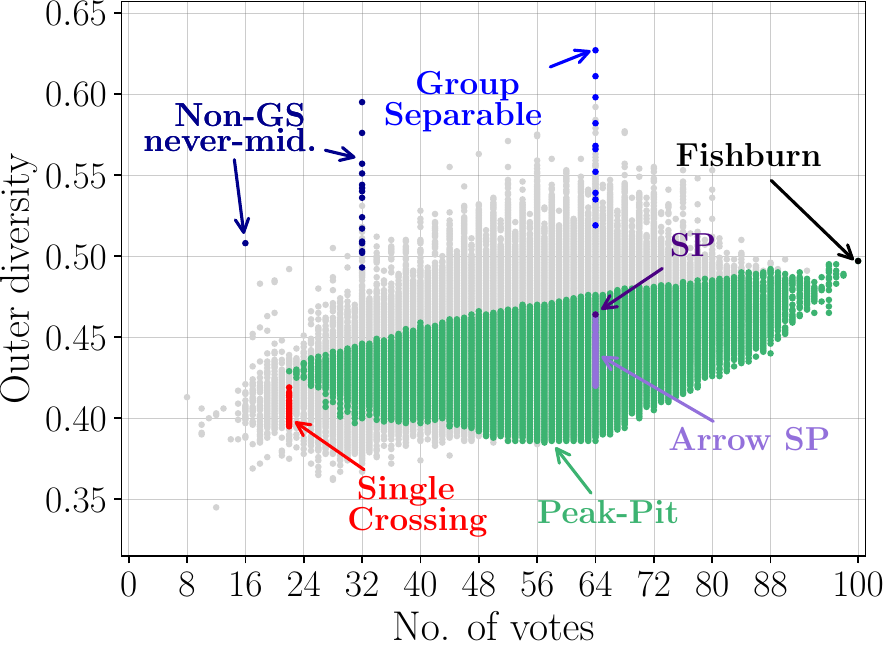}
        \caption{\label{fig:cond7}Every maximal nonisomorphic
          Condorcet domain with seven candidates is placed as a dot, with
          its $x$ coordinate giving the domain's size,
          and its $y$ coordinate giving its outer diversity. If a
          domain belongs to a certain class of domains, such as peak-pit,
          then its dot is colored accordingly; otherwise it is
          gray. We additionally include a sample of 200
          single-crossing domains.}
    \end{subfigure}
    \hfill
    \begin{subfigure}[t]{0.485\textwidth}
        \centering
        \includegraphics[height=5cm]{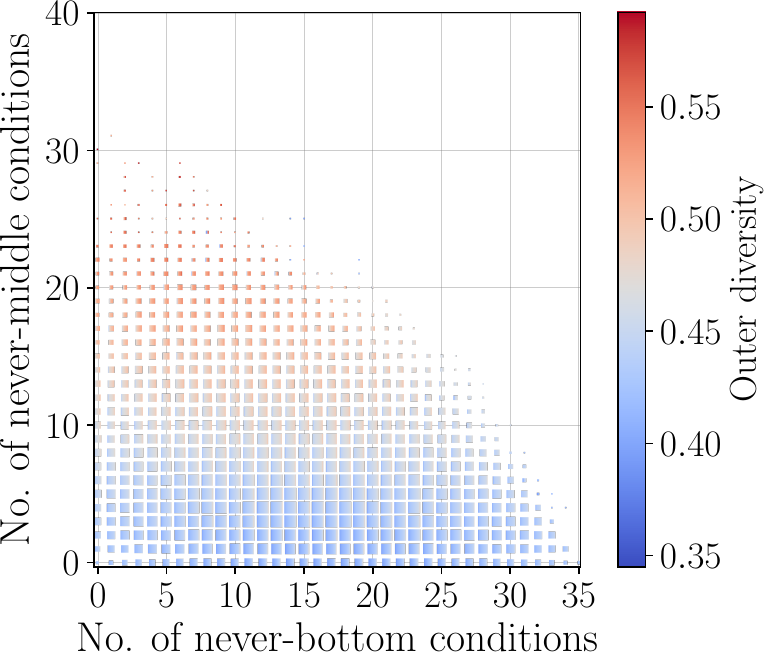}
        \caption{\label{fig:botmid7}Outer diversity of maximal
          nonisomorphic Condorcet domains depending on the number of
          satisfied never-bottom ($x$-axis) and never-middle
          ($y$-axis) conditions. The size of each square reflects the
          number of domains satisfying given numbers of conditions, and
          the horizontal color gradient presents the
          distribution of outer diversity values for these domains.}
    \end{subfigure}
    \caption{Analysis of all of the maximal Condorcet domains for seven candidates.}
    \label{fig:fullcond7}
\end{figure}

In \Cref{fig:fullcond7} we show a plot analogous to
\Cref{fig:fullcond6}, but for the case of seven candidates. Since
there are over 172 million maximal nonisomorphic Condorcet domains
with seven candidates, we included each such domain provided by
\citet{ake-lee-lit-mar-rii:j:condorcet7} with probability
$1\%$. However, in \Cref{fig:cond7} (but not \Cref{fig:botmid7}) we
did include all peak-pit domains, all Arrow's SP domains (including
Black's SP one), all domains that only satisfy never-middle conditions
(including group-separable ones) and the Fishburn domain.
We see that the overall conclusions one can draw for the case of seven
candidates are similar as those for six.

\section{Outer Diversity of Euclidean Domains}
\label{app:euclidean}

Let us formally introduce the Euclidean domains.
\begin{description}
    \item[$\boldsymbol t$D-Euclidean.]
    Let $x \colon C \to \mathbb{R}^t$ be a function that places the candidates as distinct points in $\mathbb{R}^t$. We call a vote $v$ \emph{consistent with $x$} if there is a point $x_v \in \mathbb{R}^t$ from which the candidates appear in the order prescribed by $v$ when sorted by distance. A $t$-dimensional Euclidean domain defined by $x$ contains exactly the votes consistent with $x$. 
\end{description}
Aside from the 1D case, $t$D Euclidean domains are typically neither Condorcet nor weak Condorcet. In the 1D case they are, however, special cases of both $\SP$ and $\SC$ ones.
There are very many different Euclidean domains that depend on exact 
locations of the candidates. Further, we occasionally restrict them
to only include those votes that can be generated by points in some subarea of $\mathbb{R}^t$.

For our experimental analysis, we consider two distinct subareas of 
$\mathbb{R}^t$ to serve as the sampling space for candidate 
locations. The first is the standard square-shaped subarea, 
typically defined as $[-1, 1]^t$. However, this geometry may 
introduce unwanted artifacts due to its lack of rotational symmetry; 
specifically, the increased distance from the center to the vertices 
can bias the resulting domain. For instance, a rotation of the 
candidate set $x$ could shift the regions associated with certain 
consistent votes in or out of the subarea, thereby altering the 
domain. To mitigate these effects, we also consider a circular 
subarea, which ensures rotational invariance.

To investigate how the spatial concentration of candidates 
influences the outer diversity of the induced domain, we utilize a symmetric 
Beta distribution $B(\alpha, \beta)$, where $\alpha=\beta$. By 
varying the parameter~$\alpha$, we can smoothly transition between 
distributions that are concentrated at the center of the subarea 
($\alpha > 1$) and those that are concentrated at the edges ($\alpha < 1$).

The points are generated according to the following procedures:
\begin{description}
\item[Circular Subarea.] To maintain radial consistency, for each 
point we sample an angle $\theta$ uniformly from $[0, 2\pi)$ and a 
radial distance $r$ as the square root of a value sampled from 
$B(\alpha, \alpha)$.

\item[Square Subarea.] The generation method adapts to the
  concentration parameter. For $\alpha > 1$, both coordinate axes are
  sampled independently from $B(\alpha, \alpha)$ and scaled to the
  interval $[-1, 1]$. For $\alpha \le 1$, to prevent
  over-concentration in the corners, we sample one axis from
  $B(\alpha, \alpha)$ and the other from a uniform distribution on
  $[-1, 1]$. To maintain symmetry, this assignment of distributions to
  axes is swapped for exactly half of the generated points.
\end{description}

\begin{figure}[htbp]
    \centering
    \begin{subfigure}[b]{0.48\textwidth}
        \centering
        \includegraphics[width=\textwidth]{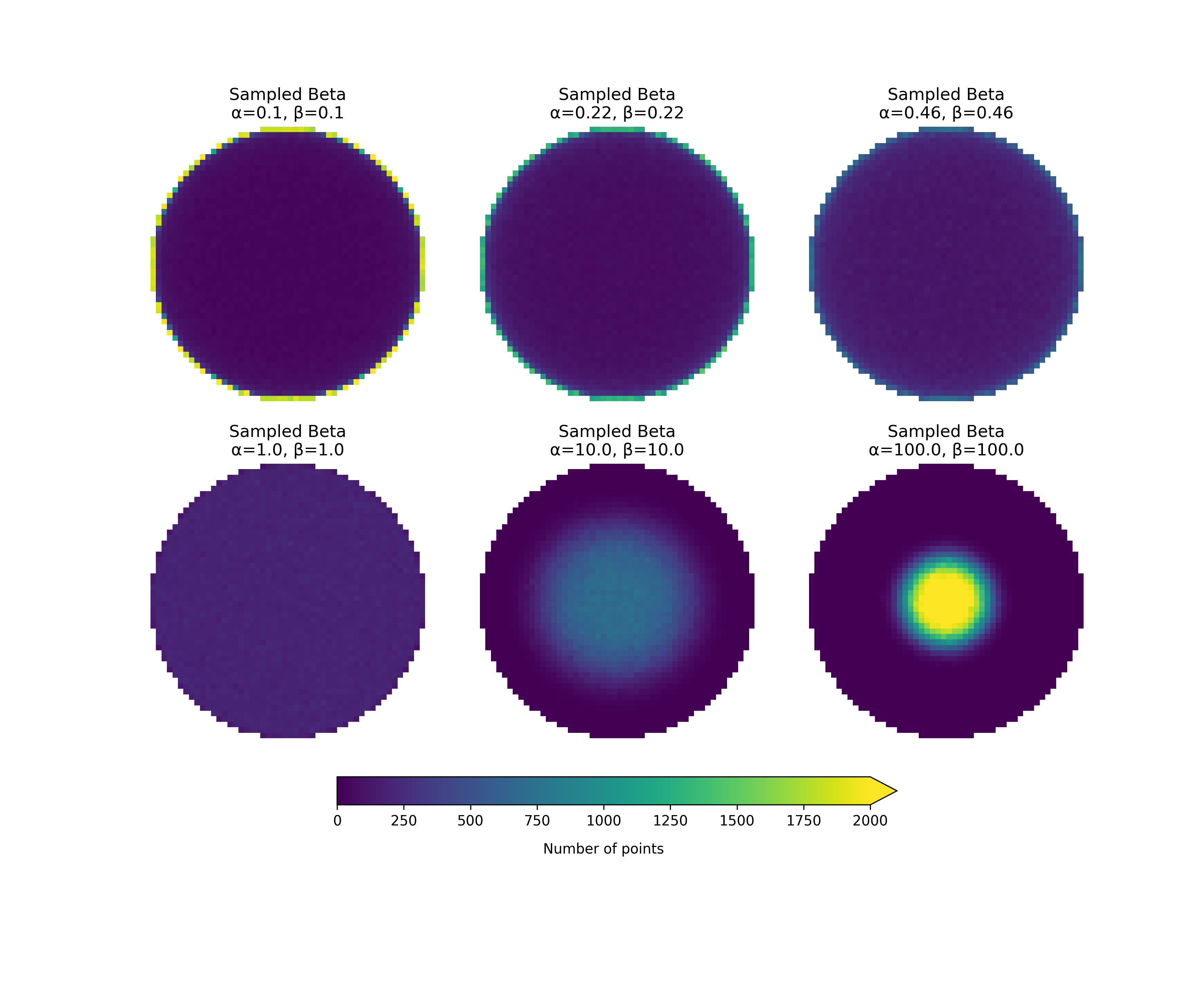}
        \caption{Distribution of points on a circular subarea with chosen parameters.}
        \label{fig:circle_test}
    \end{subfigure}
    \hfill 
    \begin{subfigure}[b]{0.48\textwidth}
        \centering
        \includegraphics[width=\textwidth]{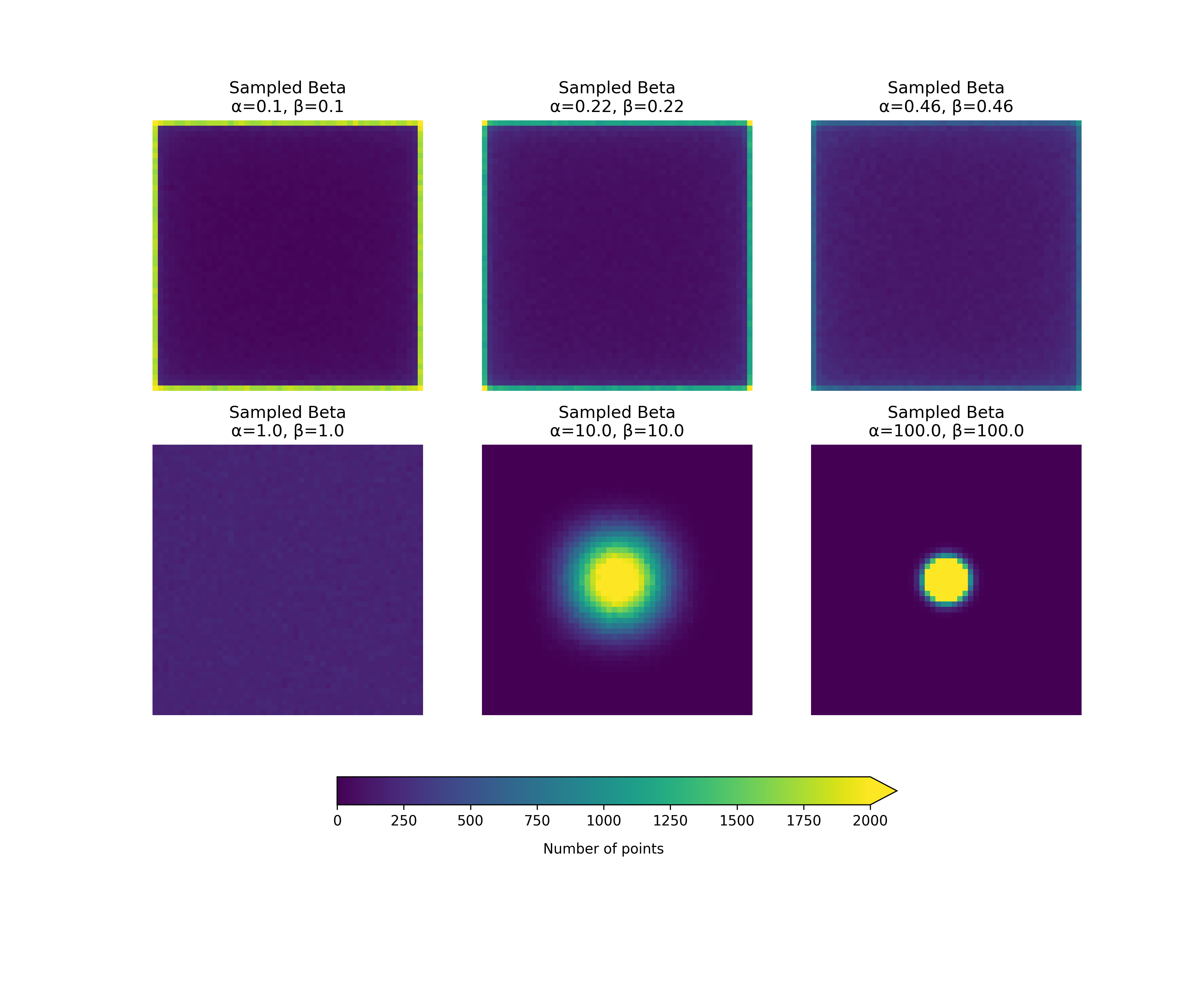}
        \caption{Distribution of points on a square subarea with chosen parameters.}
        \label{fig:square_test}
    \end{subfigure}
    
    \caption{\label{fig:betadistr}Candidate distributions within circular and square subareas using a sample size of $500,000$ points.}
\end{figure}

\begin{figure}[htbp]
    \centering
    \begin{subfigure}[b]{0.48\textwidth}
        \centering
        \includegraphics[width=\textwidth]{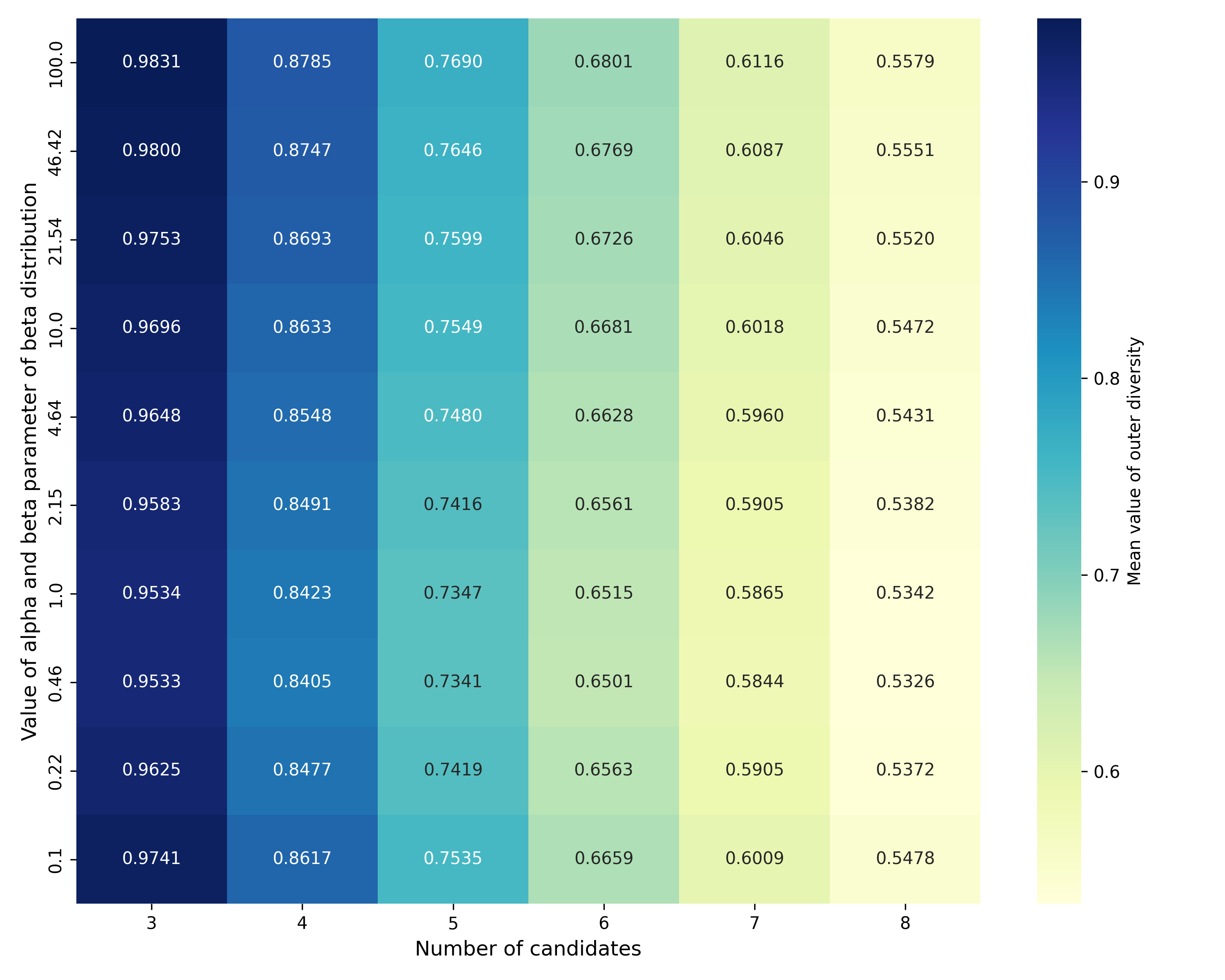}
        \caption{2D-Euclidean circular subarea}
        \label{fig:outer_diversity}
    \end{subfigure}
    \hfill 
    \begin{subfigure}[b]{0.48\textwidth}
        \centering
        \includegraphics[width=\textwidth]{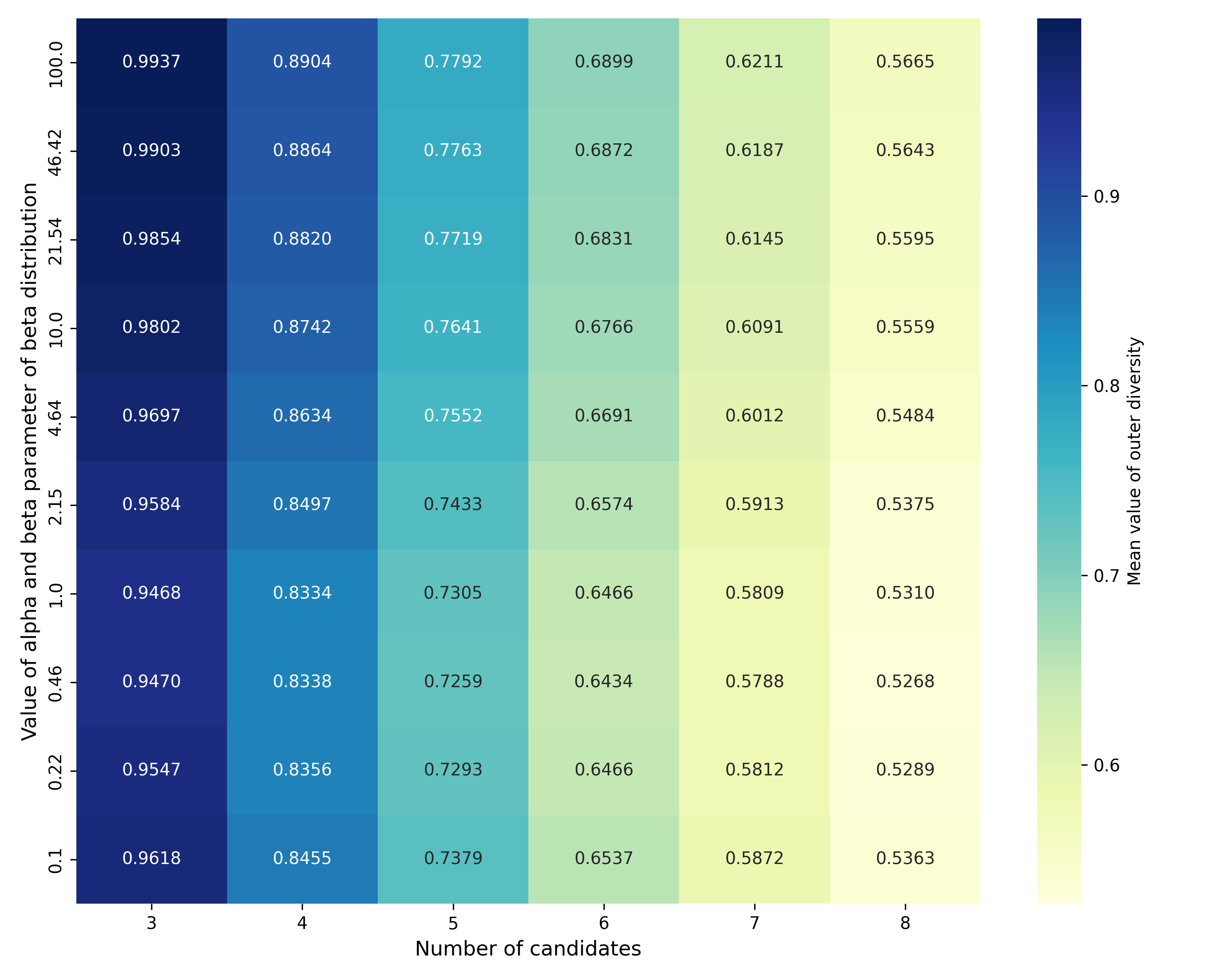}
        \caption{2D-Euclidean square subarea}
        \label{fig:heatmap_circle}
    \end{subfigure}
    
    \caption{\label{fig:betaheat}Heatmaps of mean outer diversity as a function of the number of candidates in 2D Euclidean domains, compared across circular and square subareas. The Beta distribution parameters are shown on a logarithmic scale. The $95\%$ confidence interval indicates a margin of error of at most $0.00262$, with only two exceptions reaching $0.00928$ and $0.00817$ for the square subarea at $\alpha = \beta = 0.1$ with 4 and 3 candidates, respectively. The number of generated samples per candidate count is $6000$, $4000$, $2000$, $1000$, $500$, and $200$ for $3, 4, 5, 6, 7$, and $8$ candidates, respectively.}
    \label{fig:comparison}
\end{figure}

In \Cref{fig:betadistr}, we present the spatial distributions as described above 
for the parameter settings utilized in our subsequent experiments. 
As illustrated, for identical parameter values, the distributions 
exhibit different point concentrations between the circular and 
square subareas — an effect particularly noticeable at $\alpha = \beta = 10.0$. 
However, this discrepancy does not affect our conclusions, 
as we focus on the change in outer diversity relative to the shift
in distribution, rather than absolute values for specific parameter configurations.

We calculated the mean outer diversity for both circular and square subareas, 
generating candidate positions from distributions described above 
for $m \in \{3, \ldots, 8\}$ candidates. The results, presented in \Cref{fig:betaheat}, 
show that regardless of the subarea's geometry, the outer diversity 
is higher when candidates are more concentrated at the center 
compared to the uniform distribution observed at $\alpha = \beta = 
1.0$. 
This result is expected, as a higher concentration of candidates near the center increases the number of geometric regions corresponding to individual votes that lie at least partially within the chosen subarea.

Surprisingly, an analysis of the heatmaps reveals a local minimum in 
the mean outer diversity to be likely for parameters between $0.22$ 
and $1.0$. Contrary to intuitive expectations of a monotonic 
increase as concentration in the center intensifies, the mean outer 
diversity rises again as the parameters decrease further toward 
$0.1$. This non-monotonic behavior is consistent across both the 
circular and the square subareas. In the square subarea for 3 and 4 candidates, the lowest 
mean outer diversity is observed for $\alpha = \beta = 1.0$, yet the 
results for $\alpha = \beta = 0.46$ remain very close, confirming 
that the slight, negligible variations do not 
alter the overall trend.

We also checked the variety of domains that can be obtained. \Cref{fig:eddiversityvotes} presents the variety of 2D Euclidean domains in terms of outer diversity and the number of votes, obtained by generating candidates uniformly at random.

\begin{figure}[H]
    \centering
    \includegraphics[width=0.9\textwidth]{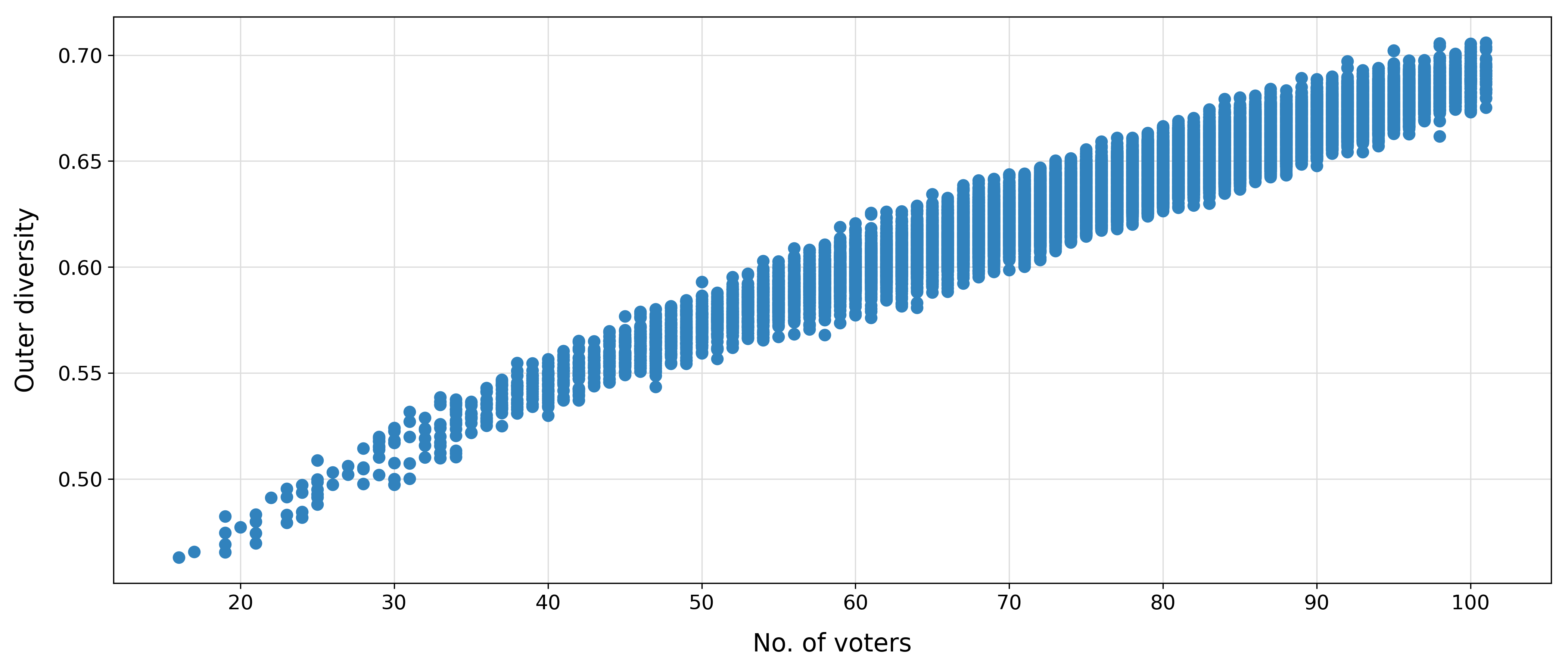}
    \caption{\label{fig:eddiversityvotes}$100,000$ 2D Euclidean domains with candidates generated uniformly at random within a square subarea. Each domain is represented as a single dot, where the x-axis corresponds to the number of votes within the domain and the y-axis represents its outer diversity. This subset illustrates a representative sample, not all possible domains.}
\end{figure}

\newpage
\section{Microscope Visualizations of the Domains}
\label{app:microscope}

\Cref{fig:microscope-all6max,fig:microscope-all6min} show microscope visualizations of Condorcet domains on 6 candidates. For each domain size, we display the domains with maximum (Fig.~\ref{fig:microscope-all6max}) and minimum (Fig.~\ref{fig:microscope-all6min}) outer-diversity. Each domain is represented as a point cloud where votes are embedded via MDS under swap distance, with point color indicating the top-ranked candidate. Domains with high outer-diversity appear geometrically spread out with colors well-separated across the embedding, while low-diversity domains of the same size are more clustered, reflecting that their votes occupy a narrower portion of the space. (Note that for $n=45$, there is only one domain, which has both the minimal and maximal outer-diversity).

\Cref{fig:microscope-nevermid-16v,fig:microscope-nevermid-32v} shows microscope domains satisfying never-middle conditions. Note that the domains presented in Fig.~\ref{fig:microscope-nevermid-32v} are at the same time $GS$ domains, and consist of 32 votes, each; while domains presented in Fig.~\ref{fig:microscope-nevermid-32v} are not $GS$ and consist only of 16 votes, each.

Finally, figure~\ref{fig:microscope-spt} presents analogous microscope visualizations for SP/T domains with up to 9 candidates and up to 1000 votes. 

\begin{figure}[h]
    \centering
    \includegraphics[width=\textwidth]{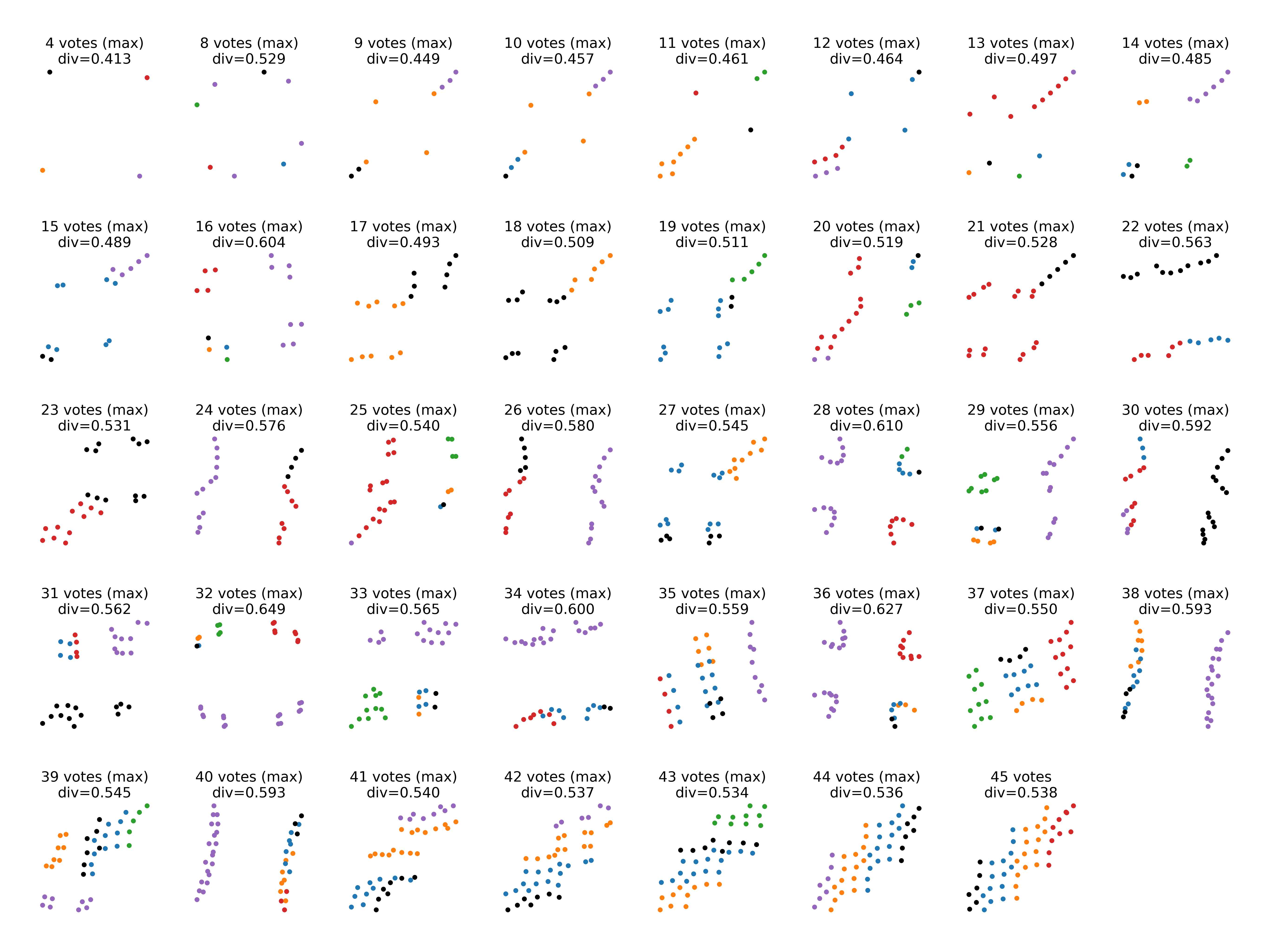}
    \caption{\label{fig:microscope-all6max} Microscope visualizations of Condorcet domains on 6 candidates. For each domain size (number of distinct votes), we show the domain with the largest outer-diversity. Points represent distinct votes embedded via MDS under swap distance; color encodes the top-ranked candidate.}
\end{figure}

\begin{figure}[ht]
    \centering
    \includegraphics[width=\textwidth]{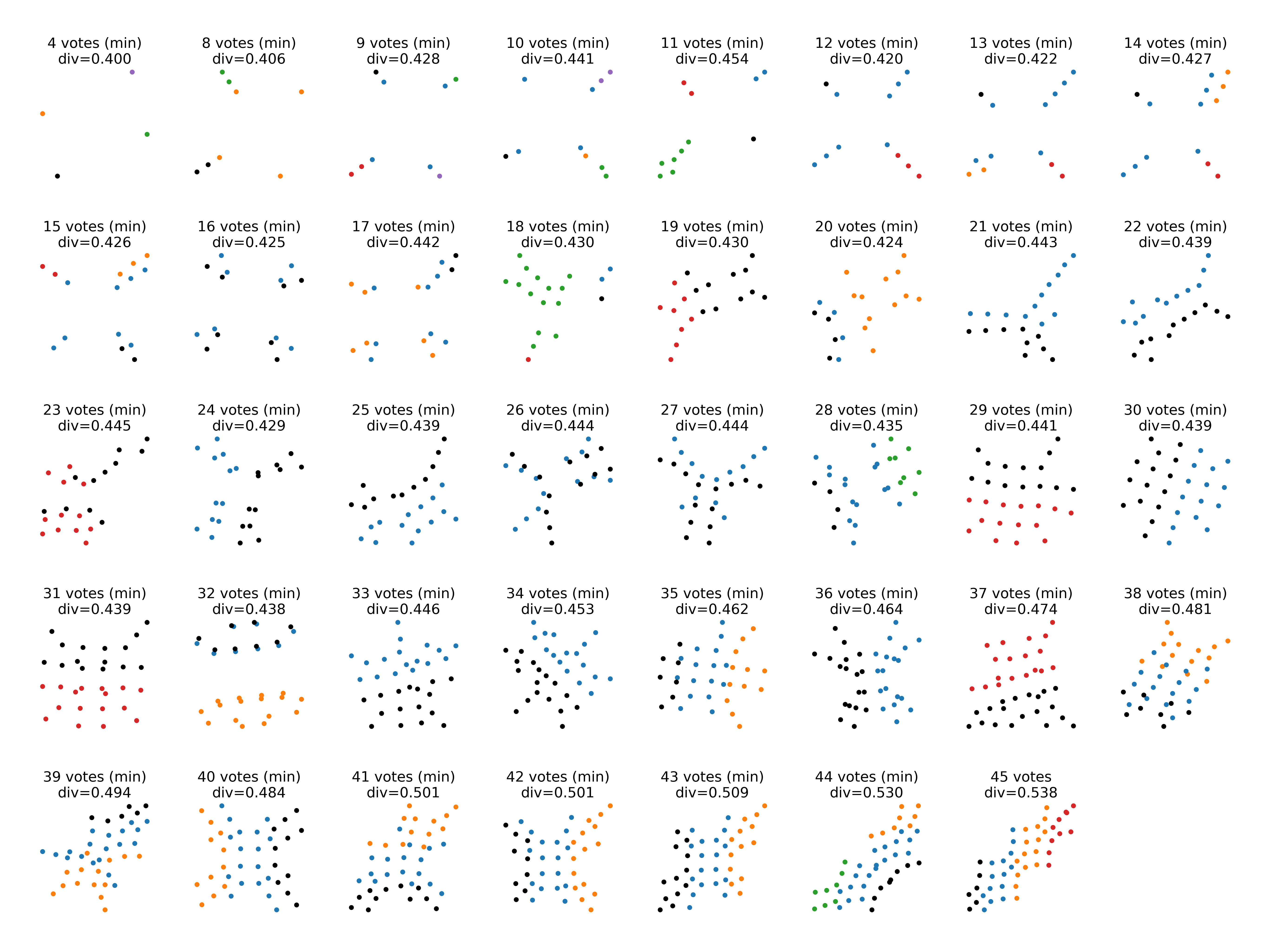}
    \caption{\label{fig:microscope-all6min} Microscope visualizations of Condorcet domains on 6 candidates. For each domain size (number of distinct votes), we show the domain with the smallest outer-diversity. Points represent distinct votes embedded via MDS under swap distance; color encodes the top-ranked candidate.}
\end{figure}

\begin{figure}[t]
    \centering
    \includegraphics[width=0.92\textwidth]{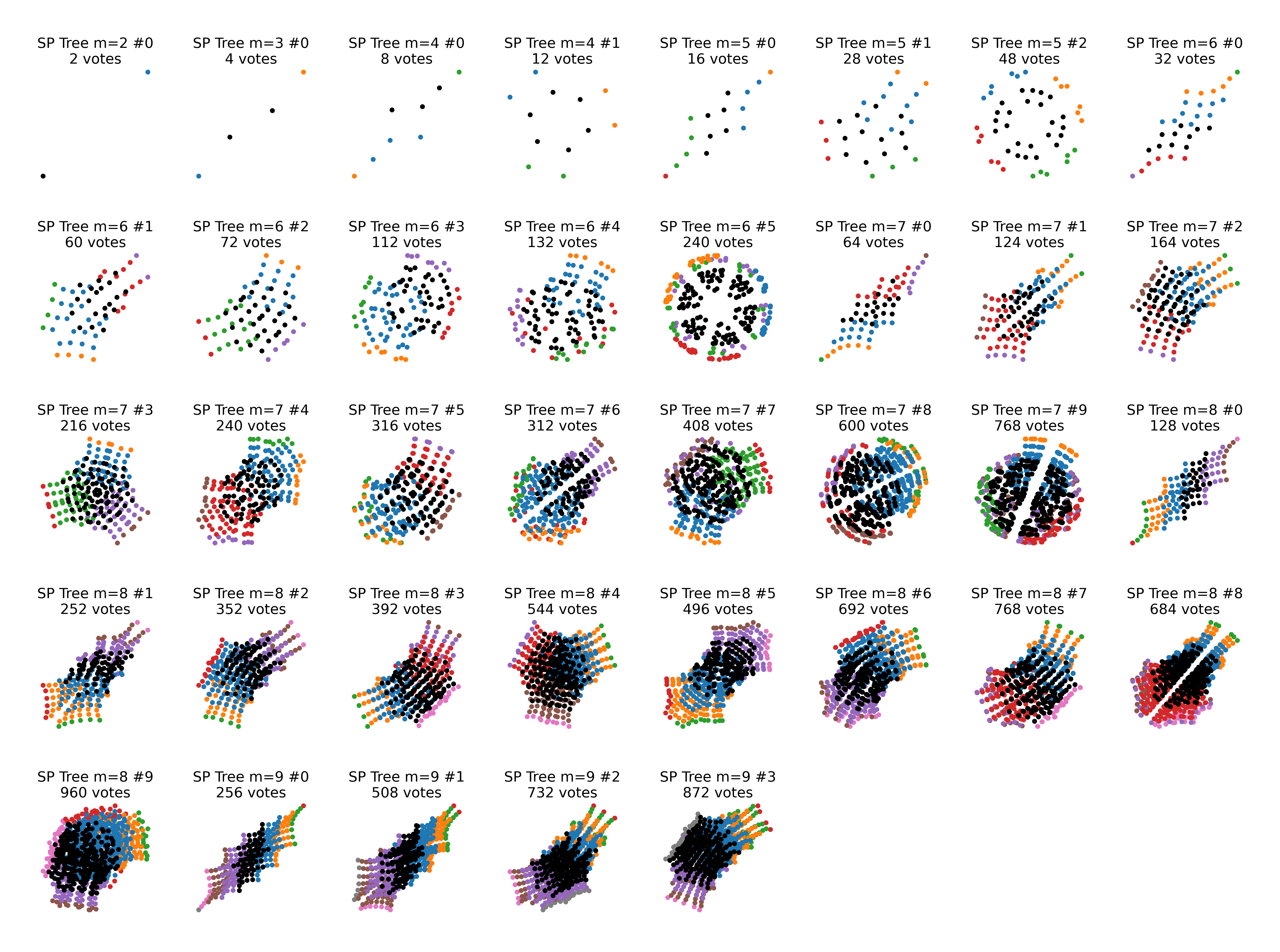}
    \caption{\label{fig:microscope-spt}Microscope visualizations of various SP/T domains with up to 9 candidates, and up to 1000 votes. For each domain size (number of distinct votes), we show the domain with maximum (top) and minimum (bottom) outer-diversity. Points represent distinct votes embedded via MDS under swap distance; color encodes the top-ranked candidate.}
\end{figure}

\begin{figure}[H]
    \centering
    \includegraphics[width=\textwidth]{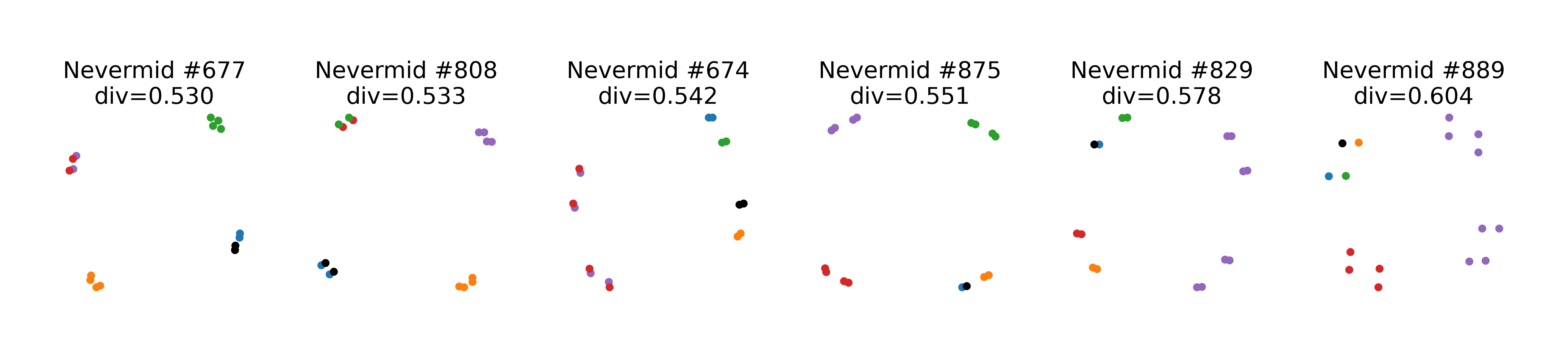}
    \caption{\label{fig:microscope-nevermid-16v}Microscope visualizations of domains satisfying never-middle condition, that are not $GS$. Points represent distinct votes embedded via MDS under swap distance; color encodes the top-ranked candidate.}
\end{figure}

\begin{figure}[H]
    \centering
    \includegraphics[width=\textwidth]{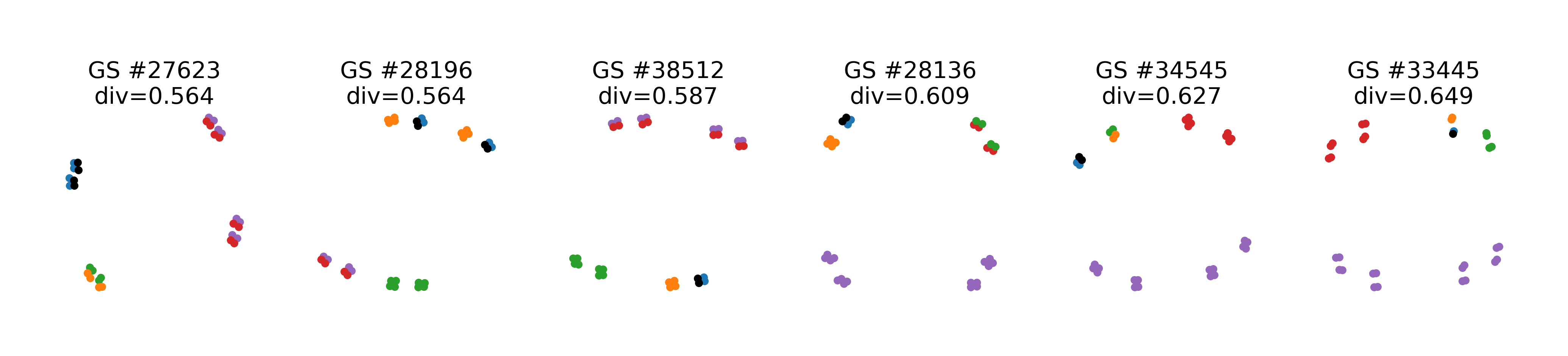}
    \caption{\label{fig:microscope-nevermid-32v}Microscope visualizations of $\GS$ domains. Points represent distinct votes embedded via MDS under swap distance; color encodes the top-ranked candidate.}
\end{figure}

\end{document}